\documentclass[journal,twoside,web]{ieeecolor}   

\let\labelindent\relax 
\usepackage{amsmath,amsfonts,amssymb,amsthm}
\usepackage{mathrsfs}
\usepackage{mathtools}
\usepackage{array}
\usepackage{float}
\usepackage{extarrows}
\usepackage{cite}
\usepackage{url}
\usepackage{color}
\usepackage[table]{xcolor}
\usepackage[colorlinks,linkcolor=orange,citecolor=blue]{hyperref}
\usepackage{algorithmic}
\usepackage{graphicx}
\usepackage{generic}   
\usepackage{textcomp}
\usepackage[font=footnotesize]{caption}
\usepackage[font=footnotesize]{subcaption}
\usepackage{stfloats}
\usepackage{epstopdf}
\usepackage{arydshln}
\usepackage{ifthen}
\usepackage{pict2e}
\usepackage[T1]{fontenc}
\usepackage{enumitem}
\usepackage{tikz}
\usepackage{diagbox}
\usepackage{multirow}
\usepackage{subcaption}
\usepackage{stfloats}

\DeclareMathOperator{\diag}{diag}

\DeclareMathOperator{\range}{range}
\DeclareMathOperator{\spec}{spec}

\DeclareMathOperator*{\aprod}{\overleftarrow{\prod}}

\definecolor{myblue}{RGB}{231, 245, 254}

\newtheorem{theorem}{Theorem}

\newtheorem{problem}{Problem}

\newcommand{\GL}{\text{GL}^+(n, \mathbb{R})}
\newcommand{\Phif}{\Phi_{\mathrm{f}}}
\newcommand{\Sigmaf}{\Sigma_{\mathrm{f}}}
\newcommand{\dd}{\mathrm{d}} 
\renewcommand{\t}{^{\mbox{\tiny\sf T}}} 
\newcommand{\tinv}{^{-\mbox{\tiny\sf T}}}

\newcommand{\R}{\mathbb{R}}

\def\send#1#2{\stackrel{#1}{\hbox to #2{\rightarrowfill}}}
\def\-{\!\!\!\!\!-}

\newcommand{\rank}{{\rm rank\;}}

\newtheorem{lemma}{Lemma}
\newtheorem{remark}{Remark}
\newtheorem{proposition}{Proposition}
\newtheorem{corollary}{Corollary}

\newtheorem{assumption}{Assumption}

\def\R{{\rm I\!R}} 

\newcounter{seqn}[equation]
\def\theseqn{\arabic{equation}\alph{seqn}}

\def\endseqn{\eqno \@seqnnum
$$\ignorespaces}
\def\@seqnnum{(\theseqn)}

\newskip\mcentering \mcentering=0pt plus 1000pt minus 1000pt

\def\meqalignno#1{
\halign to\displaywidth{
    \hbox to 0pt{\kern\displaywidth\llap{$##$}\hss}\tabskip=\mcentering
    &\hfil$\displaystyle{##}$\tabskip=\mcentering
   &&$\displaystyle{{}##}$\hfil\tabskip=\mcentering
    \crcr
    #1\crcr}}

\def\dspace{\multiply\normalbaselineskip 150
		  \divide\normalbaselineskip 100 \normalbaselines
		  \csname @@normalbaselineskip\endcsname\normalbaselineskip}
\def\sspace{\multiply\normalbaselineskip 200
		 \divide\normalbaselineskip 300 \normalbaselines
		 \csname @@normalbaselineskip\endcsname\normalbaselineskip}
\def\sdspace{\multiply\normalbaselineskip 160
		 \divide\normalbaselineskip 150 \normalbaselines
		 \csname @@normalbaselineskip\endcsname\normalbaselineskip}

\def\@{\tilde}

\def\3dot#1{\buildrel\textstyle...\over#1}

\renewcommand{\R}{\mathbb{R}}

\def\BibTeX{{\rm B\kern-.05em{\sc i\kern-.025em b}\kern-.08em
    T\kern-.1667em\lower.7ex\hbox{E}\kern-.125emX}}

\begin{document}

\title{Reachability Analysis of the State Transition and State Covariance Matrices for an LTV System}

\author{Fengjiao Liu, Yixiao Zhang, and Panagiotis Tsiotras
\thanks{This work has been supported by NASA University Leadership Initiative award 80NSSC20M0163.}
\thanks{F. Liu and Y. Zhang are with the Department of Electrical and Computer Engineering, FAMU-FSU College of Engineering, Tallahassee, FL 32310, USA (e-mail: \texttt{fliu@eng.famu.fsu.edu}, \texttt{yz24l@fsu.edu}).}
\thanks{P. Tsiotras is with the School of Aerospace Engineering, Georgia Institute of Technology, Atlanta, GA 30332 USA (e-mail: \texttt{tsiotras@gatech.edu}).}}

\maketitle

\begin{abstract}
In this paper, we study the reachability of two closely related matrices appearing in the analysis of linear time-varying (LTV) systems over a finite time interval, namely, its closed-loop state transition matrix via a state feedback control and its state covariance matrix starting from some given initial state covariance matrix. 
Under a mild assumption, we first characterize the set of closed-loop terminal state transition matrices reachable from the identity matrix using controls of the state feedback form. 
Then, we provide the set of terminal state covariance matrices reachable from any given positive definite initial state covariance matrix when the LTV system is not necessarily controllable. 
Both results are based on the solutions of corresponding matrix Riccati differential equations (RDE). 
\end{abstract}

\begin{IEEEkeywords}
State transition matrix, state covariance, LTV system, reachability, Riccati differential equation.
\end{IEEEkeywords}

\section{Introduction}

Consider the continuous-time linear time-varying (LTV) system
\begin{equation} \label{sys:LTV-OL}
\dot{x}(t) = A(t) x(t) + B(t) u(t), \qquad 0 \leq t \leq T,
\end{equation}
where $x(t) \in \R^{n}$ is the state, 
$u(t) \in \R^{m}$ is the control input, 
$A(t) \in \R^{n \times n}$ and $B(t) \in \R^{n \times m}$ are known coefficient matrices, 
and $T > 0$ is the terminal time. 
An LTV system is often denoted by the matrix pair $\big(A(t), B(t)\big)$.

With a state feedback control of the form $u(t) = K(t) x(t)$, where $K(t) \in \R^{m \times n}$, the closed-loop system becomes 
\begin{equation} \label{sys:LTV-CL}
\dot{x}(t) = \big[A(t) + B(t) K(t)\big] x(t), \qquad 0 \leq t \leq T.
\end{equation}
Clearly, the solution of the above system \eqref{sys:LTV-CL} is $x(t) = \Phi(t, 0) x(0)$, 
where $\Phi(t_{2}, t_{1}) \in \R^{n \times n}$ denotes the state transition matrix from time $t_{1}$ to time $t_{2}$ of the closed-loop system \eqref{sys:LTV-CL} or, equivalently, of the matrix $A(t) + B(t) K(t)$.
It is well known that the state transition matrix $\Phi(t, 0)$ satisfies the differential equation 
\begin{equation} \label{sys:Phi}
\dot{\Phi}(t, 0) = \big[A(t) + B(t) K(t)\big] \Phi(t, 0), \qquad \Phi(0, 0) = I_{n},
\end{equation}
where $I_{n}$ is the $n \times n$ identity matrix. 
Notice that the terminal state transition matrix $\Phi(T, 0)$ depends on the feedback gain $K(t)$ over the interval $[0, T]$. 
If the matrix $K(t)$ is given on $[0, T]$, the above system~\eqref{sys:Phi} is linear. 
However, if $K(t)$ is treated as the control input in \eqref{sys:Phi}, this system is bilinear.

Since the state transition matrix $\Phi(t, 0)$ is invertible for all $t \in [0, T]$ and $\det\big(\Phi(0, 0)\big) = 1$, it follows from the continuity of the determinant that $\det\big(\Phi(t, 0)\big) > 0$ for all $t \in [0, T]$. 
In particular, $\det\big(\Phi(T, 0)\big) > 0$. 
In other words, $\Phi(T, 0) \in \GL$, the set of all $n \times n$ real-valued matrices with positive determinants. 
Naturally related questions one may want to answer are: 
Can any matrix $\Phif \in \GL$ be reached at the final time $T$? 
If so, how one can find the control matrix $K(t)$ that steers the state transition matrix in \eqref{sys:Phi} from $\Phi(0, 0) = I_{n}$ to $\Phi(T, 0) = \Phif$ at time $T$?

Given the rich history of studying the  state transition matrix of linear systems, it is a rather surprising fact that the controllability of the bilinear system~\eqref{sys:Phi} that characterizes the evolution of the state
transition matrix has only been studied quite recently~\cite{brockett2007optimal, brockett2012notes, abdelgalil2025collective}. 
In particular, when the linear system~\eqref{sys:LTV-OL} is time-invariant, i.e., the matrix pair $(A, B)$ is constant, it is claimed in \cite{brockett2007optimal, brockett2012notes} that, for each final matrix $\Phif \in \GL$ and each final time $T > 0$, there exists a control $K(t)$ on the interval $[0, T]$ that steers the system~\eqref{sys:Phi} from $\Phi(0, 0) = I_{n}$ to $\Phi(T, 0) = \Phif$, provided that the pair $(A, B)$ is controllable. 
A technical flaw in the argument of \cite{brockett2007optimal, brockett2012notes} went undetected until it was pointed out and corrected in \cite{abdelgalil2025collective}. 
Moreover, the argument in \cite{brockett2007optimal, brockett2012notes} was not sufficient to allow for an arbitrary final time, as claimed, 
and \cite{abdelgalil2025collective} developed a different and constructive approach based on optimal mass transport in conjunction with a little known result by Ballantine \cite{ballantine1968products} on positive symmetric factorizations \cite{abdelgalil2026factorization}. 
%
Hence, the above result still holds using some carefully constructed, piecewise continuous control input matrix $K(t)$ \cite{abdelgalil2025collective}. 
Moreover, a topological obstruction of controlling \eqref{sys:Phi} for a constant pair $(A, B)$ is discussed in \cite{abdelgalil2025collective}. 
The obstruction stems from the fact that there does not exist a universal control $K(t)$ for all final matrices $\Phif \in \GL$, 
where, for each $t \in [0, T]$, $K(t)$ is continuous in $\Phif$, except for the one-dimensional case (e.g., $n = 1$). 
This obstruction is due to the fact that the set $\GL$ is not contractible for $n \geq 2$; for a proof of this statement please see~\cite{abdelgalil2025collective}.

Although the study in \cite{abdelgalil2025collective} on the controllability of the state transition matrix governed by the bilinear system \eqref{sys:Phi} is mainly motivated by the collective steering of $n$ agents with identical linear dynamics \eqref{sys:LTV-OL}, 
it is also known that the controllability (similarly, reachability) of the system \eqref{sys:Phi} is inherently related to that of the state covariance matrix of the original linear system \eqref{sys:LTV-OL} \cite{brockett2007optimal, brockett2012notes, abdelgalil2025collective, liu2024reachability}. 
In particular, it is shown in \cite{liu2024reachability} that the state covariance matrix $\Sigma(t) \succeq 0$ of \eqref{sys:LTV-OL} can be written in the form of 
\begin{equation} \label{sys:Sigma} 
\dot{\Sigma}(t) = \big[A(t) + B(t) K(t)\big] \Sigma(t) + \Sigma(t) \big[A(t) + B(t) K(t)\big]\t,
\end{equation}
where $K(t) \in \R^{m \times n}$ is the equivalent state feedback gain. 
It follows immediately from direct differentiation that 
\begin{equation} \label{sol:Sig-Phi}
\Sigma(t) = \Phi(t, 0) \Sigma(0) \Phi(t, 0)\t,
\end{equation}
where $\Sigma(0)$ is the initial state covariance at time $0$ and $\Phi(t, 0)$ is the state transition matrix of $A(t) + B(t) K(t)$ from time $0$ to time $t$ defined by \eqref{sys:Phi}. 
Hence, $\Sigma(T) = \Phi(T, 0) \Sigma(0) \Phi(T, 0)\t$. 
It is clear from \eqref{sol:Sig-Phi} that the controllability and reachability of the state covariance matrix $\Sigma(T)$ over the interval $[0, T]$ 
depend on those of the state transition matrix $\Phi(T, 0)$ and the initial condition $\Sigma(0)$. 
Based on this observation, an upper bound on the set of reachable terminal state covariances $\Sigma(T)$ over $[0, T]$ is provided in \cite{liu2024reachability}. 
In \cite{chen2016II, chen2018III, abdelgalil2025collective}, sufficient conditions are presented for the state covariance matrix $\Sigma(t)$ to be controllable on $[0, T]$, i.e., 
for any pair of positive definite matrices $\Sigma_{0}, \Sigmaf \succ 0$, there exists a control $K(t)$ that steers \eqref{sol:Sig-Phi} from the initial covariance $\Sigma(0) = \Sigma_{0}$ to the final covariance $\Sigma(T) = \Sigmaf$. 
It is also shown in \cite{liu2024reachability} that the state covariance matrix $\Sigma(t)$ is controllable on $[0, T]$ 
if and only if 
the controllability (or reachability) Gramian matrix of the LTV system $\big(A(t), B(t)\big)$ is positive definite. 
In the current paper, we study the reachability of the two closely related matrices $\Phi(T, 0)$ and $\Sigma(T)$ for the LTV system \eqref{sys:LTV-OL}. 
For more motivation and applications of the state transition matrix and the state covariance matrix, we refer the reader to \cite{abdelgalil2025collective, liu2024reachability} and the references therein.

Our discussion above focuses on the case of a linear state feedback control of the form $u(t) = K(t) x(t)$, for which the terminal state $x(T) = \Phi(T, 0) x(0)$ is a linear function of the initial state $x(0)$. 
More generally, with other forms of feedback control laws, the map from the initial state to the terminal state may be nonlinear. 
By exploiting optimal transport theory \cite{villani2021topics}, it has been shown that for a controllable linear system, there exists an optimal map that transports a well-behaved initial state distribution to a well-behaved terminal state distribution over the interval $[0, T]$, while minimizing a quadratic cost functional \cite{hindawi2011mass, chen2017optimal}. 
In addition, the optimal control law that realizes the optimal transport map is, in general, of a nonlinear state feedback form \cite{chen2017optimal}. 
When the map between the boundary states of a controllable linear time-invariant (LTI) system is an orientation-preserving diffeomorphism, 
sufficient conditions for the existence of a nonlinear state feedback control law that not only achieves this diffeomorphism but is also continuous in the problem data are reported in \cite{abdelgalil2025collective, raginsky2024some}.

From this point on, we return to the case of a linear state feedback control $u(t) = K(t) x(t)$. 
From the controllability results of the state transition matrix $\Phi$ and the state covariance matrix $\Sigma$ for an LTI system $(A, B)$ introduced in \cite{brockett2007optimal, brockett2012notes, abdelgalil2025collective}, some natural questions arise: 
a) What are the corresponding results for the case of LTV systems, that is, when the matrices $\big(A(t), B(t)\big)$ are not constant?
b) What can be deduced about the achievable state transition matrices in the case when the LTV system is not controllable?
Those are the questions we aim to address in this paper.

The \textit{contributions} of the paper are as follows. 
First, we extend the controllability result in \cite{abdelgalil2025collective} of the state transition matrix from controllable LTI systems to LTV systems. 
Second, when the LTV system \eqref{sys:LTV-OL} is not controllable, we characterize the reachability of the state transition matrix over a finite time interval. 
Third, also when the pair $\big(A(t), B(t)\big)$ is not controllable, we provide the exact set of terminal state covariance matrices of \eqref{sys:LTV-OL} that are reachable from a given initial state covariance over a finite time interval, which is more precise than the upper bound given in \cite{liu2024reachability}. 
Lastly, we present some novel results on the existence of a solution to a matrix Riccati differential equation (RDE), 
which lead to sufficient algebraic conditions on the set of terminal state transition matrices such that 
the bilinear system \eqref{sys:Phi} admits a control $K(t)$ that is continuous in the terminal state transition matrix within certain subsets of $\GL$, which may be of independent interest.

The rest of this paper is organized as follows. 
The problems to be addressed are formulated in Section \ref{sec:prob}. 
The main results on the controllability and reachability of the state transition matrix $\Phi$ and the state covariance matrix $\Sigma$ for LTV systems are given in Section \ref{sec:results}. 
The connection between a state feedback gain $K$ that is continuous in the problem data and a matrix RDE is elaborated upon in Section~\ref{sec:RDE}. 
The controllability and reachability results of $\Phi$ are shown in Section~\ref{sec:stt-trans-contr} and 
Section~\ref{sec:stt-trans-reach}, respectively. 
The reachability result of $\Sigma$ is proved in Section~\ref{sec:stt-covar-reach}. 
Two numerical examples are presented in Section~\ref{sec:examples}. 
Some concluding remarks are summarized in Section~\ref{sec:conclusion}. 
For conciseness of exposition, most of the proofs for the auxiliary results are given in the Appendix.


\section{Problem Formulation} \label{sec:prob}

Let the matrix pair $\big(A(t), B(t)\big)$ denote the LTV system \eqref{sys:LTV-OL}. 
Recall that $\Phi(t_{2}, t_{1})$ denotes the state transition matrix of the closed-loop system \eqref{sys:LTV-CL} 
(equivalently, of $A(t) + B(t) K(t)$) from time $t_{1}$ to $t_{2}$. 
The dynamics of $\Phi(t, 0)$ is governed by the bilinear system \eqref{sys:Phi}, where $K(t)$ is the control input of the system. 
Next, we define the reachability and controllability of $\Phi(t, 0)$ over a finite time interval.

For a given pair $\big(A(t), B(t)\big)$, 
a terminal state transition matrix $\Phif \in \GL$ is said to be \emph{reachable} over a given time interval $[0, T]$ 
if there exists a control $K(t)$ that steers the bilinear system \eqref{sys:Phi} from the initial condition $\Phi(0, 0) = I_{n}$ to the final state $\Phi(T, 0) = \Phif$. 
In particular, if all $\Phif \in \GL$ are reachable over $[0, T]$, the state transition matrix $\Phi(t, 0)$ or the bilinear system \eqref{sys:Phi} is \emph{controllable} on $[0, T]$.

We first study the controllability and reachability of the state transition matrix $\Phi(t, 0)$ over $[0, T]$.

\begin{problem} \label{prb:cont-Phi}
Find the conditions on the matrix pair $\big(A(t), B(t)\big)$ 
so that the state transition matrix $\Phi(t, 0)$ subject to the bilinear dynamics \eqref{sys:Phi} is controllable on the time interval $[0, T]$. 
\end{problem}

\begin{problem} \label{prb:reach-Phi}
For a given matrix pair $\big(A(t), B(t)\big)$, 
find the set of terminal state transition matrices $\Phi(T, 0)$ that are reachable over the time interval $[0, T]$ subject to the bilinear dynamics \eqref{sys:Phi}. 
\end{problem}

As a byproduct of the two problems above, we also study the controllability and reachability of the state covariance matrix $\Sigma(t)$ of the LTV system \eqref{sys:LTV-OL} on $[0, T]$. 
As explained earlier, the two matrices $\Sigma(t)$ and $\Phi(t, 0)$ are related by equation \eqref{sol:Sig-Phi}.

We assume that the initial state $x(0)$ of \eqref{sys:LTV-OL} has a positive definite covariance matrix $\Sigma(0) = \Sigma_{0} \succ 0$. 
For a given pair $\big(A(t), B(t)\big)$, 
a terminal state covariance matrix $\Sigmaf \succ 0$ is said to be \emph{reachable} from $\Sigma_{0}$ over a given time interval $[0, T]$ 
if there exists a control $K(t)$ that steers the system \eqref{sol:Sig-Phi} from the initial covariance $\Sigma_{0}$ to the final covariance $\Sigma(T) = \Sigmaf$. 
In particular, if any final state covariance $\Sigmaf \succ 0$ is reachable from any initial state covariance $\Sigma_{0} \succ 0$ over the interval $[0, T]$, 
the state covariance matrix $\Sigma(t)$ of \eqref{sys:LTV-OL} or the system \eqref{sol:Sig-Phi} is \emph{controllable} on $[0, T]$.


\begin{problem} \label{prb:reach-Sigma}
For a given matrix pair $\big(A(t), B(t)\big)$, 
find the set of terminal state covariance matrices $\Sigma(T)$ that are reachable from a given initial state covariance $\Sigma(0) = \Sigma_{0} \succ 0$ over the time interval $[0, T]$ subject to the equation \eqref{sol:Sig-Phi}. 
\end{problem}

In this paper, Problem \ref{prb:cont-Phi} and Problem \ref{prb:reach-Phi} are solved under a mild assumption, and Problem \ref{prb:reach-Sigma} is solved completely.


\section{Main Results} \label{sec:results}


The notation frequently used in this paper is defined first. 
Let $\Phi_{A}(T, t)$ denote the state transition matrix of $A(\tau)$ from time $t$ to $T$. 
The \emph{reachability Gramian} of the matrix pair $\big(A(\tau), B(\tau)\big)$ from time $t$ to $T$ is defined as 
\begin{equation} \label{def:Gram-reach}
G(T, t) \triangleq \int_{t}^{T} \Phi_{A}(T, \tau) B(\tau) B(\tau)\t \Phi_{A}(T, \tau)\t \, \dd \tau. 
\end{equation}
Similarly, the \emph{controllability Gramian} of the pair $\big(A(\tau), B(\tau)\big)$ from time $t$ to $T$ is defined as 
\begin{align} 
H(T, t) 
&\triangleq \int_{t}^{T} \Phi_{A}(t, \tau) B(\tau) B(\tau)\t \Phi_{A}(t, \tau)\t \, \dd \tau
\label{def:Gram-contr} \\
&= \Phi_{A}(t, T) G(T, t) \Phi_{A}(t, T)\t. 
\label{eqn:H-G}
\end{align}
It is clear that $\rank G(T, t) = \rank H(T, t)$, since $\Phi_{A}(t, T)$ is always invertible.

Some results of the paper require the following assumption.

\begin{assumption} \label{asm:5-piece}
There exist times $t_{1}, t_{2}, t_{3}, t_{4} \in (0, T)$ such that 
$0 = t_{0} < t_{1} < t_{2} < t_{3} < t_{4} < t_{5} = T$ and such that 
$\rank H(t_{i+1}, t_{i}) = \rank H(T, 0)$ for $i = 0, 1, 2, 3, 4$. 
\end{assumption}

We point out that Assumption \ref{asm:5-piece} is automatically satisfied when $(A, B)$ is a constant matrix pair. 
It follows from Assumption \ref{asm:5-piece} that for $i = 0, 1, 2, 3, 4$, 
we also have 
$\rank G(t_{i+1}, t_{i}) = \rank G(T, 0)$. 
This rather unusual assumption at first glance, will enable us to use Ballantine's matrix factorization result~\cite[Theorem 5]{ballantine1968products}, restated as Lemma \ref{lem:ballantine} in Section~\ref{sec:stt-trans-contr} of this paper.

Next, we summarize the main results of this paper.


\begin{theorem} \label{thm:Phi-contr}
If Assumption \ref{asm:5-piece} holds, and the controllability Gramian $H(T, 0) \succ 0$, 
then the state transition matrix $\Phi(t, 0)$ is controllable on the time interval $[0, T]$. 
\end{theorem}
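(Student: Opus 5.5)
Since $H(T,0)\succ 0$ and Assumption \ref{asm:5-piece} holds, every subinterval $[t_i,t_{i+1}]$, $i=0,\dots,4$, has a full-rank (hence positive definite) Gramian $H(t_{i+1},t_i)\succ 0$, equivalently $G(t_{i+1},t_i)\succ 0$. The plan is to factor $\Phif$ into five pieces, one per subinterval. By the semigroup property,
\[
\Phi(T,0)=\Phi(t_5,t_4)\,\Phi(t_4,t_3)\,\Phi(t_3,t_2)\,\Phi(t_2,t_1)\,\Phi(t_1,t_0),
\]
and on each subinterval we can choose $K$ independently. So it suffices to show two things. First, on any interval $[t_i,t_{i+1}]$ with positive definite Gramian, every transition matrix of a suitable restricted form, here a positive definite symmetric matrix possibly twisted by $\Phi_A(t_{i+1},t_i)$, is reachable. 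Second, every $\Phif\in\GL$ factors into a product of such pieces. For the second point we invoke Ballantine's theorem (Lemma \ref{lem:ballantine}): every matrix in $\GL$ is a product of at most five positive definite symmetric matrices.

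\textbf{Single-interval reachability.} On $[t_i,t_{i+1}]$, write the desired closed-loop transition as $\Phi(t_{i+1},t_i)=\Phi_A(t_{i+1},t_i)\,M$. The aim is to show that every symmetric $M\succ 0$ is attainable, possibly after a congruence. For this we use the RDE connection of Section \ref{sec:RDE}. Choose the feedback $K(t)=-B(t)\t Q(t)$, or equivalently $K=B\t P^{-1}$ in covariance form, where $Q$ solves a Riccati equation
\[
\dot Q=-A\t Q-QA+QBB\t Q .
\]
The inverse $Q^{-1}$ then solves a Lyapunov-type linear equation, so $Q^{-1}(t)=\Phi_A(t,t_i)\big[Q^{-1}(t_i)-\text{(Gramian term)}\big]\Phi_A(t,t_i)\t$ explicitly. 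The closed-loop transition can then be computed in closed form. Alternatively, one can mimic the covariance-steering construction: $\Sigma(t)=\Phi\,\Sigma_0\,\Phi\t$ with the optimal-transport/Schrödinger-bridge gain steers any $\Sigma_0\succ 0$ to any $\Sigmaf\succ0$ when the Gramian is positive definite. The resulting transition matrix $\Phi(t_{i+1},t_i)$ is then $\Sigmaf^{1/2}U\Sigma_0^{-1/2}$ for some orthogonal $U$. With the explicit RDE form one checks that, for the right choice of the initial condition of $Q$, the transition equals $\Phi_A(t_{i+1},t_i)$ times a positive definite symmetric matrix that ranges over all of the positive definite cone. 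The needed choice is a one-parameter family in which $Q(t_i)$ is chosen so that $Q^{-1}$ stays nonsingular on the interval; positivity of $G(t_{i+1},t_i)$ guarantees the full range.

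\textbf{Assembling the factorization.} Set $F_i=\Phi_A(t_{i+1},t_i)$. We need $\Phif=\prod_{i=4}^{0} F_iM_i$ with each $M_i\succ 0$ symmetric. Moving the $F_i$ to the left by conjugation, this reads
\[
\Phif=\Phi_A(T,0)\prod_i \big(\Phi_A(t_i,0)^{-1}M_i\Phi_A(t_i,0)\big).
\]
Each conjugated factor is similar to a positive definite matrix, but it is not symmetric. To fix this I would instead use congruence-type pieces: $\Phi_A(0,t_i)\,M_i\,\Phi_A(0,t_i)\t$ composed with $\Phi_A(t_{i+1},t_i)\Phi_A(0,t_i)^{-\mathsf T}$-type corrections. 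Alternatively, I would characterize the reachable set on each interval directly as $\{\Phi_A(t_{i+1},t_i)\,S:\ S\succ 0\}$ after a time-dependent change of coordinates $z=\Phi_A(0,t)x$. That substitution removes $A$ entirely, giving the system $\dot z=\tilde B(t)u$ with $\tilde B=\Phi_A(0,t)B$. In $z$-coordinates the transition is $\tilde\Phi=\Phi_A(0,T)\Phi(T,0)$, and each interval contributes an arbitrary $S_i\succ0$. By Ballantine, $\Phi_A(0,T)\Phif=S_4S_3S_2S_1S_0$, which completes the proof.

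\textbf{Main obstacle.} The hard step is the single-interval claim: with $A\equiv0$ and Gramian $\int\tilde B\tilde B\t\succ0$, every $S\succ0$ must be an attainable transition matrix. For $S$ with $S\preceq$ something, or for $S\succeq I$, the Riccati solution exists easily. For general $S$, the RDE solution may blow up. I would first try to split $S$ into further positive factors, but that would exceed five factors. So the real work is proving global existence of the RDE solution for the specific terminal condition, or, failing that, using a covariance-steering gain whose transition is exactly symmetric positive definite.
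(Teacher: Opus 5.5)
\textbf{The gap.} Your argument depends on a single-interval claim: in the coordinates $z=\Phi_{A}(0,t)x$, every symmetric $S\succ 0$ is an attainable transition on any $[t_i,t_{i+1}]$ with positive definite Gramian. Plain Ballantine applied to $\Phi_{A}(0,T)\Phif$ would then finish the proof. You do not prove this claim, as your last paragraph concedes, and it is false in general.

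Take $n=2$, $m=1$, $A\equiv 0$ on $[0,1]$, with $B(t)=(\tfrac12-t)\,b_1$ on $[0,\tfrac12]$ and $B(t)=(t-\tfrac12)\,b_2$ on $[\tfrac12,1]$, where $b_1=(1,0)\t$ and $b_2=(1,\tfrac{1}{10})\t$. The Gramian is $\tfrac{1}{24}(b_1b_1\t+b_2b_2\t)\succ 0$. Yet every closed-loop transition over $[0,1]$ has the form $(I_2+b_2c_2\t)(I_2+b_1c_1\t)$ with $1+c_1\t b_1>0$. With $p=(1,-10)\t$ we have $p\t b_1=1$ and $p\t b_2=0$, so $p\t\Phi b_1=1+c_1\t b_1>0$. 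For $S=\begin{bmatrix}1&1/2\\1/2&1\end{bmatrix}\succ0$, however, $p\t S b_1=-4$, so $S$ is unreachable. Nothing in Assumption~\ref{asm:5-piece} rules out such a subinterval.

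Your RDE sketch does not deliver the claim either. With symmetric $\Pi(t_i)$, the $z$-transition is $I_n-\tilde H_i\tilde\Pi_i=\tilde H_i^{1/2}Y\tilde H_i^{-1/2}$ with $Y\succ0$ and $\tilde H_i=\Phi_{A}(0,t_i)H(t_{i+1},t_i)\Phi_{A}(0,t_i)\t$. This is symmetric only when $Y$ commutes with $\tilde H_i$. The nonsymmetric choice $\tilde\Pi_i=\tilde H_i^{-1}(I_n-S)$ is exactly where blow-up can occur. The covariance-steering gain gives $\Sigmaf^{1/2}U\Sigma_0^{-1/2}$ with an orthogonal $U$ you do not control, so it does not help.

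\textbf{The missing idea.} The pieces need not be symmetric: the fixed non-symmetric twists can be absorbed by congruence. Let $H_i\triangleq H(t_{i+1},t_i)\succ0$ and take any $Y_{i+1}\succ0$. The symmetric initial value $\Pi(t_i)=H_i^{-1/2}(I_n-Y_{i+1})H_i^{-1/2}$ gives $I_n-H_i^{1/2}\Pi(t_i)H_i^{1/2}=Y_{i+1}\succ0$. Lemma~\ref{lem:RDE-sym} then guarantees an RDE solution on the whole subinterval, and \eqref{eqn:stt-trans-Pi} gives $\Phi_{\Pi}(t_{i+1},t_i)=\Phi_{A}(t_{i+1},t_i)H_i^{1/2}Y_{i+1}H_i^{-1/2}$.

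Chaining the five subintervals yields $\Phi(T,0)=M_5Y_5M_4Y_4M_3Y_3M_2Y_2M_1Y_1M_0$, where $M_0=H_0^{-1/2}$, $M_i=H_i^{-1/2}\Phi_{A}(t_i,t_{i-1})H_{i-1}^{1/2}$ for $i=1,\dots,4$, and $M_5=\Phi_{A}(T,t_4)H_4^{1/2}$. These $M_j$ are fixed and lie in $\GL$.

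Now apply Ballantine not to $\Phif$ but to $M_1^{-1}M_2\t M_3^{-1}M_4\t M_5^{-1}\Phif M_0^{-1}=Q_5Q_4Q_3Q_2Q_1$. Redistribute by congruences $\bar Q_1=Q_1$, $\bar Q_2=M_1\tinv Q_2M_1^{-1}$, and so on, as in Corollary~\ref{cor:ballantine}. These remain positive definite and satisfy $M_5\bar Q_5M_4\bar Q_4\cdots M_1\bar Q_1M_0=\Phif$, so you take $Y_j=\bar Q_j$.

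Your remark about congruence-type corrections in the assembly step was pointing the right way. It has to be carried through, though, rather than replaced by the $z$-coordinate shortcut.
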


For a given matrix pair $\big(A(t), B(t)\big)$, let $\rank H(T, 0) = r$. 
Since always $H(T, 0) \succeq 0$, we can write 
\begin{equation} \label{eqn:H-decomp} 
H(T, 0) = 
U
\begin{bmatrix}
\bar{H} & 0 \\
0 & 0
\end{bmatrix}
U\t, 
\end{equation}
where $U \in \R^{n \times n}$ is orthogonal and $\bar{H} \in \R^{r \times r}$, $\bar{H} \succ 0$.

\begin{theorem} \label{thm:Phi-reach}
Under Assumption~\ref{asm:5-piece}, 
the set of terminal state transition matrices $\Phi(T, 0)$ reachable over the time interval $[0, T]$ is given by
\begin{multline*}
\mathcal{R}_{T} \triangleq 
\Biggl\{
\Phif : 
\Phif = \Phi_{A}(T, 0) U 
\begin{bmatrix}
\bar{\Phi}_{\mathrm{f}} & \tilde{\Phi}_{\mathrm{f}} \\
0 & I_{n-r}
\end{bmatrix}
U\t, 
\\
\bar{\Phi}_{\mathrm{f}} \in \R^{r \times r}, 
\det(\bar{\Phi}_{\mathrm{f}}) > 0, 
\tilde{\Phi}_{\mathrm{f}} \in \R^{r \times (n-r)} 
\Biggr\}, 
\end{multline*}
where $U \in \R^{n \times n}$ is the orthogonal matrix that satisfies \eqref{eqn:H-decomp}. 
\end{theorem}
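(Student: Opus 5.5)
We reduce to the Kalman controllable decomposition and then apply Theorem~\ref{thm:Phi-contr} to the controllable block. Write $\Phi(t,0) = \Phi_A(t,0)\Psi(t)$. Differentiating gives
\[
\dot\Psi(t) = \Phi_A(0,t)B(t)K(t)\Phi_A(t,0)\Psi(t), \qquad \Psi(0)=I_n .
\]
Set $\tilde B(t) = \Phi_A(0,t)B(t)$ and $\tilde K(t) = K(t)\Phi_A(t,0)$. Then $\Psi$ is the transition matrix of the pair $(0,\tilde B(t))$ under feedback $\tilde K$, and the correspondence $K \leftrightarrow \tilde K$ is bijective. The controllability Gramian of $(0,\tilde B)$ on $[t_i,t_{i+1}]$ is $\Phi_A(0,t_i)H(t_{i+1},t_i)\Phi_A(0,t_i)\t$, so it has the same rank. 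Its Gramian on $[0,T]$ is exactly $H(T,0)$. Hence Assumption~\ref{asm:5-piece} transfers to $(0,\tilde B)$, and it suffices to show that the reachable set of $\Psi(T)$ equals $U\begin{bmatrix}\bar\Phi_{\mathrm f} & \tilde\Phi_{\mathrm f}\\ 0 & I_{n-r}\end{bmatrix}U\t$.

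Next change coordinates by $U$. Let $\hat\Psi = U\t\Psi U$ and $\hat B(t) = U\t\tilde B(t) = \begin{bmatrix}B_1(t)\\ B_2(t)\end{bmatrix}$. Because $H(T,0) = \int_0^T \tilde B\tilde B\t\,\dd\tau$ has the block form \eqref{eqn:H-decomp}, we get $\int_0^T B_2B_2\t\,\dd\tau = 0$, so $B_2\equiv 0$ a.e. Partitioning $\hat K = \tilde K U = [K_1\ K_2]$, the dynamics become
\[
\dot{\hat\Psi} = \begin{bmatrix} B_1K_1 & B_1K_2\\ 0 & 0\end{bmatrix}\hat\Psi .
\]
\emph{Necessity.} The bottom row of $\hat\Psi$ stays $[0\ \ I_{n-r}]$. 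The top-left block solves $\dot{\bar\Psi} = B_1K_1\bar\Psi$ with $\bar\Psi(0) = I_r$, so it has positive determinant. This proves $\mathcal R_T$ contains every reachable matrix.

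\emph{Sufficiency.} The top-right block satisfies $\dot{\tilde\Psi} = B_1K_1\tilde\Psi + B_1K_2$ with $\tilde\Psi(0)=0$. Given a target $(\bar\Phi_{\mathrm f},\tilde\Phi_{\mathrm f})$, first choose $K_1$ so that $\bar\Psi(T) = \bar\Phi_{\mathrm f}$. This is where Theorem~\ref{thm:Phi-contr} enters, applied to the $r$-dimensional pair $(0,B_1(t))$. Its Gramian on $[0,T]$ is $\bar H\succ 0$. On each subinterval $[t_i,t_{i+1}]$ its Gramian is the top-left block of the (transported) $H(t_{i+1},t_i)$, and the remaining blocks vanish because $B_2\equiv 0$. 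That Gramian therefore has rank $r$, so Assumption~\ref{asm:5-piece} holds for this pair.

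With $K_1$ fixed, variation of constants gives
\[
\tilde\Psi(T) = \int_0^T \bar\Psi(T)\bar\Psi(\tau)^{-1}B_1(\tau)K_2(\tau)\,\dd\tau .
\]
This is a linear system $\dot z = B_1K_1 z + B_1 v$ driven by $v = K_2$ column by column. Its reachability Gramian is $\int_0^T \bar\Psi(T)\bar\Psi^{-1}B_1B_1\t\bar\Psi\tinv\bar\Psi(T)\t\,\dd\tau$. This matrix is positive definite: state feedback preserves controllability, since $x\t\bar\Psi(T)\bar\Psi(\tau)^{-1}B_1(\tau)\equiv 0$ would force $y\t B_1 \equiv 0$ for some $y\neq 0$, contradicting $\bar H\succ 0$. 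So the standard minimum-energy control $K_2$ hits any $\tilde\Phi_{\mathrm f}$. Undoing the transformations yields $\Phi(T,0) = \Phi_A(T,0)\Psi(T)$ of the stated form.

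The main obstacle is the verification that the reduced pair $(0,B_1)$ inherits Assumption~\ref{asm:5-piece} and full-rank Gramians. This needs care because the coordinate change $\Phi_A(0,t_i)$ differs on each subinterval, and one must check that the range of $H(t_{i+1},t_i)$ transported to time $0$ lies in $\range U\begin{bmatrix}I_r\\0\end{bmatrix}$. The plan is to show this via the nesting of ranges of integrals of PSD integrands: each subinterval Gramian of $(0,\tilde B)$ is dominated by the full Gramian $H(T,0)$, so its range is contained in that of $H(T,0)$. Equal rank then forces equal range, which gives the block-diagonal structure. The remaining steps are routine.
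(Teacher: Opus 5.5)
Your proof is correct, and it takes a genuinely different route from the paper. The paper stays in the original coordinates and re-runs the five-piece RDE construction in block form. It uses range nesting, \eqref{eqn:rangeH-5-piece}, together with Lemma~\ref{lem:Phi-1-2} to show that each factor $I_n-\hat H_i^*\hat\Pi_i^*$ is block upper triangular. It then applies Corollary~\ref{cor:ballantine} to the $r\times r$ diagonal blocks. The off-diagonal block $\tilde\Phi_{\mathrm f}$ is produced algebraically, by setting $\tilde\Pi_1=\dots=\tilde\Pi_4=0$ and solving $-\bar H_5\tilde\Pi_5=\tilde\Phi_{\mathrm f}$ on the last subinterval. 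For necessity, the paper imports $\Phif=\Phi_A(T,0)+G(T,0)L$ from \cite[Lemma 3]{liu2024reachability}.

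You instead remove the drift and observe that $B_2\equiv 0$, which makes the bilinear system block triangular. This gives you three things: a self-contained necessity argument that replaces the cited lemma; the ability to use Theorem~\ref{thm:Phi-contr} as a black box on the $r$-dimensional pair $(0,B_1)$; and a reduction of the off-diagonal block to an ordinary linear reachability problem in $K_2$. Your route is more modular and avoids the $\hat\Pi_i^*$ bookkeeping. The paper's route keeps the whole control in the form $K=-B\t\Pi$ with symmetric $\Pi$ on each subinterval, which is what Remark~\ref{rmk:stt-trans-reach} exploits. It also yields $\tilde\Phi_{\mathrm f}$ in closed form without first computing the closed-loop $\bar\Psi$ and inverting a Gramian weighted by it over $[0,T]$.

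Two small points. First, the ``main obstacle'' in your last paragraph is already settled by $B_2\equiv 0$. In $U$-coordinates, each transported subinterval Gramian equals $\diag\bigl(\int_{t_i}^{t_{i+1}}B_1B_1\t\,\dd\tau,\,0\bigr)$, so it has the required block form. The range-nesting argument you sketch is exactly the paper's \eqref{eqn:rangeH-5-piece} and also works, but you do not need it.

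Second, the feedback-invariance step deserves one explicit line. Define $w(\tau)\t\triangleq x\t\bar\Psi(T)\bar\Psi(\tau)^{-1}$. Then $\dot w\t=-w\t B_1K_1$, so $w\t B_1=0$ a.e.\ forces $w\equiv w(T)=x$. Hence $x\t\bar H x=0$, and therefore $x=0$.
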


Note that the set $\Phi_{A}(0, T) \mathcal{R}_{T}$ defined in Theorem \ref{thm:Phi-reach} forms a multiplicative group, 
with $I_{n}$ the identity element.

\begin{remark} \label{rmk:Phi-reach-geo}
Theorem \ref{thm:Phi-reach} can be interpreted from a geometric control perspective \cite{wonham1985linear} as follows: 
Let $\mathcal{H} \triangleq \range H(T, 0)$. 
Under Assumption \ref{asm:5-piece}, 
a terminal state transition matrix $\Phif$ is reachable over $[0, T]$ if and only if 
$\mathcal{H}$ is $\Phi_{A}(0, T) \Phif$ invariant, 
that is, $\Phi_{A}(0, T) \Phif \, \mathcal{H} \subseteq \mathcal{H}$ 
(in fact, $\Phi_{A}(0, T) \Phif \, \mathcal{H} = \mathcal{H}$), 
the map $\Phi_{A}(0, T) \Phif \, \vert_\mathcal{H}$ is orientation-preserving, and 
the map induced in $\R^{n} / \mathcal{H}$ by $\Phi_{A}(0, T) \Phif$ is the identity map. 
\end{remark}


\begin{theorem} \label{thm:Sigma-reach}
The set of terminal state covariance matrices $\Sigma(T)$ of \eqref{sys:LTV-OL} reachable from a given initial state covariance $\Sigma(0) = \Sigma_{0} \succ 0$ over the time interval $[0, T]$ is 
\begin{equation*}
\mathcal{S}_{T} \triangleq 
\big\{
\Sigmaf \succ 0 : 
P_{G} \Sigmaf P_{G} = 
P_{G} \Phi_{A}(T, 0) \Sigma_{0} \Phi_{A}(T, 0)\t P_{G} 
\big\}, 
\end{equation*}
where $P_{G}$ is the orthogonal projection onto $\ker G(T, 0)$. 
\end{theorem}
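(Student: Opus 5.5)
The plan is to prove necessity by a direct computation with the variation-of-constants formula. For sufficiency, I will pass to coordinates in which the uncontrollable directions are frozen, and reduce the covariance target to a \emph{single} transition matrix of a special form. That matrix should be reachable without Assumption~\ref{asm:5-piece}, which Theorem~\ref{thm:Sigma-reach} does not assume.

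\emph{Necessity.} For any gain $K(t)$, the closed-loop transition matrix satisfies
\begin{equation*}
\Phi(T,0)=\Phi_A(T,0)+\int_0^T \Phi_A(T,\tau)B(\tau)K(\tau)\Phi(\tau,0)\,\dd\tau .
\end{equation*}
Since $\ker G(T,0)=\range G(T,0)^{\perp}$, we have $P_G\Phi_A(T,\tau)B(\tau)=0$ for a.e.\ $\tau$: indeed $v\t G(T,0)v=\int_0^T\|B\t\Phi_A(T,\tau)\t v\|^2\,\dd\tau$. Hence $P_G\Phi(T,0)=P_G\Phi_A(T,0)$. Combining this with \eqref{sol:Sig-Phi} gives $P_G\Sigma(T)P_G=P_G\Phi_A(T,0)\Sigma_0\Phi_A(T,0)\t P_G$.

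\emph{Sufficiency: reduction.} Set $\xi(t)=V\t\Phi_A(T,t)x(t)$, where $V$ is orthogonal and block-diagonalizes $G(T,0)=V\operatorname{diag}(\bar G,0)V\t$ with $\bar G\succ0$. Then $\dot\xi=V\t\Phi_A(T,t)B(t)u$, and by the computation above its lower block vanishes. So $\xi=(\xi_1,\xi_2)$ with $\xi_2$ constant and $\dot\xi_1=B_1(t)u$, where $\int_0^T B_1B_1\t\,\dd t=\bar G\succ0$. A feedback $u=K(t)x$ is the same as $u=\tilde K_1(t)\xi_1+\tilde K_2(t)\xi_2$.

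In these coordinates the initial covariance is $S_0=V\t\Phi_A(T,0)\Sigma_0\Phi_A(T,0)\t V$, and the target is $S_{\mathrm f}=V\t\Sigmaf V$. The hypothesis $\Sigmaf\in\mathcal S_T$ says exactly that the $(2,2)$ blocks agree, $(S_{\mathrm f})_{22}=(S_0)_{22}$. It then suffices to find a matrix $\Psi=\begin{bmatrix}M&N\\0&I\end{bmatrix}$ with $\Psi S_0\Psi\t=S_{\mathrm f}$, and a feedback realizing the map $\xi(0)\mapsto\Psi\xi(0)$.

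To find $\Psi$, use the block $LDU$ factorizations
\begin{equation*}
S=\begin{bmatrix}I&X\\0&I\end{bmatrix}\begin{bmatrix}D&0\\0&S_{22}\end{bmatrix}\begin{bmatrix}I&0\\X\t&I\end{bmatrix},
\end{equation*}
where $X=S_{12}S_{22}^{-1}$ and $D$ is the Schur complement. Because the $S_{22}$ blocks coincide, we can take
\begin{equation*}
\Psi=\begin{bmatrix}I&X_{\mathrm f}\\0&I\end{bmatrix}\begin{bmatrix}M_0&0\\0&I\end{bmatrix}\begin{bmatrix}I&-X_0\\0&I\end{bmatrix}
\end{equation*}
for any $M_0$ with $M_0D_0M_0\t=D_{\mathrm f}$. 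This leaves the freedom $M_0=D_{\mathrm f}^{1/2}QD_0^{-1/2}$ with $Q$ orthogonal, which I will exploit below.

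\emph{Sufficiency: realizing $\Psi$ by feedback (main obstacle).} I will use the minimum-energy open-loop input
\begin{equation*}
u(t)=B_1(t)\t\bar G^{-1}\big[(M-I)\xi_1(0)+N\xi_2(0)\big].
\end{equation*}
It produces $\xi_1(t)=\big[I+W(t)(M-I)\big]\xi_1(0)+W(t)N\xi_2(0)$, where $W(t)=\bar G(t)\bar G^{-1}$ and $\bar G(t)=\int_0^tB_1B_1\t$. This open-loop law can be rewritten as a linear state feedback exactly when $F(t)=I+W(t)(M-I)$ is invertible for all $t\in[0,T]$; in that case take $\tilde K_1=B_1\t\bar G^{-1}(M-I)F(t)^{-1}$, with $\tilde K_2$ absorbing the $\xi_2$ term.

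Invertibility of $F(t)$ is where the freedom in $Q$ must be used. Writing $F(t)=\big[(\bar G-\bar G(t))+\bar G(t)\bar G^{-1}M\bar G\big]\bar G^{-1}$ and using $0\preceq\bar G(t)\preceq\bar G$, $F(t)$ is nonsingular whenever $\bar G^{-1}M\bar G$ has positive definite symmetric part in a suitable congruent sense. So I will try first to choose $Q$ so that $M$ is ``positive'' relative to $\bar G$, for instance $\bar G^{-1/2}M\bar G^{1/2}\succ0$. This amounts to solving $M_0D_0M_0\t=D_{\mathrm f}$ with a positive definite unknown after a congruence, which is the geometric-mean equation and always has a solution.

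If this congruence bookkeeping fails because of the shears $X_0,X_{\mathrm f}$, the fallback is the paper's Riccati-equation machinery of Section~\ref{sec:RDE}. There, existence of a solution to the associated RDE on $[0,T]$ is precisely the condition for the feedback gain to exist. A single positive-definite factor needs only one interval, which is why Assumption~\ref{asm:5-piece} should not be required here.
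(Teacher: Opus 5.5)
Your proposal is correct. In substance it is the paper's proof, written in your $\xi$-coordinates.

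\textbf{Necessity.} Your variation-of-constants argument is exactly the identity $\Phi(T,0)=\Phi_{A}(T,0)+G(T,0)L$ that the paper imports from \cite{liu2024reachability}.

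\textbf{Sufficiency: the matrix construction.} Your $LDU$ bookkeeping with the geometric-mean choice of $M$ is Lemma~\ref{lem:sets} in disguise. There, $\Theta_{1},\Xi_{1}$ are your Schur complements $D_{0},D_{\mathrm f}$. The matrix $M_{1}$ supplied by Lemma~\ref{lem:pos-def} is your $P\triangleq\bar G^{-1/2}M\bar G^{1/2}$, so $I+N_{1}Y_{1}$ is your $M$. Finally, $N_{1}Y_{2}$ from \eqref{eqn:Y2} is your $N=X_{\mathrm f}-MX_{0}$.

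\textbf{Sufficiency: the realization.} Your control is also the paper's control. Along \eqref{eqn:RDE} one checks that $\frac{\dd}{\dd t}\big[\Phi_{A}(t,0)\t\Pi(t)\Phi_{\Pi}(t,0)\big]=0$. Hence $K=-B\t\Pi$ generates exactly the open-loop input $-B(t)\t\Phi_{A}(0,t)\t\Pi(0)x(0)$, which is your minimum-energy input. After your change of coordinates, $I-H(t,0)\Pi(0)$ is block upper triangular with diagonal blocks $F(t)$ and $I$. So your invertibility condition is precisely the existence condition for the RDE solution.

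The real difference is that you verify nonsingularity of $F(t)$ by hand, rather than citing Lemma~\ref{lem:RDE-sym} and the RDE existence result behind it. This makes your argument self-contained and gives the explicit gain $B_{1}\t\bar G^{-1}(M-I)F(t)^{-1}$ with no Riccati equation to integrate. The paper's route instead keeps this proof inside the same RDE framework used for Theorems~\ref{thm:Phi-contr}--\ref{thm:Phi-reach}.

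Two points need tightening.

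First, you only assert that $F(t)$ is nonsingular. The matrix that matters is $P=\bar G^{-1/2}M\bar G^{1/2}$, which is a similarity transform of $M$, not $\bar G^{-1}M\bar G$. The check is short. Set $E(t)\triangleq\bar G^{-1/2}\bar G(t)\bar G^{-1/2}$, so that $0\preceq E(t)\preceq I$. Then $\bar G^{-1/2}F(t)\bar G^{1/2}=I+E(t)(P-I)$. By Lemma~\ref{lem:spec}, this matrix has the same spectrum as $\big(I-E(t)\big)+E(t)^{1/2}PE(t)^{1/2}$. That matrix is positive definite: $z\t(\cdot)z=0$ forces $E(t)^{1/2}z=0$ because $P\succ0$, and then $\|z\|^{2}=0$.

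Second, your worry about the shears is unfounded. $X_{0}$ and $X_{\mathrm f}$ enter only through $N=X_{\mathrm f}-MX_{0}$, and $F(t)$ does not involve $N$. So the geometric-mean choice always succeeds, and the Riccati fallback you mention is unnecessary.
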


Notice that Assumption \ref{asm:5-piece} is not invoked in Theorem \ref{thm:Sigma-reach}. 
It will be clear from the constructive proof of Theorem \ref{thm:Sigma-reach} that for any reachable terminal state covariance $\Sigmaf \in \mathcal{S}_{T}$, there exists a control $K(t)$ that is continuous in both $\Sigma_{0}$ and $\Sigmaf$.

\begin{corollary} \label{cor:Sigma-reach}
The set $\mathcal{S}_{T}$ defined in Theorem \ref{thm:Sigma-reach} can be written equivalently as 
\begin{align*}
\mathcal{S}_{T} 
&= 
\big\{
\Sigmaf \succ 0 : 
P_{H} \Phi_{A}(0, T) \Sigmaf \Phi_{A}(0, T)\t P_{H} = 
P_{H} \Sigma_{0} P_{H} 
\big\}, 
\end{align*}
where $P_{H}$ is the orthogonal projection onto $\ker H(T, 0)$. 
\end{corollary}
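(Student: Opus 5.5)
The plan is to reduce the corollary to an identity relating the two kernels, using the Gramian relation \eqref{eqn:H-G} at $t=0$, namely $H(T,0) = \Phi_{A}(0,T) G(T,0) \Phi_{A}(0,T)\t$. Write $\Phi \triangleq \Phi_{A}(T,0)$, so that $\Phi_{A}(0,T) = \Phi^{-1}$ and $H = \Phi^{-1} G \Phi\tinv$. Since $\Phi$ is invertible, $v \in \ker H$ if and only if $\Phi\tinv v \in \ker G$. Hence $\ker G(T,0) = \Phi\tinv \ker H(T,0)$, or equivalently $\ker H = \Phi\t \ker G$.

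Next I would recast the defining condition of $\mathcal{S}_{T}$ in a basis-free form. Set $M \triangleq \Sigmaf - \Phi \Sigma_{0} \Phi\t$, which is symmetric. For a subspace $\mathcal{K}$ with orthogonal projection $P$ and symmetric $M$, the condition $PMP = 0$ holds if and only if $w\t M w' = 0$ for all $w,w' \in \mathcal{K}$. The forward direction is immediate, since $Pw = w$. For the converse, $PMP$ vanishes on $\mathcal{K}^{\perp}$ and its range lies in $\mathcal{K}$, so it is determined by the bilinear form $w\t M w'$ restricted to $\mathcal{K}$. Thus the condition in Theorem~\ref{thm:Sigma-reach} reads $w\t(\Sigmaf - \Phi\Sigma_{0}\Phi\t)w' = 0$ for all $w,w' \in \ker G$.

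The third step is the change of variables. Let $w = \Phi\tinv v$ and $w' = \Phi\tinv v'$ with $v,v' \in \ker H$, which by the first step parametrizes all of $\ker G$. Then
\begin{equation*}
w\t\big(\Sigmaf - \Phi\Sigma_{0}\Phi\t\big)w' = v\t\big(\Phi^{-1}\Sigmaf\Phi\tinv - \Sigma_{0}\big)v'.
\end{equation*}
So the condition is equivalent to the bilinear form of $\Phi_{A}(0,T)\Sigmaf\Phi_{A}(0,T)\t - \Sigma_{0}$ vanishing on $\ker H$. By the projection lemma again, this is the same as $P_{H}\Phi_{A}(0,T)\Sigmaf\Phi_{A}(0,T)\t P_{H} = P_{H}\Sigma_{0}P_{H}$. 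The positivity constraint $\Sigmaf \succ 0$ is common to both descriptions, so the two sets coincide.

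I expect no serious obstacle here. The only subtle point is that $P_{G}$ and $P_{H}$ are not related by conjugation with $\Phi$, because $\Phi$ is not orthogonal. The argument therefore has to pass through the bilinear-form characterization of $PMP = 0$, rather than substituting one projection for the other directly.
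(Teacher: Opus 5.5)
Your argument is correct, but it takes a different route from the paper.

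\textbf{The paper's route.} The paper applies Lemma~\ref{lem:sets} a second time, with $N = H_T$ and $Q = \Sigma_0$. This describes the new set as all $\Sigma \succ 0$ with $\Phi_A(0,T)\Sigma\Phi_A(0,T)\t = (I_n - H_T V)\Sigma_0(I_n - V H_T)$, for some symmetric $V$ with $I_n - H_T^{1/2} V H_T^{1/2} \succ 0$. It then matches this against the parametrization of $\mathcal{S}_T$ through the identity \eqref{eqn:S-T}, i.e.\ the substitution $V = -\Phi_A(T,0)\t Y \Phi_A(T,0)$. Only the inclusion of $\mathcal{S}_T$ into the new set is written out; the reverse is called similar and omitted.

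\textbf{Your route.} You compare the two defining conditions directly. Since $PMP = 0$ depends only on the bilinear form of $M$ restricted to $\range P$, and $\ker G(T,0) = \Phi_A(T,0)\tinv \ker H(T,0)$, the congruence $X \mapsto \Phi_A(0,T) X \Phi_A(0,T)\t$ carries one condition exactly onto the other.

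\textbf{Comparison.} Your argument is shorter and basis-free, and it gives both inclusions at once. It avoids Lemma~\ref{lem:pos-def} and the Schur-complement construction entirely, and it does not even use $\Sigmaf \succ 0$. Your caution that $P_G$ and $P_H$ are not conjugate under $\Phi_A(T,0)$ is exactly why the subspace formulation is needed. What the paper's route buys is an explicit representation of the alternative set as the image of $\Sigma_0$ under $\Phi_A(T,0)(I_n - H_T V)$, with symmetric, admissible RDE initial conditions $\Pi(0) = V$. Remark~\ref{rmk:stt-covar-contr} uses this representation to derive the closed form \eqref{eqn:Pi0}.

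\textbf{Minor polish.} The converse in your projection step is just $x\t PMPy = (Px)\t M(Py) = 0$ for all $x, y$, and symmetry of $M$ is not needed there.
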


\begin{corollary} \label{cor:Sigma-contr}
The state covariance matrix $\Sigma(t)$ of \eqref{sys:LTV-OL} is controllable on $[0, T]$ 
if and only if $H(T, 0) \succ 0$. 
\end{corollary}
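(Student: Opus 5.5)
We derive Corollary~\ref{cor:Sigma-contr} directly from Theorem~\ref{thm:Sigma-reach} and Corollary~\ref{cor:Sigma-reach}; Assumption~\ref{asm:5-piece} is never used.

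\emph{Sufficiency.} Suppose $H(T,0) \succ 0$. By \eqref{eqn:H-G}, $G(T,0) = \Phi_{A}(T,0) H(T,0) \Phi_{A}(T,0)\t$, and $\Phi_{A}(T,0)$ is invertible, so $G(T,0) \succ 0$ as well. Hence $\ker G(T,0) = \{0\}$ and $P_{G} = 0$. The defining constraint of $\mathcal{S}_{T}$ in Theorem~\ref{thm:Sigma-reach} then reads $0 = 0$, so $\mathcal{S}_{T} = \{\Sigmaf \succ 0\}$ for every $\Sigma_{0} \succ 0$. Thus every $\Sigmaf \succ 0$ is reachable from every $\Sigma_{0} \succ 0$, which is exactly controllability of $\Sigma(t)$ on $[0,T]$.

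\emph{Necessity.} Suppose $H(T,0)$ is singular. Pick a unit vector $v \in \ker H(T,0)$; then $P_{H} \neq 0$ and $P_{H} v = v$. Take $\Sigma_{0} = I_{n}$ and choose the target
\[
\Sigmaf = 2\,\Phi_{A}(T,0)\Phi_{A}(T,0)\t \succ 0 .
\]
For this target,
\[
P_{H} \Phi_{A}(0,T) \Sigmaf \Phi_{A}(0,T)\t P_{H} = 2 P_{H} \neq P_{H} = P_{H} \Sigma_{0} P_{H},
\]
the inequality holding because $P_{H} \neq 0$. Concretely, the quadratic forms against $v$ are $2$ and $1$. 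By the characterization of $\mathcal{S}_{T}$ in Corollary~\ref{cor:Sigma-reach}, $\Sigmaf \notin \mathcal{S}_{T}$, so this $\Sigmaf$ is not reachable from $\Sigma_{0} = I_{n}$. Hence $\Sigma(t)$ is not controllable.

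The argument is routine. The only points needing care are the rank equivalence between $G$ and $H$ supplied by \eqref{eqn:H-G}, and exhibiting one explicit pair $(\Sigma_{0}, \Sigmaf)$ that violates the projection identity. The choice $\Sigmaf = 2\Phi_{A}(T,0)\Phi_{A}(T,0)\t$ above handles the latter.
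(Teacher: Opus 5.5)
Your proof is correct and takes essentially the same route as the paper: both directions come from the projection characterizations of $\mathcal{S}_{T}$ in Theorem~\ref{thm:Sigma-reach} and Corollary~\ref{cor:Sigma-reach}, and you only swap which of the two is used in each direction. Your explicit witness $\Sigma_{0} = I_{n}$, $\Sigmaf = 2\Phi_{A}(T,0)\Phi_{A}(T,0)\t$ makes concrete a step the paper only asserts, namely that reachability of every positive definite target forces the projection onto the kernel to vanish.
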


Corollary \ref{cor:Sigma-contr} is consistent with the more general result in \cite[Theorem 7]{liu2024reachability} on the controllability of $\Sigma(t)$ when the LTV system \eqref{sys:LTV-OL} has additive noise.

These results will be proved in Sections~\ref{sec:RDE}-\ref{sec:stt-covar-reach}. 
We summarize in Remarks~\ref{rmk:stt-trans-contr}, \ref{rmk:stt-trans-reach}, \ref{rmk:stt-covar-reach}, and \ref{rmk:stt-covar-contr}, respectively, 
the constructive methods on finding a control for a given terminal state transition matrix in 
Theorem~\ref{thm:Phi-contr} and Theorem~\ref{thm:Phi-reach}, and
for a given terminal state covariance matrix in Theorem~\ref{thm:Sigma-reach} and Corollary~\ref{cor:Sigma-contr}. 


\section{Continuous Feedback Gain and RDE} \label{sec:RDE}

In this section, we first find a control $K(t)$ for the bilinear system \eqref{sys:Phi} as a continuous function of the terminal state transition matrix $\Phi(T, 0)$ via a matrix Riccati differential equation (RDE). 
Then, we give sufficient conditions for such a control $K(t)$ to exist on $[0, T]$ by leveraging the conditions for the existence of a solution to RDE. 
With this approach, we are not only able to recover Proposition~1 and Proposition~2 in \cite{abdelgalil2025collective}, but we are also able to establish some new results.

Let
\begin{equation} \label{eqn:K}
K(t) = - B(t)\t \Pi(t), 
\end{equation}
where $\Pi(t) \in \R^{n \times n}$ satisfies the matrix RDE 
\begin{equation} \label{eqn:RDE}
\dot{\Pi} = - A(t)\t \Pi - \Pi A(t) + \Pi B(t) B(t)\t \Pi, 
\quad t \in [0, T]. 
\end{equation}
Then, we have 
\begin{equation} \label{eqn:ABKPi}
A(t) + B(t) K(t) = A(t) - B(t) B(t)\t \Pi(t). 
\end{equation}
It is worth pointing out that the matrix $\Pi$ in \eqref{eqn:RDE} is not necessarily symmetric.

First, we give three conditions for the solution of the matrix RDE \eqref{eqn:RDE} to exist on $[0, T]$.

\begin{lemma}\cite[Theorem 6]{liu2024reachability} 
\label{lem:RDE-sym}
Suppose the initial condition $\Pi(0) = \Pi_{0}$ of the RDE \eqref{eqn:RDE} is symmetric. 
Then, 
the following statements are equivalent. 
\begin{enumerate}[leftmargin=*]

\item[i)] 
The RDE \eqref{eqn:RDE} admits a unique solution on $[0, T]$. 

\item[ii)] 
For all $t \in [0, T]$, $H(t, 0)^{\frac{1}{2}} \Pi_{0} H(t, 0)^{\frac{1}{2}} \prec I_{n}$. 

\item[iii)] 
$H(T, 0)^{\frac{1}{2}} \Pi_{0} H(T, 0)^{\frac{1}{2}} \prec I_{n}$. 

\item[iv)] 
The eigenvalues of $I_{n} - H(T, 0) \Pi_{0}$ are all positive. 

\end{enumerate}
\end{lemma}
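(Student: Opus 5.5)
The plan is to linearize the RDE \eqref{eqn:RDE} through the standard Hamiltonian (Radon) representation and reduce existence of a solution to invertibility of a single matrix. Write $\Pi(t) = Y(t) X(t)^{-1}$, where $X, Y$ solve the linear system
\[
\dot X = A X - B B\t Y, \qquad \dot Y = -A\t Y, \qquad X(0) = I_n,\; Y(0) = \Pi_0 .
\]
Then $Y(t) = \Phi_A(0,t)\t \Pi_0$. Variation of constants gives
\[
X(t) = \Phi_A(t,0)\bigl[I_n - H(t,0)\Pi_0\bigr],
\]
using $\Phi_A(0,\tau)B B\t \Phi_A(0,\tau)\t$ integrated over $[0,t]$, which is exactly $H(t,0)$ from \eqref{def:Gram-contr} with base point $0$. (The Gramian integrand is evaluated at the initial time, so I will check that the paper's $H(t,0)$ matches this form.)

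A standard argument shows that the solution of \eqref{eqn:RDE} exists on $[0,t]$ if and only if $X(s)$ is invertible for all $s \in [0,t]$.
\begin{enumerate}[leftmargin=*]
\item[a)] If $X$ stays invertible, $YX^{-1}$ is $C^1$ and satisfies the RDE by direct differentiation. Uniqueness then follows from local Lipschitzness.
\item[b)] Conversely, if $\Pi$ exists, let $X$ solve $\dot X = (A - BB\t\Pi)X$ with $X(0) = I_n$, and set $Y = \Pi X$. This pair solves the linear system, so it coincides with the one above. Since $X$ is a state transition matrix, it is invertible.
\end{enumerate}
Hence i) holds if and only if $\det\bigl(I_n - H(t,0)\Pi_0\bigr) \neq 0$ for all $t\in[0,T]$.

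Next I use symmetry of $\Pi_0$ together with monotonicity of $H(t,0)$ in $t$ to get ii) and iii). Write $H_t = H(t,0) \succeq 0$. The matrices $I - H_t\Pi_0$ and $I - H_t^{1/2}\Pi_0 H_t^{1/2}$ have the same nonzero-shifted spectrum, since $MN$ and $NM$ share eigenvalues. So invertibility of the former is equivalent to $1 \notin \spec(H_t^{1/2}\Pi_0 H_t^{1/2})$, and the latter matrix is symmetric. At $t=0$ we have $H_0 = 0$, so this matrix is $0 \prec I$.

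The function $t \mapsto \lambda_{\max}(H_t^{1/2}\Pi_0H_t^{1/2})$ is continuous. By continuity, i) forces $\lambda_{\max} < 1$ on all of $[0,T]$, which is ii). Trivially ii) implies iii).

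The main obstacle is iii) $\Rightarrow$ ii), i.e. showing that the condition at $T$ controls all intermediate $t$ even though $\Pi_0$ is indefinite. My approach is as follows.
\begin{enumerate}[leftmargin=*]
\item[a)] Note that $H_t^{1/2}\Pi_0 H_t^{1/2} \prec I$ says $v\t \Pi_0 v < v\t H_t^{-1} v$ for all nonzero $v \in \range H_t$. Equivalently, the quadratic form $w\t(\Pi_0 - \text{``}H_t^{\dagger}\text{''})w$ is negative on the range.
\item[b)] Use that $H_s \preceq H_t$ for $s \le t$ and $\range H_s \subseteq \range H_t$. The generalized inverses then satisfy $v\t H_s^{\dagger} v \ge v\t H_t^{\dagger} v$ for $v\in\range H_s$. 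This is the pseudo-inverse monotonicity on nested ranges, proved via the variational characterization $v\t H^{\dagger}v = \sup_w \{2w\t v - w\t H w\}$.
\item[c)] Therefore, for $v \in \range H_s$, $v\t\Pi_0 v < v\t H_T^{\dagger} v \le v\t H_s^{\dagger} v$, which is ii).
\end{enumerate}

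Finally, iii) $\Leftrightarrow$ iv) holds because the eigenvalues of $I - H_T\Pi_0$ coincide with those of $I - H_T^{1/2}\Pi_0H_T^{1/2}$ (again by $MN$ versus $NM$). The latter is symmetric, so it is positive definite exactly when all its eigenvalues are positive.
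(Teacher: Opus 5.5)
Your proof is correct and complete. The paper does not prove this lemma; it is imported from \cite[Theorem 6]{liu2024reachability}, so there is no in-paper argument to compare against. What you give is a self-contained derivation built from the same ingredients the paper uses elsewhere. Your linearization yields $X(t) = \Phi_{A}(t,0)\big[I_{n} - H(t,0)\Pi_{0}\big]$, which is the $s = 0$ case of \eqref{eqn:stt-trans-Pi}, and your Gramian does coincide with \eqref{def:Gram-contr} at base point $0$. Your criterion that i) holds if and only if $I_{n} - H(t,0)\Pi_{0}$ is invertible for every $t \in [0,T]$ is the same fact the paper takes from \cite{kilicaslan2010existence} in the proofs of Lemmas~\ref{lem:RDE-norm1} and~\ref{lem:RDE-real}. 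The $MN$ versus $NM$ step is Lemma~\ref{lem:spec}, and it gives equality of the full spectra. The continuity argument from $H(0,0) = 0$ correctly yields i) $\Rightarrow$ ii). The converse ii) $\Rightarrow$ i), which you leave implicit, is immediate from the same invertibility criterion. The one step that genuinely uses symmetry, iii) $\Rightarrow$ ii), is handled correctly by your nested-range inequality $v\t H(s,0)^{+} v \geq v\t H(T,0)^{+} v$ for $v \in \range H(s,0)$. This is the same monotonicity $H(t,0) \preceq H(T,0)$ that the paper exploits in proving Lemma~\ref{lem:RDE-norm1}. A slightly shorter way to package that step is via Douglas' lemma: since $H(t,0) \preceq H(T,0)$, one can write $H(t,0)^{\frac{1}{2}} = H(T,0)^{\frac{1}{2}} C$ with $\|C\| \leq 1$. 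Then $H(t,0)^{\frac{1}{2}} \Pi_{0} H(t,0)^{\frac{1}{2}} = C\t \big(H(T,0)^{\frac{1}{2}} \Pi_{0} H(T,0)^{\frac{1}{2}}\big) C$, and strictness follows by separating the cases $Cz = 0$ and $Cz \neq 0$. This avoids pseudo-inverses but is otherwise equivalent to your variational argument.
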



Notice that Lemma \ref{lem:RDE-sym} requires the initial condition $\Pi_{0}$ to be symmetric, while 
Lemma~\ref{lem:RDE-norm1} and Lemma~\ref{lem:RDE-real} below do not. 
The proofs of Lemmas \ref{lem:RDE-norm1} and \ref{lem:RDE-real} are given in the Appendix.

\begin{lemma} \label{lem:RDE-norm1}
Let $\Pi(0) = \Pi_{0}$ be the initial condition of the RDE \eqref{eqn:RDE}. 
If $\| H(T, 0)^{\frac{1}{2}} \Pi_{0} H(T, 0)^{\frac{1}{2}} \| < 1$, 
then, for all $t \in [0, T]$, $\| H(t, 0)^{\frac{1}{2}} \Pi_{0} H(t, 0)^{\frac{1}{2}} \| < 1$. 
Thus, \eqref{eqn:RDE} admits a unique solution on $[0, T]$. 
\end{lemma}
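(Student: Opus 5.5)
The plan is to first prove the norm inequality for every $t \in [0, T]$ using a monotonicity argument, and then obtain existence and uniqueness of the solution through the standard linear (Hamiltonian) representation of the RDE \eqref{eqn:RDE}.

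\textbf{Step 1 (monotonicity).} From \eqref{def:Gram-contr}, $H(t,0) = \int_{0}^{t} \Phi_{A}(0,\tau) B(\tau) B(\tau)\t \Phi_{A}(0,\tau)\t \, \dd\tau$. The integrand is positive semidefinite, so $0 \preceq H(t,0) \preceq H(T,0)$ for all $t \in [0,T]$.

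\textbf{Step 2 (contraction factor).} By Douglas' factorization lemma, the ordering $H(t,0) \preceq H(T,0)$ gives a matrix $C_t$ with $\|C_t\| \le 1$ and
\[
H(t,0)^{\frac12} = C_t \, H(T,0)^{\frac12}.
\]
Concretely, $\|H(t,0)^{\frac12}x\| \le \|H(T,0)^{\frac12}x\|$ for every $x$. So $C_t$ can be defined on $\range H(T,0)^{\frac12}$ by $C_t H(T,0)^{\frac12}x \triangleq H(t,0)^{\frac12}x$, and extended by zero on the orthogonal complement. Using this factor and the symmetry of $H(\cdot,0)^{\frac12}$,
\[
H(t,0)^{\frac12}\Pi_0 H(t,0)^{\frac12} = C_t \big(H(T,0)^{\frac12}\Pi_0 H(T,0)^{\frac12}\big) C_t\t .
\]
Taking norms gives $\|H(t,0)^{\frac12}\Pi_0 H(t,0)^{\frac12}\| \le \|C_t\|^2 \, \|H(T,0)^{\frac12}\Pi_0H(T,0)^{\frac12}\| < 1$. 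This proves the first claim.

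\textbf{Step 3 (existence).} Write $\Pi = Y X^{-1}$, where $X, Y$ solve the linear system
\[
\dot X = A X - B B\t Y, \qquad \dot Y = -A\t Y, \qquad X(0) = I_n, \quad Y(0) = \Pi_0 .
\]
Solving explicitly gives $Y(t) = \Phi_A(0,t)\t \Pi_0$ and
\[
X(t) = \Phi_A(t,0)\big(I_n - H(t,0)\Pi_0\big).
\]
A direct differentiation, using $\tfrac{\dd}{\dd t}X^{-1} = -X^{-1}\dot X X^{-1}$, confirms that $Y X^{-1}$ satisfies \eqref{eqn:RDE} wherever $X(t)$ is invertible. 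Hence a solution exists on $[0,T]$ provided $I_n - H(t,0)\Pi_0$ is invertible for every $t$.

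\textbf{Step 4 (invertibility and uniqueness).} Apply the identity $\det(I_n - MN) = \det(I_n - NM)$ with $M = H(t,0)^{\frac12}$ and $N = \Pi_0 H(t,0)^{\frac12}$. This shows that $I_n - H(t,0)\Pi_0$ is invertible if and only if $I_n - H(t,0)^{\frac12}\Pi_0 H(t,0)^{\frac12}$ is. The latter is invertible because its second term has norm $<1$ by Step 2. Uniqueness follows because the right-hand side of \eqref{eqn:RDE} is locally Lipschitz in $\Pi$. Together with the explicit formula, this also rules out finite escape, since $X(t)^{-1}$ stays bounded on the compact interval $[0,T]$.

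The main obstacle is Step 2: the bound must be transferred from $T$ to an arbitrary $t$ for a nonsymmetric $\Pi_0$, so no eigenvalue ordering argument is available. The Douglas-lemma contraction handles this cleanly. Step 3 is routine, but the sign conventions relating $\Phi_A(0,\tau)$ to $H(t,0)$ must be checked carefully.
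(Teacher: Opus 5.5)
Your argument is correct. It has the same skeleton as the paper's proof: the ordering $H(t,0) \preceq H(T,0)$, transfer of the norm bound from $T$ to $t$, invertibility of $I_{n} - H(t,0)\Pi_{0}$, then existence. Two of the steps, however, use different tools.

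\textbf{The transfer step.} The paper stays inside Loewner-order manipulations. It first replaces the middle factor of $H(T,0)^{\frac{1}{2}}\Pi_{0} H(T,0) \Pi_{0}\t H(T,0)^{\frac{1}{2}} \prec I_{n}$ by $H(t,0)$. It then passes to the transposed product $H(t,0)^{\frac{1}{2}}\Pi_{0}\t H(T,0)\Pi_{0} H(t,0)^{\frac{1}{2}} \prec I_{n}$ and replaces the middle factor once more. Your Douglas contraction $C_{t}$ does both replacements at once through $H(t,0)^{\frac{1}{2}}\Pi_{0}H(t,0)^{\frac{1}{2}} = C_{t}\big(H(T,0)^{\frac{1}{2}}\Pi_{0}H(T,0)^{\frac{1}{2}}\big)C_{t}\t$. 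This is shorter and makes the monotonicity of the norm in $t$ explicit, at the price of constructing $C_{t}$. You do that construction correctly, including well-definedness on $\range H(T,0)^{\frac{1}{2}}$.

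\textbf{The existence step.} The paper goes from the norm bound to a spectral-radius bound and moves it to $H(t,0)\Pi_{0}$ with Lemma~\ref{lem:spec}. It then cites \cite{kilicaslan2010existence} to get existence and uniqueness from the invertibility of $I_{n} - H(t,0)\Pi_{0}$. Your linearization $\Pi = Y X^{-1}$, with $X(t) = \Phi_{A}(t,0)\big(I_{n} - H(t,0)\Pi_{0}\big)$, proves that implication directly. As a by-product it recovers the transition-matrix formula \eqref{eqn:stt-trans-Pi} that the paper imports from \cite{liu2024add}, so your proof is self-contained.

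\textbf{A small slip in Step~4.} With $M = H(t,0)^{\frac{1}{2}}$ and $N = \Pi_{0}H(t,0)^{\frac{1}{2}}$ you get $NM = \Pi_{0}H(t,0)$, not $H(t,0)\Pi_{0}$. Either take $N = H(t,0)^{\frac{1}{2}}\Pi_{0}$ instead, or note that $\det\big(I_{n} - \Pi_{0}H(t,0)\big) = \det\big(I_{n} - H(t,0)\Pi_{0}\big)$ by the same identity.
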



\begin{lemma} \label{lem:RDE-real}
Let $\Pi(0) = \Pi_{0}$ be the initial condition of the RDE \eqref{eqn:RDE}. 
If $H(T, 0)^{\frac{1}{2}} \big(\Pi_{0} + \Pi_{0}\t\big) H(T, 0)^{\frac{1}{2}} \prec 2 I_{n}$, then, 
\eqref{eqn:RDE} admits a unique solution on $[0, T]$. 
\end{lemma}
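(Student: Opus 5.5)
The plan is to linearize the RDE \eqref{eqn:RDE}, reduce existence on $[0,T]$ to a nonsingularity condition, and then verify that condition with a real-part (numerical-range) argument, using Lemma~\ref{lem:RDE-sym} to pass from time $T$ to every $t\in[0,T]$.

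\emph{Step 1 (linearization).} Consider the Hamiltonian system
\[
\dot X = A(t)X - B(t)B(t)\t Y, \qquad \dot Y = -A(t)\t Y, \qquad X(0)=I_n,\ Y(0)=\Pi_0 .
\]
Whenever $X(t)$ is invertible, $\Pi=YX^{-1}$ satisfies \eqref{eqn:RDE}; this is a direct differentiation and does not use symmetry of $\Pi_0$. Solving explicitly gives $Y(t)=\Phi_A(0,t)\t\Pi_0$ and
\[
X(t)=\Phi_A(t,0)\big[I_n - H(t,0)\Pi_0\big],
\]
since $\int_0^t \Phi_A(0,\tau)B B\t \Phi_A(0,\tau)\t\,\dd\tau = H(t,0)$. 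Conversely, if the RDE solution exists on $[0,t]$, then $\dot X=(A-BB\t\Pi)X$ makes $X$ a state transition matrix, so $X$ stays invertible. By local Lipschitz uniqueness, existence of a unique solution on $[0,T]$ is therefore equivalent to
\[
\det\big(I_n-H(t,0)\Pi_0\big)\neq 0 \quad\text{for all } t\in[0,T].
\]

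\emph{Step 2 (transfer the hypothesis to all $t$).} Set $S=\tfrac12(\Pi_0+\Pi_0\t)$, which is symmetric. The hypothesis reads $H(T,0)^{\frac12} S H(T,0)^{\frac12}\prec I_n$, which is item iii) of Lemma~\ref{lem:RDE-sym} with initial condition $S$. By the equivalence iii)$\Leftrightarrow$ii) in that lemma,
\[
H(t,0)^{\frac12} S H(t,0)^{\frac12}\prec I_n \quad\text{for all } t\in[0,T].
\]
Equivalently, writing $M_t\triangleq H(t,0)^{\frac12}\Pi_0 H(t,0)^{\frac12}$, we get $M_t+M_t\t\prec 2I_n$ for every $t$.

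\emph{Step 3 (nonsingularity).} Fix $t$ and write $H=H(t,0)$. Since $HM = H^{\frac12}\cdot\big(H^{\frac12}\Pi_0\big)$ and $H^{\frac12}\Pi_0 H^{\frac12}$ is the reversed product of the same two factors $H^{\frac12}$ and $H^{\frac12}\Pi_0$, the matrices $H\Pi_0$ and $M_t$ have the same nonzero eigenvalues. Concretely, if $H\Pi_0 v=v$ with $v\neq0$, then $w=H^{\frac12}\Pi_0 v$ satisfies $H^{\frac12}w=v$, so $w\neq0$, and $M_t w = H^{\frac12}\Pi_0 H^{\frac12}w = H^{\frac12}\Pi_0 v = w$. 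Hence, if $I_n-H\Pi_0$ were singular, there would be a (possibly complex) $w\neq0$ with $M_t w=w$. Then
\[
w^*\tfrac12(M_t+M_t\t)w=\operatorname{Re}(w^*M_tw)=\|w\|^2,
\]
which contradicts $M_t+M_t\t\prec 2I_n$, as that inequality also holds on complex vectors. Therefore $I_n-H(t,0)\Pi_0$ is invertible for all $t\in[0,T]$, and Step~1 yields the unique solution $\Pi(t)=\Phi_A(0,t)\t\Pi_0\big[I_n-H(t,0)\Pi_0\big]^{-1}\Phi_A(0,t)$.

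\emph{Main obstacle.} The delicate point is Step~2: the hypothesis is stated only at time $T$, while the nonsingularity must hold at every $t\in[0,T]$. A naive Loewner argument from $H(t,0)\preceq H(T,0)$ does not immediately work because $S$ is indefinite, which is why I would route through the symmetric result, Lemma~\ref{lem:RDE-sym}. A secondary point is checking that the linearization equivalence in Step~1 holds for nonsymmetric $\Pi_0$; it does, since the derivation never uses symmetry.
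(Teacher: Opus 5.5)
Your proof is correct and follows the paper's argument. You pass to the symmetric part $\frac12(\Pi_0+\Pi_0\t)$, use iii)$\Rightarrow$ii) of Lemma~\ref{lem:RDE-sym} to get the bound for every $t\in[0,T]$, deduce invertibility of $I_n - H(t,0)^{\frac12}\Pi_0 H(t,0)^{\frac12}$ from the quadratic form, and transfer it to $I_n - H(t,0)\Pi_0$ via the shared nonzero spectrum, which is the paper's Lemma~\ref{lem:spec}. The one difference is that you derive the ``invertibility implies existence'' criterion yourself through the Hamiltonian linearization, whereas the paper cites it from the literature, so your version is more self-contained.
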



In view of \eqref{eqn:ABKPi}, the reason we study the matrix RDE \eqref{eqn:RDE} in the above lemmas lies in a remarkable fact that 
there is a closed-form expression \cite[Lemma 3]{liu2024add} for the state transition matrix of $A(t) - B(t) B(t)\t \Pi(t)$. 
Toward this end, suppose \eqref{eqn:RDE} has a solution $\Pi(\tau)$ on $[s, t]$, where $\tau \in [s, t]$. 
Let $\Phi_{\Pi}(t, s)$ denote the state transition matrix of $A(\tau) - B(\tau) B(\tau)\t \Pi(\tau)$ from time $s$ to time $t$. 
Then, we have \cite[Lemma 3]{liu2024add}
\begin{equation} \label{eqn:stt-trans-Pi}
\Phi_{\Pi}(t, s) = 
\Phi_{A}(t, s) \big[I_{n} - H(t, s) \Pi(s)\big]. 
\end{equation}
It follows from \eqref{eqn:ABKPi} that we can study the state transition matrix $\Phi$ for $A(t) + B(t) K(t)$ by studying the state transition matrix $\Phi_{\Pi}$ for $A(t) - B(t) B(t)\t \Pi(t)$. 
Along this line, we obtain the following results from Lemmas \ref{lem:RDE-sym}, \ref{lem:RDE-norm1}, \ref{lem:RDE-real} and \eqref{eqn:stt-trans-Pi}.

\begin{proposition} \label{prp:RDE-sym}
Let the terminal state transition matrix $\Phif$ satisfy the following conditions:
\begin{enumerate}[leftmargin=*]

\item[i)] 
$\Phif \in \GL$, 

\item[ii)] 
$\Phi_{A}(0, T) \Phif$ can be written as 
\begin{equation} \label{eqn:Phi-decomp}
\Phi_{A}(0, T) \Phif = 
U
\begin{bmatrix}
\bar{\Phi} & \tilde{\Phi} \\
0 & I_{n-r}
\end{bmatrix}
U\t, 
\end{equation}
where $U \in \R^{n \times n}$ is the orthogonal matrix that satisfies \eqref{eqn:H-decomp}, 
$\bar{\Phi} \in \R^{r \times r}$, 
and $\tilde{\Phi} \in \R^{r \times (n-r)}$, 

\item[iii)] 
$\big(H(T, 0)^{+}\big)^{\frac{1}{2}} \Phi_{A}(0, T) \Phif H(T, 0)^{\frac{1}{2}}$ is symmetric, 
where $H^{+}$ denotes the Moore–Penrose pseudo-inverse of $H$, 
and the eigenvalues of $\bar{\Phi}$ are all positive. 

\end{enumerate}
Then, the control \eqref{eqn:K} steers the bilinear system \eqref{sys:Phi} to $\Phi(T, 0) = \Phif$, where $\Pi(t)$ solves the matrix RDE \eqref{eqn:RDE} with the initial condition $\Pi(0) = \Pi_{0} = \Pi_{0}\t$ satisfying 
\begin{equation} \label{eqn:Phi-H-Pi}
\Phi_{A}(0, T) \Phif = I_{n} - H(T, 0) \Pi_{0}. 
\end{equation}
\end{proposition}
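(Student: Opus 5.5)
Our approach is to construct a symmetric $\Pi_{0}$ satisfying \eqref{eqn:Phi-H-Pi}, check via Lemma~\ref{lem:RDE-sym} that the RDE \eqref{eqn:RDE} has a solution on $[0,T]$, and then read off the terminal state transition matrix from \eqref{eqn:stt-trans-Pi}.

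\textbf{Step 1: work in the $U$-coordinates.} Write $M \triangleq \Phi_{A}(0,T)\Phif$. By \eqref{eqn:H-decomp} and \eqref{eqn:Phi-decomp},
\[
U\t (I_{n} - M) U = \begin{bmatrix} I_{r} - \bar{\Phi} & -\tilde{\Phi} \\ 0 & 0 \end{bmatrix}.
\]
A candidate $\Pi_{0} = U \begin{bmatrix} P_{11} & P_{12} \\ P_{21} & P_{22} \end{bmatrix} U\t$ satisfies $H(T,0)\Pi_{0} = I_{n} - M$ exactly when
\[
\bar{H}P_{11} = I_{r} - \bar{\Phi}, \qquad \bar{H}P_{12} = -\tilde{\Phi}.
\]
Since $\bar{H} \succ 0$, this gives
\[
P_{11} = \bar{H}^{-1}(I_{r} - \bar{\Phi}), \qquad P_{12} = -\bar{H}^{-1}\tilde{\Phi}.
\]
Symmetry of $\Pi_{0}$ then forces $P_{21} = P_{12}\t$. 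We take $P_{22} = 0$; any symmetric choice works.

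\textbf{Step 2: symmetry of $P_{11}$.} It remains to show that $P_{11}$ is symmetric, and this is exactly what condition iii) provides. Note that
\[
\big(H(T,0)^{+}\big)^{1/2} M H(T,0)^{1/2} = U \begin{bmatrix} \bar{H}^{-1/2}\bar{\Phi}\bar{H}^{1/2} & 0 \\ 0 & 0 \end{bmatrix} U\t,
\]
because the second block column of $H^{1/2}$ vanishes. Symmetry of this matrix means $S \triangleq \bar{H}^{-1/2}\bar{\Phi}\bar{H}^{1/2}$ is symmetric. Since $S$ is similar to $\bar{\Phi}$, its eigenvalues are positive, so $S \succ 0$. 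Then
\[
P_{11} = \bar{H}^{-1/2}(I_{r} - S)\bar{H}^{-1/2},
\]
which is symmetric. Hence $\Pi_{0} = \Pi_{0}\t$ and \eqref{eqn:Phi-H-Pi} holds.

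\textbf{Step 3: existence of the RDE solution.} We use condition iv) of Lemma~\ref{lem:RDE-sym}. We have $I_{n} - H(T,0)\Pi_{0} = M$, and $M$ is block upper triangular in the $U$-basis with diagonal blocks $\bar{\Phi}$ and $I_{n-r}$. Its eigenvalues are therefore those of $\bar{\Phi}$, which are positive, together with $1$. By Lemma~\ref{lem:RDE-sym}, the RDE \eqref{eqn:RDE} has a unique solution $\Pi(t)$ on $[0,T]$, so $K(t) = -B(t)\t\Pi(t)$ is well defined on $[0,T]$.

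\textbf{Step 4: terminal state transition matrix.} Applying \eqref{eqn:stt-trans-Pi} with $s = 0$ and $t = T$, and using \eqref{eqn:ABKPi}, gives
\[
\Phi(T,0) = \Phi_{A}(T,0)\big[I_{n} - H(T,0)\Pi_{0}\big] = \Phi_{A}(T,0)M = \Phif.
\]

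The main obstacle is Step 2: translating the pseudo-inverse symmetry condition into symmetry of $\bar{H}^{-1}(I_{r} - \bar{\Phi})$. This requires careful block bookkeeping with $H^{1/2}$ and $(H^{+})^{1/2}$ expressed in the $U$-basis. Condition i) is automatically consistent here, since $\det\Phif = \det\Phi_{A}(T,0)\det\bar{\Phi} > 0$.
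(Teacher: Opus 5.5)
Your proposal is correct and follows essentially the same route as the paper: you work in the $U$-coordinates of \eqref{eqn:H-decomp}, use condition iii) to make the $(1,1)$ block of $\Pi_{0}$ symmetric, apply the eigenvalue criterion iv) of Lemma~\ref{lem:RDE-sym}, and read off $\Phi(T,0)$ from \eqref{eqn:stt-trans-Pi}. Your explicit choice $P_{11} = \bar{H}^{-1}(I_{r} - \bar{\Phi})$, $P_{12} = -\bar{H}^{-1}\tilde{\Phi}$, $P_{22} = 0$ is a concrete instance of the symmetric $\Pi_{0}$ that the paper only asserts to exist through a chain of equivalences.
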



\begin{proposition} \label{prp:RDE-norm1}
Suppose the terminal state transition matrix $\Phif$ satisfies the following conditions:
\begin{enumerate}[leftmargin=*]

\item[i)] 
$\Phif \in \GL$, 

\item[ii)] 
$\Phi_{A}(0, T) \Phif$ can be written in the form of \eqref{eqn:Phi-decomp}, 

\item[iii)] 
$\big\| \big(H(T, 0)^{+}\big)^{\frac{1}{2}} \big(\Phi_{A}(0, T) \Phif - I_{n}\big) H(T, 0)^{\frac{1}{2}} \big\| < 1$. 

\end{enumerate}
Then, the control \eqref{eqn:K} steers the bilinear system \eqref{sys:Phi} to $\Phi(T, 0) = \Phif$, where $\Pi(t)$ solves the matrix RDE \eqref{eqn:RDE} with the initial condition $\Pi(0) = \Pi_{0}$, 
satisfying~\eqref{eqn:Phi-H-Pi}. 
\end{proposition}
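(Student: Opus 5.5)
Write $H \triangleq H(T,0)$ and $M \triangleq \Phi_{A}(0,T)\Phif$. The argument has three steps: construct an initial condition $\Pi_{0}$ satisfying \eqref{eqn:Phi-H-Pi}, show via Lemma~\ref{lem:RDE-norm1} that the RDE \eqref{eqn:RDE} is solvable on $[0,T]$, and read off $\Phi(T,0)$ from the closed form \eqref{eqn:stt-trans-Pi}.

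\emph{Step 1: construct $\Pi_{0}$.} By \eqref{eqn:H-decomp} and condition ii),
\[
M - I_{n} = U\begin{bmatrix}\bar{\Phi}-I_{r} & \tilde{\Phi}\\ 0 & 0\end{bmatrix}U\t .
\]
Choose
\[
\Pi_{0} \triangleq -H^{+}(M-I_{n}) = -U\begin{bmatrix}\bar{H}^{-1}(\bar{\Phi}-I_{r}) & \bar{H}^{-1}\tilde{\Phi}\\ 0 & 0\end{bmatrix}U\t .
\]
Since $HH^{+} = U\diag(I_{r},0)U\t$, and the range of $M-I_{n}$ lies in $\range H = U(\R^{r}\times\{0\})$, we get
\[
I_{n} - H\Pi_{0} = I_{n} + HH^{+}(M-I_{n}) = M .
\]
So \eqref{eqn:Phi-H-Pi} holds. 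Note that $\Pi_{0}$ is generally not symmetric, which is why Lemma~\ref{lem:RDE-norm1} is used rather than Lemma~\ref{lem:RDE-sym}.

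\emph{Step 2: verify the norm condition.} We need $H^{1/2}\Pi_{0}H^{1/2} = -H^{1/2}H^{+}(M-I_{n})H^{1/2}$. Use $H^{1/2}H^{+} = (H^{+})^{1/2}$, which holds because both equal $U\diag(\bar{H}^{-1/2},0)U\t$. This gives
\[
H^{1/2}\Pi_{0}H^{1/2} = -(H^{+})^{1/2}(M-I_{n})H^{1/2}.
\]
By condition iii) its norm is $<1$. Lemma~\ref{lem:RDE-norm1} then gives a unique solution $\Pi(t)$ of \eqref{eqn:RDE} on $[0,T]$. Consequently $K(t) = -B(t)\t\Pi(t)$ is well defined, and \eqref{eqn:ABKPi} holds.

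\emph{Step 3: conclude.} The closed-loop state transition matrix of \eqref{sys:Phi} equals $\Phi_{\Pi}(t,0)$. By \eqref{eqn:stt-trans-Pi} with $s=0$, $t=T$,
\[
\Phi(T,0) = \Phi_{A}(T,0)\big[I_{n} - H\Pi_{0}\big] = \Phi_{A}(T,0)M = \Phif .
\]
Condition i) is consistent with this, since $\det\Phi(T,0) > 0$ automatically.

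The only delicate point is the pseudo-inverse bookkeeping in Steps 1–2. It needs both the block-triangular structure from ii), so that $\range(M-I_{n}) \subseteq \range H$, and the identity $H^{1/2}H^{+} = (H^{+})^{1/2}$. Both follow from the common diagonalization by $U$.
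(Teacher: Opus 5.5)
Your proposal is correct and takes the same route as the paper. The paper also partitions $\Pi_0$ and $H(T,0)$ with $U$, identifies $(H(T,0)^{+})^{1/2}(\Phi_A(0,T)\Phif - I_n)H(T,0)^{1/2}$ with $-H(T,0)^{1/2}\Pi_0 H(T,0)^{1/2}$, applies Lemma~\ref{lem:RDE-norm1}, and reads off $\Phi(T,0)$ from \eqref{eqn:stt-trans-Pi}; your explicit choice $\Pi_0=-H^{+}(M-I_n)$ and the identity $H^{1/2}H^{+}=(H^{+})^{1/2}$ simply spell out the existence of $\Pi_0$ that the paper leaves implicit.
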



Note that~\cite[Proposition 2]{abdelgalil2025collective} and \cite[Proposition 1]{abdelgalil2025collective} can be recovered from Proposition~\ref{prp:RDE-sym} and Proposition~\ref{prp:RDE-norm1} above, respectively, for the special case when the matrix pair $(A, B)$ is constant and controllable, and, without loss of generality, $\Phi_{A}(T, 0) = I_{n}$ \cite{brockett2007optimal, brockett2012notes, abdelgalil2025collective}.

\begin{proposition} \label{prp:RDE-real}
Suppose the terminal state transition matrix $\Phif$ satisfies the following conditions:
\begin{enumerate}[leftmargin=*]

\item[i)] 
$\Phif \in \GL$, 

\item[ii)] 
$\Phi_{A}(0, T) \Phif$ can be written in the form of \eqref{eqn:Phi-decomp}, 

\item[iii)] 
$M \triangleq \big(H(T, 0)^{+}\big)^{\frac{1}{2}} \big[I_{n} - \Phi_{A}(0, T) \Phif\big] H(T, 0)^{\frac{1}{2}}$ satisfies 
$M + M\t \prec 2 I_{n}$. 

\end{enumerate}
Then, the control \eqref{eqn:K} steers the bilinear system \eqref{sys:Phi} to $\Phi(T, 0) = \Phif$, where $\Pi(t)$ solves the matrix RDE \eqref{eqn:RDE} with initial condition $\Pi(0) = \Pi_{0}$ satisfying~\eqref{eqn:Phi-H-Pi}. 
\end{proposition}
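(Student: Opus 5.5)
The plan is to build a suitable initial condition $\Pi_{0}$ satisfying \eqref{eqn:Phi-H-Pi} and then use Lemma~\ref{lem:RDE-real} to show that the RDE \eqref{eqn:RDE} has a solution on all of $[0, T]$. The conclusion will then follow from \eqref{eqn:stt-trans-Pi} with $s = 0$, $t = T$. Write $H \triangleq H(T, 0)$ and $\Psi \triangleq \Phi_{A}(0, T)\Phif$.

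\textbf{Step 1: solve \eqref{eqn:Phi-H-Pi} for $\Pi_{0}$.} Condition ii) gives
\[
I_{n} - \Psi = U \begin{bmatrix} I_{r} - \bar{\Phi} & -\tilde{\Phi} \\ 0 & 0 \end{bmatrix} U\t,
\]
so $\range(I_{n} - \Psi) \subseteq \range H$. Using \eqref{eqn:H-decomp}, define
\[
\Pi_{0} \triangleq U \begin{bmatrix} \bar{H}^{-1}(I_{r} - \bar{\Phi}) & -\bar{H}^{-1}\tilde{\Phi} \\ 0 & 0 \end{bmatrix} U\t = H^{+}(I_{n} - \Psi).
\]
A direct check gives $H\Pi_{0} = HH^{+}(I_{n} - \Psi) = I_{n} - \Psi$, because $HH^{+}$ is the orthogonal projection onto $\range H$. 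Hence \eqref{eqn:Phi-H-Pi} holds.

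\textbf{Step 2: verify the hypothesis of Lemma~\ref{lem:RDE-real}.} Since $(H^{+})^{1/2}H = H^{1/2}$, we get
\[
H^{1/2}\Pi_{0}H^{1/2} = H^{1/2}H^{+}(I_{n} - \Psi)H^{1/2} = (H^{+})^{1/2}(I_{n} - \Psi)H^{1/2} = M.
\]
Therefore $H^{1/2}(\Pi_{0} + \Pi_{0}\t)H^{1/2} = M + M\t \prec 2I_{n}$ by condition iii). Lemma~\ref{lem:RDE-real} then says that \eqref{eqn:RDE} with $\Pi(0) = \Pi_{0}$ has a unique solution $\Pi(t)$ on $[0, T]$. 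In particular, the gain $K(t) = -B(t)\t\Pi(t)$ from \eqref{eqn:K} is well defined on the whole interval.

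\textbf{Step 3: conclude.} By \eqref{eqn:ABKPi}, the closed-loop state transition matrix $\Phi(t, 0)$ in \eqref{sys:Phi} coincides with $\Phi_{\Pi}(t, 0)$. Applying \eqref{eqn:stt-trans-Pi} and then \eqref{eqn:Phi-H-Pi} gives
\[
\Phi(T, 0) = \Phi_{A}(T, 0)\big[I_{n} - H\Pi_{0}\big] = \Phi_{A}(T, 0)\Psi = \Phif.
\]

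\textbf{Main obstacle.} The only delicate point is the pseudo-inverse algebra in Steps 1–2 when $r < n$. Specifically, one must show that the block structure in condition ii) makes \eqref{eqn:Phi-H-Pi} solvable, and that $H^{1/2}\Pi_{0}H^{1/2}$ equals $M$ exactly. Both reduce to the block computation in the $U$ coordinates, where $H^{1/2}$ and $(H^{+})^{1/2}$ are $\operatorname{diag}(\bar{H}^{1/2}, 0)$ and $\operatorname{diag}(\bar{H}^{-1/2}, 0)$. Condition i) is not used beyond consistency: it is implied by Step 3, since a state transition matrix always has positive determinant.
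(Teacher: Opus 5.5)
Your proposal is correct and follows essentially the same route as the paper: working in the $U$-coordinates of \eqref{eqn:H-decomp}, both arguments identify $H(T,0)^{\frac{1}{2}}\Pi_{0}H(T,0)^{\frac{1}{2}}$ with $M$, apply Lemma~\ref{lem:RDE-real}, and conclude via \eqref{eqn:stt-trans-Pi}. Your explicit choice $\Pi_{0} = H(T,0)^{+}\big[I_{n} - \Phi_{A}(0,T)\Phif\big]$ makes concrete the solvability of \eqref{eqn:Phi-H-Pi} that the paper leaves implicit in its block partition. One small note: in Step 2 the identity you actually use is $H^{1/2}H^{+} = (H^{+})^{1/2}$, not the one you cite, though both hold.
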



The proofs of Proposition~\ref{prp:RDE-sym}, Proposition~\ref{prp:RDE-norm1}, and Proposition~\ref{prp:RDE-real} are given in the Appendix. 
We point out that the control matrix $K(t)$ given by \eqref{eqn:K} for the system \eqref{sys:Phi} in Propositions \ref{prp:RDE-sym}, \ref{prp:RDE-norm1}, \ref{prp:RDE-real} is continuous in the terminal state transition matrix $\Phif$, 
since $K(t)$ is continuous in $\Pi(t)$, 
$\Pi(t)$ is continuous in $\Pi_{0}$, and 
$\Pi_{0}$ is continuous in $\Phif$.



\section{Controllability Analysis of The State Transition Matrix} \label{sec:stt-trans-contr}

In this section, we will show Theorem~\ref{thm:Phi-contr}. 
First, notice that if the controllability Gramian $H(T, 0) \succ 0$ and Assumption \ref{asm:5-piece} holds, 
then, there exist times $t_{1}, t_{2}, t_{3}, t_{4} \in (0, T)$ such that 
$0 = t_{0} < t_{1} < t_{2} < t_{3} < t_{4} < t_{5} = T$ and $H(t_{i+1}, t_{i}) \succ 0$ for $i = 0, 1, 2, 3, 4$. 
We will design an RDE \eqref{eqn:RDE} on each of the five subintervals $[t_{0}, t_{1})$, $[t_{1}, t_{2})$, $[t_{2}, t_{3})$, $[t_{3}, t_{4})$, and $[t_{4}, t_{5}]$ of $[0, T]$ and use a control of the form \eqref{eqn:K} on each of these subintervals.


Second, it follows from the composition rule of the state transition matrix,  equation~\eqref{eqn:stt-trans-Pi}, and the assumption of $H(t_{i+1}, t_{i}) \succ 0$ that 
\begin{align} 
&\Phi_{\Pi}(T, 0) 
= 
\aprod_{i=0:4} 
\Phi_{\Pi}(t_{i+1}, t_{i}) 
\nonumber \\
&= 
\aprod_{i=0:4} 
\Big\{
\Phi_{A}(t_{i+1}, t_{i}) \big[I_{n} - H(t_{i+1}, t_{i}) \Pi(t_{i})\big]
\Big\}
\label{eqn:Phi-5-piece} \\
&= 
\aprod_{i=0:4} 
\Big\{
\Phi_{A}(t_{i+1}, t_{i}) H(t_{i+1}, t_{i})^{\frac{1}{2}} 
\nonumber \\
& \hspace{5mm} \times 
\big[I_{n} - H(t_{i+1}, t_{i})^{\frac{1}{2}} \Pi(t_{i}) H(t_{i+1}, t_{i})^{\frac{1}{2}}\big] H(t_{i+1}, t_{i})^{-\frac{1}{2}}
\Big\}.
\label{eqn:Phi-5-piece-H>0}
\end{align}
Recall from Lemma \ref{lem:RDE-sym} that if $\Pi(t_{i})$ is symmetric, then, the RDE \eqref{eqn:RDE} admits a unique solution on $[t_{i}, t_{i+1}]$ if and only if $I_{n} - H(t_{i+1}, t_{i})^{\frac{1}{2}} \Pi(t_{i}) H(t_{i+1}, t_{i})^{\frac{1}{2}} \succ 0$. 
Since $H(t_{i+1}, t_{i}) \succ 0$, we can choose any symmetric $\Pi(t_{i}) \prec H(t_{i+1}, t_{i})^{-1}$.


Next, we introduce Ballantine's matrix factorization result~\cite[Theorem 5]{ballantine1968products} that will be instrumental in the proof of Theorem~\ref{thm:Phi-contr}.

\begin{lemma}\cite[Theorem 5]{ballantine1968products} 
\label{lem:ballantine}
Each matrix $M \in \GL$ can be written as a product of five positive definite real symmetric matrices. 
That is, $M = Q_{5} Q_{4} Q_{3} Q_{2} Q_{1}$, where $Q_{i} \succ 0$ for $i = 1, 2, 3, 4, 5$. 
\end{lemma}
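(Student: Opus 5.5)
The plan is to reduce the five-factor statement to a two-factor characterization, and then spend the first factor on removing degenerate cases.

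\textbf{Basic fact.} A matrix $N$ is a product of two positive definite matrices if and only if $N$ is diagonalizable with positive real eigenvalues.
\begin{itemize}
\item For necessity, $Q_{2}Q_{1}$ is similar to $Q_{1}^{\frac{1}{2}}Q_{2}Q_{1}^{\frac{1}{2}}\succ 0$.
\item For sufficiency, if $N=SDS^{-1}$ with $D$ positive diagonal, then $N=(SDS\t)(S\tinv S^{-1})$.
\end{itemize}
Call such $N$ \emph{positively diagonalizable}. Any triangular matrix with distinct positive diagonal entries is positively diagonalizable.

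\textbf{Target decomposition.} I aim to write
\[
M = Q_{5}\,N_{2}\,N_{1},
\]
with $Q_{5}\succ 0$ and $N_{1},N_{2}$ positively diagonalizable. Splitting each $N_{i}$ into two positive definite factors then gives the five factors. The role of $Q_{5}$ is to push $Q_{5}^{-1}M$ into a generic position where the two-factor step works.

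\textbf{Two-factor step.} Suppose $X\in\GL$ has all leading principal minors positive. Then $X=LDU$ with $L$ unit lower triangular, $U$ unit upper triangular, and $D$ positive diagonal. Write $D=D_{1}D_{2}$ with $D_{1},D_{2}$ positive diagonal and each having distinct diagonal entries. Then
\[
X=(LD_{1})(D_{2}U)
\]
is a product of two triangular matrices with distinct positive diagonals, hence of two positively diagonalizable matrices.

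Positive definiteness is preserved under orthogonal similarity, $VQV\t\succ 0$. Hence it suffices to find $Q_{5}\succ 0$ and an orthogonal $V$ such that $V\t Q_{5}^{-1}MV$ has all leading principal minors positive.

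\textbf{Main obstacle.} The hard step is showing that such $Q_{5}$ and $V$ exist for every $M\in\GL$. Note that the choice $Q_{5}=P$ from the polar decomposition $M=PR$ fails. For instance, with $M=-I_{2}$ one gets $R=-I_{2}$, which is not a product of two positively diagonalizable matrices, since $N_{2}=-N_{1}^{-1}$ would have negative eigenvalues. This is exactly why five factors, rather than four, are needed.

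What I would try first:
\begin{enumerate}
\item Pick $Q_{5}\succ 0$ generic, for example with distinct eigenvalues in general position relative to $M$, so that $X=Q_{5}^{-1}M$ is non-scalar and nonderogatory.
\item For such $X$ with $\det X>0$, build an orthonormal basis inductively. At each step $k$, choose $v_{k}$ in the orthogonal complement of $v_{1},\dots,v_{k-1}$ so that the $k$-th leading principal minor of $V\t XV$ is positive. A sign flip of $v_{k}$ does not help here, because that minor is quadratic in $v_{k}$.
\item At step $k$ the minor is a quadratic form in $v_{k}$ restricted to a subspace. It suffices that this form is not negative semidefinite there.
\item Show that the only way this can fail at every admissible step is when $X$ acts as a negative scalar on some invariant subspace. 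The generic choice of $Q_{5}$ is designed to exclude this.
\item The last minor equals $\det X>0$ automatically.
\end{enumerate}

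If the inductive basis construction proves delicate, the fallback is a structural route. Put $X$ in real canonical form and treat $2\times 2$ rotation-type blocks explicitly. A rotation with $|\theta|<\pi/2$ has an $LDU$ factorization with positive pivots. Larger angles, and the pairs of negative real eigenvalues forced by $\det X>0$, are handled by using the freedom in $Q_{5}$ to split the angle or to make the pair distinct.
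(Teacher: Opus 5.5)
Your preliminary reductions are correct. A matrix is a product of two positive definite matrices exactly when it is diagonalizable with positive eigenvalues, and a matrix with all leading principal minors positive splits as $(LD_1)(D_2U)$ into two such factors. (The paper gives no proof of this lemma; it only quotes Ballantine.)

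\textbf{The gap.} It is the step you flag as the main obstacle, and as you formulate it, it is false. Take $n$ even and $M=-I_n$. For every $Q_5\succ0$ we get $Q_5^{-1}M=-Q_5^{-1}\prec0$. Hence $V\t Q_5^{-1}MV=-V\t Q_5^{-1}V\prec0$ for every orthogonal $V$, and the first leading principal minor is negative. So the sufficient condition you reduce to is unattainable precisely in the case that forces a fifth factor.

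Your inductive construction breaks already at $k=1$. Some unit $v_1$ has $v_1\t Xv_1>0$ if and only if $X+X\t$ is not negative semidefinite. That is a condition on the symmetric part of $X$, not the ``negative scalar on an invariant subspace'' criterion of your step~4. No choice of $Q_5$ enforces it when $M=-I_n$. Moreover, every $X$ with a negative real eigenvalue is a negative scalar on an invariant line, so a generic $Q_5$ would not exclude your step-4 condition either.

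\textbf{How to repair it.} Restricting to orthogonal $V$ is unnecessary. For any invertible $S$,
$S(Q_4Q_3Q_2Q_1)S^{-1}=(SQ_4S\t)(S\tinv Q_3S^{-1})(SQ_2S\t)(S\tinv Q_1S^{-1})$,
and each factor is positive definite. So it suffices that $S^{-1}Q_5^{-1}MS$ have positive leading principal minors for some invertible $S$. Take $Q_5=I_n$ unless $M$ is a negative multiple of $I_n$, and any nonscalar $Q_5\succ0$ if it is.

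What then remains, and what your proposal does not prove, is the following lemma: every nonscalar real $X$ with $\det X>0$ is similar to a matrix with all leading principal minors positive. This is the real content of Ballantine's theorem, since it gives four factors for every such $X$.

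Your fallback through the real canonical form already uses a non-orthogonal similarity, which is legitimate only after this relaxation. It is also only a sketch, and it aims at the wrong difficulty. Take a block $R=\bigl[\begin{smallmatrix} a & -b\\ b & a\end{smallmatrix}\bigr]$ with $b\neq0$. In the basis $v,\,Rv-v$ (any $v\neq0$) its first column is $(1,1)\t$, so its leading minors are $1$ and $a^2+b^2$ for every angle. The genuinely delicate case is negative eigenvalues that cannot be paired into nonscalar $2\times2$ blocks, such as a component $\lambda I_{2k}$ with $\lambda<0$. Such a component must be merged with other parts of the spectrum.

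A clean way to supply the lemma is Sourour's factorization theorem (Linear Multilinear Algebra, 1986). It states that a nonscalar invertible $X$ equals $BC$, where $B$ and $C$ have any prescribed eigenvalues $\beta_1,\dots,\beta_n$ and $\gamma_1,\dots,\gamma_n$ with $\prod_i\beta_i\gamma_i=\det X$. Choosing these distinct and positive makes $B$ and $C$ exactly the two positively diagonalizable factors you need.
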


\begin{corollary} \label{cor:ballantine}
For any five matrices $M, M_{1}, M_{2}, M_{3}, M_{4} \in \GL$, there exist five positive definite real symmetric matrices $\bar{Q}_{1}, \bar{Q}_{2}, \bar{Q}_{3}, \bar{Q}_{4}, \bar{Q}_{5} \succ 0$ such that 
\begin{equation*}
M = \bar{Q}_{5} M_{4} \bar{Q}_{4} M_{3} \bar{Q}_{3} M_{2} \bar{Q}_{2} M_{1} \bar{Q}_{1}. 
\end{equation*}
\end{corollary}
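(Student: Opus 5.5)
We reduce Corollary~\ref{cor:ballantine} to Lemma~\ref{lem:ballantine} by absorbing the fixed factors $M_{1},\dots,M_{4}$ through congruence-type changes of variables. The key observation is that congruence preserves positive definiteness: for any invertible $N$ and any $Q \succ 0$, the matrix $N\t Q N$ is again symmetric positive definite.

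\textbf{Step 1: build the auxiliary matrix.} Set
\begin{equation*}
\hat{M} \triangleq M_{4}\t M M_{3}\t M_{2}\t M_{1}\t ,
\end{equation*}
or, more systematically, choose invertible matrices $N_{0},\dots,N_{5}$ so that $M_{k} = N_{k}^{-1} N_{k-1}^{-\mbox{\tiny\sf T}}$ in each slot. The simplest concrete choice is to write
\begin{equation*}
\bar{Q}_{5} M_{4} \bar{Q}_{4} M_{3} \bar{Q}_{3} M_{2} \bar{Q}_{2} M_{1} \bar{Q}_{1}
\end{equation*}
and define
\begin{align*}
L_{1} &= I_{n}, &
L_{2} &= M_{1}\t, &
L_{3} &= M_{1}\t M_{2}\t, \\
L_{4} &= M_{1}\t M_{2}\t M_{3}\t, &
L_{5} &= M_{1}\t M_{2}\t M_{3}\t M_{4}\t .
\end{align*}
With these, put $\bar{Q}_{i} = L_{i}^{-\mbox{\tiny\sf T}} Q_{i} L_{i}^{-1}$ for positive definite $Q_{i}$ still to be determined.

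\textbf{Step 2: verify the telescoping.} Since $L_{i+1} = L_{i} M_{i}\t$, we have $L_{i+1}\t = M_{i} L_{i}\t$. Hence
\begin{equation*}
L_{i+1}^{-1} M_{i} L_{i}^{-\mbox{\tiny\sf T}} = L_{i+1}^{-1} L_{i+1}\t L_{i}\t{}^{-1} L_{i}^{-\mbox{\tiny\sf T}}\cdots .
\end{equation*}
This mixes transposes awkwardly, so I will instead use a two-sided scheme. Take $\bar{Q}_{i} = R_{i} Q_{i} R_{i}\t$ and require the interface conditions
\begin{equation*}
R_{i+1}\t M_{i} R_{i} = I_{n}, \qquad i = 1,2,3,4 .
\end{equation*}
Under these conditions the full product collapses to
\begin{equation*}
R_{5} Q_{5} Q_{4} Q_{3} Q_{2} Q_{1} R_{1}\t .
\end{equation*}
The interface conditions are solvable recursively: set $R_{1} = I_{n}$, then $R_{i+1} = (M_{i} R_{i})^{-\mbox{\tiny\sf T}}$. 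Each $R_{i}$ is invertible because every $M_{i}$ is invertible.

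\textbf{Step 3: apply Ballantine's factorization.} It remains to choose the $Q_{i}$ so that $R_{5} Q_{5}\cdots Q_{1} R_{1}\t = M$. Since $R_{1} = I_{n}$, this means
\begin{equation*}
Q_{5} Q_{4} Q_{3} Q_{2} Q_{1} = R_{5}^{-1} M .
\end{equation*}
The main point to check, and the only real obstacle, is that $R_{5}^{-1} M \in \GL$, which is exactly what Lemma~\ref{lem:ballantine} needs.

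\textbf{Step 4: the determinant sign.} We have $\det R_{i+1} = 1/\det(M_{i} R_{i})$. By induction every $\det R_{i} > 0$, because every $\det M_{i} > 0$. Together with $\det M > 0$, this gives $\det(R_{5}^{-1} M) > 0$.

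\textbf{Step 5: conclude.} Lemma~\ref{lem:ballantine} then supplies $Q_{1},\dots,Q_{5} \succ 0$ with $Q_{5}Q_{4}Q_{3}Q_{2}Q_{1} = R_{5}^{-1} M$. Each $\bar{Q}_{i} = R_{i} Q_{i} R_{i}\t$ is symmetric positive definite by the congruence observation, and substituting into the product gives $M = \bar{Q}_{5} M_{4} \bar{Q}_{4} M_{3} \bar{Q}_{3} M_{2} \bar{Q}_{2} M_{1} \bar{Q}_{1}$, as required.
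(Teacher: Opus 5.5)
Your final two-sided scheme is correct and is exactly the paper's argument. The recursion $R_{1}=I_{n}$, $R_{i+1}=(M_{i}R_{i})\tinv$ reproduces the paper's congruences $\bar{Q}_{i}=R_{i}Q_{i}R_{i}\t$, and $R_{5}^{-1}M=M_{1}^{-1}M_{2}\t M_{3}^{-1}M_{4}\t M$ is precisely the matrix the paper factors with Ballantine's lemma. You should delete the unused $\hat{M}$ and the abandoned $L_{i}$ construction in Steps 1--2; the $L_{i}$ scheme does not telescope.
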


\begin{proof}
Since the matrices $M, M_{1}, M_{2}, M_{3}, M_{4} \in \GL$, we have 
$M_{1}^{-1} M_{2}\t M_{3}^{-1} M_{4}\t M \in \GL$. 
In light of Lemma \ref{lem:ballantine}, there exist five matrices $Q_{1}, Q_{2}, Q_{3}, Q_{4}, Q_{5} \succ 0$ such that 
$M_{1}^{-1} M_{2}\t M_{3}^{-1} M_{4}\t M = Q_{5} Q_{4} Q_{3} Q_{2} Q_{1}$. 
Let 
\begin{align}
\bar{Q}_{1} &\triangleq Q_{1}, 
\label{def:barQ1} \\
\bar{Q}_{2} &\triangleq M_{1}\tinv Q_{2} M_{1}^{-1}, 
\label{def:barQ2} \\
\bar{Q}_{3} &\triangleq M_{2}\tinv M_{1} Q_{3} M_{1}\t M_{2}^{-1}, 
\label{def:barQ3} \\
\bar{Q}_{4} &\triangleq M_{3}\tinv M_{2} M_{1}\tinv Q_{4} M_{1}^{-1} M_{2}\t M_{3}^{-1}, 
\label{def:barQ4} \\
\bar{Q}_{5} &\triangleq M_{4}\tinv M_{3} M_{2}\tinv M_{1} Q_{5} M_{1}\t M_{2}^{-1} M_{3}\t M_{4}^{-1}. 
\label{def:barQ5} 
\end{align}
It is straightforward to verify that $\bar{Q}_{1}, \bar{Q}_{2}, \bar{Q}_{3}, \bar{Q}_{4}, \bar{Q}_{5} \succ 0$ and 
\begin{multline*}
\bar{Q}_{5} M_{4} \bar{Q}_{4} M_{3} \bar{Q}_{3} M_{2} \bar{Q}_{2} M_{1} \bar{Q}_{1}
= 
\\
M_{4}\tinv M_{3} M_{2}\tinv M_{1} Q_{5} Q_{4} Q_{3} Q_{2} Q_{1}
= M, 
\end{multline*}
which completes the proof. 
\hfill
\end{proof}


Lastly, we will show Theorem \ref{thm:Phi-contr} using equation \eqref{eqn:Phi-5-piece-H>0} and Corollary \ref{cor:ballantine}. 


\begin{proof}[Proof of Theorem~\ref{thm:Phi-contr}]
In view of equation \eqref{eqn:Phi-5-piece-H>0}, let 
\begin{align}
M_{0} 
&\triangleq H(t_{1}, 0)^{-\frac{1}{2}}, 
\label{def:M0} \\
M_{i} 
&\triangleq H(t_{i+1}, t_{i})^{-\frac{1}{2}} \Phi_{A}(t_{i}, t_{i-1}) H(t_{i}, t_{i-1})^{\frac{1}{2}}, \hspace{2mm} i = 1, 2, 3, 4, 
\label{def:M1-4} \\
M_{5}
&\triangleq 
\Phi_{A}(T, t_{4}) H(T, t_{4})^{\frac{1}{2}}, 
\label{def:M5} \\
Y_{j}
&\triangleq 
I_{n} \! - \! H(t_{j}, t_{j-1})^{\frac{1}{2}} \Pi(t_{j-1}) H(t_{j}, t_{j-1})^{\frac{1}{2}}, j \! = \! 1, 2, 3, 4, 5. 
\label{def:Y1-5} 
\end{align}
With the above simplified notation, \eqref{eqn:Phi-5-piece-H>0} can be written as 
\begin{equation*}
\Phi_{\Pi}(T, 0) = 
M_{5} Y_{5} M_{4} Y_{4} M_{3} Y_{3} M_{2} Y_{2} M_{1} Y_{1} M_{0}. 
\end{equation*}
Since $H(T, 0) \succ 0$ and Assumption \ref{asm:5-piece} holds, we have $M_{i} \in \GL$ for $i = 0, 1, \dots, 5$. 
For $\Phif \in \GL$, we have $M_{5}^{-1} \Phif M_{0}^{-1} \in \GL$. 
It follows from Corollary \ref{cor:ballantine} that there exist $\bar{Q}_{1}, \bar{Q}_{2}, \bar{Q}_{3}, \bar{Q}_{4}, \bar{Q}_{5} \succ 0$ such that 
\begin{equation*}
\Phif = 
M_{5} \bar{Q}_{5} M_{4} \bar{Q}_{4} M_{3} \bar{Q}_{3} M_{2} \bar{Q}_{2} M_{1} \bar{Q}_{1} M_{0}. 
\end{equation*}
For each $j = 1, 2, 3, 4, 5$, let $Y_{j} = \bar{Q}_{j} \succ 0$. 
It follows that, for each $i = 0, 1, 2, 3, 4$, 
\begin{equation} \label{eqn:Pi-ini-cond}
\Pi(t_{i}) = H(t_{i+1}, t_{i})^{-\frac{1}{2}} (I_{n} - \bar{Q}_{i+1}) H(t_{i+1}, t_{i})^{-\frac{1}{2}}, 
\end{equation}
which is real symmetric. 
From Lemma \ref{lem:RDE-sym}, it follows that, for each $i = 0, 1, 2, 3, 4$, the RDE \eqref{eqn:RDE} with the initial condition \eqref{eqn:Pi-ini-cond} admits a unique solution on the subinterval $[t_{i}, t_{i+1}]$. 
Thus, for each terminal state transition matrix $\Phif \in \GL$, we can adopt the control \eqref{eqn:K} on each of the five subintervals $[t_{0}, t_{1})$, $[t_{1}, t_{2})$, $[t_{2}, t_{3})$, $[t_{3}, t_{4})$, and $[t_{4}, t_{5}]$, where $\Pi(t)$ is the solution to the RDE \eqref{eqn:RDE} on each subinterval with the initial condition \eqref{eqn:Pi-ini-cond}. 
Therefore, the state transition matrix $\Phi(t, 0)$ subject to the bilinear dynamics \eqref{sys:Phi} is controllable on the time interval $[0, T]$. 
\hfill
\end{proof}

\begin{remark} \label{rmk:stt-trans-contr}
Under the assumptions of Theorem~\ref{thm:Phi-contr}, we can construct a control for the bilinear system \eqref{sys:Phi} for any given terminal state transition matrix $\Phi(T, 0) = \Phif \in \GL$ as follows. 
First, determine the times $t_{1}, t_{2}, t_{3}, t_{4} \in (0, T)$ such that Assumption \ref{asm:5-piece} holds. 
Second, compute the matrices $M_{j}$ for $j = 0, 1, \dots, 5$ defined by \eqref{def:M0}-\eqref{def:M5}. 
Third, find five positive definite matrices $Q_{1}, Q_{2}, Q_{3}, Q_{4}, Q_{5} \succ 0$ \cite[Theorem 5]{ballantine1968products} such that 
$M_{1}^{-1} M_{2}\t M_{3}^{-1} M_{4}\t M_{5}^{-1} \Phif M_{0}^{-1} = Q_{5} Q_{4} Q_{3} Q_{2} Q_{1}$. 
Fourth, define another five matrices $\bar{Q}_{1}, \bar{Q}_{2}, \bar{Q}_{3}, \bar{Q}_{4}, \bar{Q}_{5} \succ 0$ according to \eqref{def:barQ1}-\eqref{def:barQ5}. 
Fifth, for each $i = 0, 1, 2, 3, 4$, let $\Pi(t_{i})$ be given by \eqref{eqn:Pi-ini-cond}. 
Lastly, use the control \eqref{eqn:K} on each of the five subintervals $[0, t_{1})$, $[t_{1}, t_{2})$, $[t_{2}, t_{3})$, $[t_{3}, t_{4})$, and $[t_{4}, T]$ of $[0, T]$, 
where $\Pi(t)$ satisfies the RDE \eqref{eqn:RDE} on each of these subintervals with the initial condition \eqref{eqn:Pi-ini-cond}. 
\end{remark}


\section{Reachability Analysis of The State Transition Matrix} \label{sec:stt-trans-reach}

In this section, we will show Theorem \ref{thm:Phi-reach}. 
Under Assumption \ref{asm:5-piece}, there exist times $t_{1}, t_{2}, t_{3}, t_{4} \in (0, T)$ such that 
$0 = t_{0} < t_{1} < t_{2} < t_{3} < t_{4} < t_{5} = T$ and 
$\rank H(t_{i+1}, t_{i}) = \rank H(T, 0)$ for $i = 0, 1, 2, 3, 4$. 
Since $\Phi_{A}(0, t_{i})$ is invertible, it follows that $\rank H(t_{i+1}, t_{i}) = \rank \Phi_{A}(0, t_{i}) H(t_{i+1}, t_{i}) \Phi_{A}(0, t_{i})\t = \rank H(T, 0)$. 
From the definition of the controllability Gramian given by \eqref{def:Gram-contr}, we have 
\begin{align*}
\Phi_{A} (0, t_{i}) & H(t_{i+1}, t_{i}) \Phi_{A}(0, t_{i})\t
\\
= &
\int_{t_{i}}^{t_{i+1}} \Phi_{A}(0, \tau) B(\tau) B(\tau)\t \Phi_{A}(0, \tau)\t \, \dd \tau
\\
\preceq &
\int_{0}^{T} \Phi_{A}(0, \tau) B(\tau) B(\tau)\t \Phi_{A}(0, \tau)\t \, \dd \tau
= H(T, 0). 
\end{align*}
Hence, it follows that for $i = 0, 1, 2, 3, 4$, 
\begin{equation} \label{eqn:rangeH-5-piece}
\range \Phi_{A}(0, t_{i}) H(t_{i+1}, t_{i}) \Phi_{A}(0, t_{i})\t = \range H(T, 0). 
\end{equation}

To this end, we adopt the same coordinate transformation as in the decomposition \eqref{eqn:H-decomp}. 
%
It follows from equation \eqref{eqn:rangeH-5-piece} that for $i = 1, 2, 3, 4, 5$, 
there exists $\bar{H}_{i} \in \R^{r \times r}$, $\bar{H}_{i} \succ 0$ such that 
\begin{equation} \label{eqn:H-i-star}
\Phi_{A}(0, t_{i-1}) H(t_{i}, t_{i-1}) \Phi_{A}(0, t_{i-1})\t 
= 
U
\begin{bmatrix}
\bar{H}_{i} & 0 \\
0 & 0
\end{bmatrix}
U\t, 
\end{equation}
where $U \in \R^{n \times n}$ is the same orthogonal matrix as in \eqref{eqn:H-decomp}. 
For notational simplicity, we define 
\begin{equation} \label{def:H-i-star}
\hat{H}_{i}^{*}
\triangleq 
\begin{bmatrix}
\bar{H}_{i} & 0 \\
0 & 0
\end{bmatrix}. 
\end{equation}
Similarly, for $i = 1, 2, 3, 4, 5$, we can write 
\begin{equation} \label{eqn:Pi-i-star}
\Phi_{A}(t_{i-1}, 0)\t \Pi(t_{i-1}) \Phi_{A}(t_{i-1}, 0) 
= 
U
\begin{bmatrix}
\bar{\Pi}_{i} & \tilde{\Pi}_{i} \\
\times_{i} & \star_{i} 
\end{bmatrix}
U\t, 
\end{equation}
where $U \in \R^{n \times n}$ is the same orthogonal matrix as in \eqref{eqn:H-decomp}, 
$\bar{\Pi}_{i} \in \R^{r \times r}$, $\tilde{\Pi}_{i} \in \R^{r \times (n-r)}$, and the blocks $\times_{i}$ and $\star_{i}$ do not matter. 
Let 
\begin{equation} \label{def:Pi-i-star}
\hat{\Pi}_{i}^{*}
\triangleq 
\begin{bmatrix}
\bar{\Pi}_{i} & \tilde{\Pi}_{i} \\
\times_{i} & \star_{i} 
\end{bmatrix}. 
\end{equation}
Then, we have the following result, whose proof is given in the Appendix.

\begin{lemma} \label{lem:Phi-1-2}
Let $U$ be the same orthogonal matrix as in \eqref{eqn:H-decomp}. 
Then, we have 
\begin{equation} \label{eqn:Phi-1-2}
\Phi_{\Pi}(t_{2}, 0) 
= 
\Phi_{A}(t_{2}, 0) 
U 
\big[I_{n} - \hat{H}_{2}^{*} \hat{\Pi}_{2}^{*}\big]
\big[I_{n} - \hat{H}_{1}^{*} \hat{\Pi}_{1}^{*}\big]
U\t. 
\end{equation}
\end{lemma}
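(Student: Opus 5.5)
The plan is to compose the two one-interval transition matrices, rewrite each one in coordinates based at time $0$, and then substitute the block forms \eqref{eqn:H-i-star} and \eqref{eqn:Pi-i-star}.

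By the composition rule and \eqref{eqn:stt-trans-Pi} applied on $[0,t_{1}]$ and $[t_{1},t_{2}]$ (this is the first two factors of \eqref{eqn:Phi-5-piece}), we have
\begin{equation*}
\Phi_{\Pi}(t_{2},0)=\Phi_{A}(t_{2},t_{1})\big[I_{n}-H(t_{2},t_{1})\Pi(t_{1})\big]\Phi_{A}(t_{1},0)\big[I_{n}-H(t_{1},0)\Pi(0)\big].
\end{equation*}
Next, conjugate each bracket so that it is expressed relative to time $0$. For $i=1,2$, insert $I_{n}=\Phi_{A}(t_{i-1},0)\Phi_{A}(0,t_{i-1})$ together with its transpose counterpart $I_{n}=\Phi_{A}(0,t_{i-1})\t\Phi_{A}(t_{i-1},0)\t$. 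This gives
\begin{equation*}
I_{n}-H(t_{i},t_{i-1})\Pi(t_{i-1})=\Phi_{A}(t_{i-1},0)\big[I_{n}-X_{i}Y_{i}\big]\Phi_{A}(0,t_{i-1}),
\end{equation*}
where $X_{i}\triangleq\Phi_{A}(0,t_{i-1})H(t_{i},t_{i-1})\Phi_{A}(0,t_{i-1})\t$ and $Y_{i}\triangleq\Phi_{A}(t_{i-1},0)\t\Pi(t_{i-1})\Phi_{A}(t_{i-1},0)$. The identity follows because $\Phi_{A}(0,t_{i-1})\t\Phi_{A}(t_{i-1},0)\t=I_{n}$, so the product $X_{i}Y_{i}$ equals $\Phi_{A}(0,t_{i-1})H\Pi\,\Phi_{A}(t_{i-1},0)$.

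Substituting these expressions, the intermediate factors telescope. Indeed, $\Phi_{A}(0,t_{1})\Phi_{A}(t_{1},0)=I_{n}$ and $\Phi_{A}(t_{2},t_{1})\Phi_{A}(t_{1},0)=\Phi_{A}(t_{2},0)$, and for $i=1$ we have $\Phi_{A}(t_{0},0)=I_{n}$. The result is
\begin{equation*}
\Phi_{\Pi}(t_{2},0)=\Phi_{A}(t_{2},0)\big[I_{n}-X_{2}Y_{2}\big]\big[I_{n}-X_{1}Y_{1}\big].
\end{equation*}
Finally, \eqref{eqn:H-i-star} and \eqref{eqn:Pi-i-star} say exactly that $X_{i}=U\hat{H}_{i}^{*}U\t$ and $Y_{i}=U\hat{\Pi}_{i}^{*}U\t$. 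Since $U\t U=I_{n}$, we get $I_{n}-X_{i}Y_{i}=U\big[I_{n}-\hat{H}_{i}^{*}\hat{\Pi}_{i}^{*}\big]U\t$. The inner factors $U\t U$ between the two brackets cancel, which yields \eqref{eqn:Phi-1-2}.

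The only delicate step is the bookkeeping in the conjugation: making sure the transposes land so that $\Phi_{A}(0,t_{i-1})\t$ meets $\Phi_{A}(t_{i-1},0)\t$ and cancels. The existence of the block forms themselves is already guaranteed by \eqref{eqn:rangeH-5-piece}, so it needs no further argument.
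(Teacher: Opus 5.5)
Your proposal is correct and follows essentially the same route as the paper. Both proofs compose the two one-interval factors from \eqref{eqn:stt-trans-Pi} and conjugate the bracket on $[t_{1},t_{2}]$ by $\Phi_{A}(t_{1},0)$, using $\Phi_{A}(0,t_{1})\t\Phi_{A}(t_{1},0)\t=I_{n}$. They then read off the block forms \eqref{eqn:H-i-star} and \eqref{eqn:Pi-i-star}. The only difference is that you also apply the (trivial) conjugation to the first interval, for a uniform treatment.
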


We are now ready to show Theorem \ref{thm:Phi-reach} using equation \eqref{eqn:Phi-5-piece}, equation \eqref{eqn:rangeH-5-piece}, Corollary \ref{cor:ballantine}, and Lemma \ref{lem:Phi-1-2}. 


\begin{proof}[Proof of Theorem~\ref{thm:Phi-reach}]
Let $\mathcal{F}_{T}$ denote the set of terminal state transition matrices $\Phi(T, 0)$ reachable over the time interval $[0, T]$. 
We need to show that $\mathcal{F}_{T} = \mathcal{R}_{T}$.

First, we show $\mathcal{F}_{T} \subseteq \mathcal{R}_{T}$. 
Let $\Phif \in \mathcal{F}_{T}$. 
From \cite[Lemma 3]{liu2024reachability}, it follows that there exists a matrix $L \in \R^{n \times n}$ such that $\Phif = \Phi_{A}(T, 0) + G(T, 0) L$, 
where $G(T, 0)$ is the reachability Gramian defined by \eqref{def:Gram-reach}. 
In view of \eqref{eqn:H-G}, we have $H(T, 0) = \Phi_{A}(0, T) G(T, 0) \Phi_{A}(0, T)\t$. 
Hence, we can write $\Phif = \Phi_{A}(T, 0) \big[I_{n} + H(T, 0) \hat{L}\big]$, for some $\hat{L} \in \R^{n \times n}$. 
It follows from the decomposition of $H(T, 0)$ given by \eqref{eqn:H-decomp} that 
\begin{equation*}
\Phif = 
\Phi_{A}(T, 0) 
U
\begin{bmatrix}
I_{r} + \bar{H} \bar{L} & \bar{H} \tilde{L} 
\\
0 & I_{n-r}
\end{bmatrix}
U\t, 
\end{equation*}
for some $\bar{L} \in \R^{r \times r}$ and $\tilde{L} \in \R^{r \times (n-r)}$. 
Since $\Phif, \Phi_{A}(T, 0) \in \GL$, we have $\det(I_{r} + \bar{H} \bar{L}) > 0$. 
Thus, $\Phif \in \mathcal{R}_{T}$ and $\mathcal{F}_{T} \subseteq \mathcal{R}_{T}$.

Next, we show $\mathcal{R}_{T} \subseteq \mathcal{F}_{T}$. 
Let $\Phif \in \mathcal{R}_{T}$. 
Then, we have 
\begin{equation} \label{eqn:Phif} 
\Phif = \Phi_{A}(T, 0) U 
\begin{bmatrix}
\bar{\Phi}_{\mathrm{f}} & \tilde{\Phi}_{\mathrm{f}} \\
0 & I_{n-r}
\end{bmatrix}
U\t, 
\end{equation}
where $\bar{\Phi}_{\mathrm{f}} \in \R^{r \times r}$, 
$\det(\bar{\Phi}_{\mathrm{f}}) > 0$, and 
$\tilde{\Phi}_{\mathrm{f}} \in \R^{r \times (n-r)}$.

To this end, let the solution $\Pi(t)$ of the matrix RDE \eqref{eqn:RDE} be symmetric for all $t \in [0, T]$. 
It follows that both $\hat{\Pi}_{i}^{*}$ and $\bar{\Pi}_{i}$ in \eqref{def:Pi-i-star} are symmetric. 
In light of equation \eqref{eqn:Phi-5-piece} and Lemma~\ref{lem:Phi-1-2}, it follows by induction that 
\begin{equation*}
\Phi_{\Pi}(T, 0) 
= 
\Phi_{A}(T, 0) 
U 
\Big\{
\aprod_{i=1:5} 
\big[I_{n} - \hat{H}_{i}^{*} \hat{\Pi}_{i}^{*}\big]
\Big\}
U\t. 
\end{equation*}
In light of the block structures in \eqref{def:H-i-star} and \eqref{def:Pi-i-star}, we have 
\begin{equation} \label{eqn:H-Pi-i-star}
I_{n} - \hat{H}_{i}^{*} \hat{\Pi}_{i}^{*}
= 
\begin{bmatrix}
I_{r} - \bar{H}_{i} \bar{\Pi}_{i} & - \bar{H}_{i} \tilde{\Pi}_{i} \\
0 & I_{n-r}
\end{bmatrix}. 
\end{equation}
Let $\tilde{\Pi}_{1} = \tilde{\Pi}_{2} = \tilde{\Pi}_{3} = \tilde{\Pi}_{4} = 0$. 
Then, we have 
\begin{equation*}
\aprod_{i=1:5} 
\big[I_{n} - \hat{H}_{i}^{*} \hat{\Pi}_{i}^{*}\big]
= 
\begin{bmatrix}
\aprod_{i=1:5} 
\big[I_{r} - \bar{H}_{i} \bar{\Pi}_{i}\big] & - \bar{H}_{5} \tilde{\Pi}_{5} \\
0 & I_{n-r}
\end{bmatrix}. 
\end{equation*}
Under Assumption \ref{asm:5-piece}, we have $\bar{H}_{5} \succ 0$. 
Thus, we can pick $\tilde{\Pi}_{5}$ such that $- \bar{H}_{5} \tilde{\Pi}_{5} = \tilde{\Phi}_{\mathrm{f}}$. 
Furthermore, it follows from Lemma \ref{lem:RDE-sym} that the RDE \eqref{eqn:RDE} admits a solution $\Pi(t)$ on the subinterval $[t_{i-1}, t_{i}]$ if and only if the eigenvalues of $I_{n} - H(t_{i}, t_{i-1}) \Pi(t_{i-1})$ are all positive. 
In view of the equations \eqref{eqn:H-i-star}-\eqref{def:Pi-i-star}, this is equivalent to the statement that the eigenvalues of $I_{n} - \hat{H}_{i}^{*} \hat{\Pi}_{i}^{*}$ are all positive. 
From \eqref{eqn:H-Pi-i-star}, it follows that the eigenvalues of $I_{r} - \bar{H}_{i} \bar{\Pi}_{i}$ are all positive. 
Now, define 
\begin{equation} \label{def:barM0-5}
\bar{M}_{0} 
\triangleq 
\bar{H}_{1}^{-\frac{1}{2}}, 
\quad
\bar{M}_{i} 
\triangleq 
\bar{H}_{i+1}^{-\frac{1}{2}} \bar{H}_{i}^{\frac{1}{2}}, \hspace{2mm} i = 1, 2, 3, 4, 
\quad
\bar{M}_{5} 
\triangleq 
\bar{H}_{5}^{\frac{1}{2}}. 
\end{equation}
Since $\det(\bar{\Phi}_{\mathrm{f}}) > 0$ and, for $i = 0, 1, \dots, 5$, $\det(\bar{M}_{i}) > 0$, 
it follows from Corollary \ref{cor:ballantine} that 
there exist five positive definite matrices $\bar{Q}_{1}, \bar{Q}_{2}, \bar{Q}_{3}, \bar{Q}_{4}, \bar{Q}_{5} \succ 0$ such that 
\begin{equation*}
\bar{\Phi}_{\mathrm{f}} = 
\bar{M}_{5} \bar{Q}_{5} \bar{M}_{4} \bar{Q}_{4} \bar{M}_{3} \bar{Q}_{3} \bar{M}_{2} \bar{Q}_{2} \bar{M}_{1} \bar{Q}_{1} \bar{M}_{0}. 
\end{equation*}
For $i = 1, 2, 3, 4, 5$, we can pick $\bar{\Pi}_{i}$ such that $I_{r} - \bar{H}_{i}^{\frac{1}{2}} \bar{\Pi}_{i} \bar{H}_{i}^{\frac{1}{2}} = \bar{Q}_{i} \succ 0$. 
With the above choices of $\bar{\Pi}_{i}$ and $\tilde{\Pi}_{i}$, it is not difficult to verify that $\Phif = \Phi_{\Pi}(T, 0)$. 
Thus, $\Phif$ is reachable over the time interval $[0, T]$ via the control \eqref{eqn:K} on each of the five subintervals $[t_{0}, t_{1})$, $[t_{1}, t_{2})$, $[t_{2}, t_{3})$, $[t_{3}, t_{4})$, and $[t_{4}, t_{5}]$, where $\Pi(t)$ is the solution to the RDE \eqref{eqn:RDE} on each subinterval with the chosen initial condition $\Pi(t_{i})$ for $i = 0, 1, 2, 3, 4$. 
Therefore, $\Phif \in \mathcal{F}_{T}$ and $\mathcal{R}_{T} \subseteq \mathcal{F}_{T}$. 
\hfill
\end{proof}

\begin{remark} \label{rmk:stt-trans-reach}
Under Assumption \ref{asm:5-piece}, we can construct a control of the bilinear system \eqref{sys:Phi} for any given terminal state transition matrix $\Phi(T, 0) = \Phif \in \mathcal{R}_{T}$ as follows. 
First, determine the times $t_{1}, t_{2}, t_{3}, t_{4} \in (0, T)$ such that Assumption \ref{asm:5-piece} holds. 
Second, find the matrices $U$, $\bar{H}_{i}$ for $i = 1, 2, 3, 4, 5$, $\bar{\Phi}_{\mathrm{f}}$, and $\tilde{\Phi}_{\mathrm{f}}$ based on the decompositions \eqref{eqn:H-i-star} and \eqref{eqn:Phif}. 
Third, compute the matrices $\bar{M}_{j}$ for $j = 0, 1, \dots, 5$ defined by \eqref{def:barM0-5}. 
Fourth, find five matrices $Q_{1}, Q_{2}, Q_{3}, Q_{4}, Q_{5} \succ 0$ \cite[Theorem 5]{ballantine1968products} such that 
$\bar{M}_{1}^{-1} \bar{M}_{2}\t \bar{M}_{3}^{-1} \bar{M}_{4}\t \bar{M}_{5}^{-1} \bar{\Phi}_{\mathrm{f}} \bar{M}_{0}^{-1} = Q_{5} Q_{4} Q_{3} Q_{2} Q_{1}$. 
Fifth, define another five matrices $\bar{Q}_{1}, \bar{Q}_{2}, \bar{Q}_{3}, \bar{Q}_{4}, \bar{Q}_{5} \succ 0$ according to \eqref{def:barQ1}-\eqref{def:barQ5}, where each $M_{i}$ is replaced with $\bar{M}_{i}$ for $i = 1, 2, 3, 4$. 
Sixth, for $i = 1, 2, 3, 4, 5$, pick $\bar{\Pi}_{i} = \bar{H}_{i}^{-\frac{1}{2}} \big(I_{r} - \bar{Q}_{i}\big) \bar{H}_{i}^{-\frac{1}{2}}$. 
Seventh, let $\tilde{\Pi}_{1} = \tilde{\Pi}_{2} = \tilde{\Pi}_{3} = \tilde{\Pi}_{4} = 0$ and 
$\tilde{\Pi}_{5} = - \bar{H}_{5}^{-1} \tilde{\Phi}_{\mathrm{f}}$. 
Eighth, for each $i = 1, 2, 3, 4, 5$, choose $\Pi(t_{i-1})$ to be any symmetric matrix that satisfies \eqref{eqn:Pi-i-star}. 
Lastly, apply the control \eqref{eqn:K} on each of the five subintervals $[0, t_{1})$, $[t_{1}, t_{2})$, $[t_{2}, t_{3})$, $[t_{3}, t_{4})$, and $[t_{4}, T]$, 
where $\Pi(t)$ satisfies the RDE \eqref{eqn:RDE} on each of these subintervals with the above chosen initial conditions $\Pi(t_{i-1})$. 
\end{remark}


\section{Reachability Analysis of  The State Covariance Matrix} \label{sec:stt-covar-reach}

In this section, we show Theorem~\ref{thm:Sigma-reach}, Corollary~\ref{cor:Sigma-reach}, and Corollary~\ref{cor:Sigma-contr}. 
We will use the following results to show Theorem \ref{thm:Sigma-reach}.

\begin{lemma} \label{lem:pos-def}
Given two $n \times n$ positive definite real symmetric matrices $W, Q \succ 0$, there exists another $n \times n$ positive definite real symmetric matrix $Y \succ 0$ such that 
\begin{equation*}
W = Y Q Y. 
\end{equation*}
\end{lemma}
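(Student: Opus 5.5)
The plan is to write down $Y$ in closed form, namely the matrix geometric mean of $W$ and $Q^{-1}$, and then check the identity directly. Since $Q \succ 0$, it has a unique positive definite square root $Q^{\frac{1}{2}}$, which is invertible. Multiplying the target equation $W = Y Q Y$ on the left and right by $Q^{\frac{1}{2}}$ gives
\begin{equation*}
Q^{\frac{1}{2}} W Q^{\frac{1}{2}} = \big(Q^{\frac{1}{2}} Y Q^{\frac{1}{2}}\big)^{2}.
\end{equation*}
This reduces the problem to taking a square root of the congruence-transformed matrix $Q^{\frac{1}{2}} W Q^{\frac{1}{2}}$.

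Because $W \succ 0$ and $Q^{\frac{1}{2}}$ is invertible and symmetric, the matrix $Q^{\frac{1}{2}} W Q^{\frac{1}{2}}$ is symmetric positive definite. It therefore has a unique positive definite square root $Z \triangleq \big(Q^{\frac{1}{2}} W Q^{\frac{1}{2}}\big)^{\frac{1}{2}} \succ 0$. I would then define
\begin{equation*}
Y \triangleq Q^{-\frac{1}{2}} \big(Q^{\frac{1}{2}} W Q^{\frac{1}{2}}\big)^{\frac{1}{2}} Q^{-\frac{1}{2}}.
\end{equation*}

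Two checks finish the argument. First, $Y$ is a congruence of $Z \succ 0$ by the invertible symmetric matrix $Q^{-\frac{1}{2}}$, so $Y$ is symmetric positive definite. Second, the identity holds by direct computation:
\begin{equation*}
Y Q Y = Q^{-\frac{1}{2}} Z Q^{-\frac{1}{2}} Q Q^{-\frac{1}{2}} Z Q^{-\frac{1}{2}} = Q^{-\frac{1}{2}} Z^{2} Q^{-\frac{1}{2}} = W.
\end{equation*}

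I do not expect a real obstacle here. The only point needing care is the reduction step: recognizing that conjugating by $Q^{\frac{1}{2}}$ turns the quadratic matrix equation $Y Q Y = W$ into a plain square-root problem, which is solvable because of the existence of positive definite square roots. In addition, this $Y$ is unique among positive definite solutions, and it depends continuously on $(W, Q)$ because the square root is continuous. That continuity is what the subsequent proof of Theorem~\ref{thm:Sigma-reach} needs to obtain a control continuous in $\Sigma_{0}$ and $\Sigmaf$.
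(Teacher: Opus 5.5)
Your proof is correct and is essentially the paper's: your $Y = Q^{-\frac{1}{2}}\big(Q^{\frac{1}{2}} W Q^{\frac{1}{2}}\big)^{\frac{1}{2}} Q^{-\frac{1}{2}}$ is exactly the paper's second formula for $Y$. The only difference is in the verification. The paper checks $YQY = W$ using the equivalent form $W^{\frac{1}{2}}\big(W^{\frac{1}{2}} Q W^{\frac{1}{2}}\big)^{-\frac{1}{2}} W^{\frac{1}{2}}$ and then proves the two forms coincide, whereas you verify the identity directly on the $Q$-form, which is a bit shorter.
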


\begin{proof}
We can choose 
\begin{align}
Y 
&= 
W^{\frac{1}{2}}
\big(W^{\frac{1}{2}} Q W^{\frac{1}{2}}\big)^{-\frac{1}{2}}
W^{\frac{1}{2}} 
\label{eqn:pos-def1}
\\
&= 
Q^{-\frac{1}{2}} 
\big(Q^{\frac{1}{2}} W Q^{\frac{1}{2}}\big)^{\frac{1}{2}}
Q^{-\frac{1}{2}}. 
\label{eqn:pos-def2}
\end{align}
With \eqref{eqn:pos-def1}, it is easy to verify that $Y Q Y = W$ and $Y \succ 0$. 
To show \eqref{eqn:pos-def2}, we can write 
\begin{align*}
& Q^{\frac{1}{2}} W Q^{\frac{1}{2}} 
= 
Q^{\frac{1}{2}} W^{\frac{1}{2}} W^{\frac{1}{2}} Q^{\frac{1}{2}} 
\\
&= 
Q^{\frac{1}{2}} W^{\frac{1}{2}} 
\big(W^{\frac{1}{2}} Q W^{\frac{1}{2}}\big)^{-\frac{1}{2}}
W^{\frac{1}{2}} Q W^{\frac{1}{2}}
\big(W^{\frac{1}{2}} Q W^{\frac{1}{2}}\big)^{-\frac{1}{2}}
W^{\frac{1}{2}} Q^{\frac{1}{2}}. 
\end{align*}
Thus, 
\begin{equation} \label{eqn:Q-W-Q}
\big(Q^{\frac{1}{2}} W Q^{\frac{1}{2}}\big)^{\frac{1}{2}} = Q^{\frac{1}{2}} W^{\frac{1}{2}} 
\big(W^{\frac{1}{2}} Q W^{\frac{1}{2}}\big)^{-\frac{1}{2}}
W^{\frac{1}{2}} Q^{\frac{1}{2}}. 
\end{equation}
Multiplying \eqref{eqn:Q-W-Q} by $Q^{-\frac{1}{2}}$ from both the left side and the right side yields the desired equality. 
\hfill
\end{proof}


\begin{lemma} \label{lem:sets}
Let $Q, N \in \R^{n \times n}$ with $Q \succ 0$ and $N \succeq 0$. 
Let $P_{N} \in \R^{n \times n}$ denote the orthogonal projection onto $\ker N$. 
Then, the following sets are equal. 
\begin{enumerate}[leftmargin=*]

\item[i)] 
$\mathcal{S}_{1} \triangleq \big\{\Sigma \succ 0 : P_{N} \Sigma P_{N} = P_{N} Q P_{N}\big\}$. 

\item[ii)] 
$\mathcal{S}_{2} \triangleq \big\{\Sigma \succ 0 : \Sigma = (I_{n} + N Z) Q (I_{n} + Z\t N), Z \in \R^{n \times n}\big\}$. 

\item[iii)] 
$\mathcal{S}_{3} \triangleq \big\{\Sigma \succ 0 : \Sigma = (I_{n} + N Y) Q (I_{n} + Y N), Y \in \R^{n \times n}, Y = Y\t, I_{n} + N^{\frac{1}{2}} Y N^{\frac{1}{2}} \succ 0\big\}$. 

\end{enumerate}
\end{lemma}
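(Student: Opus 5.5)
We prove the chain $\mathcal{S}_{3} \subseteq \mathcal{S}_{2} \subseteq \mathcal{S}_{1} \subseteq \mathcal{S}_{3}$. Throughout, write $P \triangleq P_{N}$ and note that $PN = NP = 0$ and $P N^{\frac{1}{2}} = 0$, since $\ker N = \ker N^{\frac{1}{2}}$ and $N$ is symmetric.

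\emph{The easy inclusions.} For $\mathcal{S}_{3} \subseteq \mathcal{S}_{2}$, take $Z = Y$; because $Y = Y\t$, we have $Y N = Z\t N$. For $\mathcal{S}_{2} \subseteq \mathcal{S}_{1}$, start from $\Sigma = (I_{n} + N Z) Q (I_{n} + Z\t N)$. Multiplying by $P$ on both sides and using $PN = 0$ and $NP = 0$ gives $P \Sigma P = P Q P$.

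\emph{The inclusion $\mathcal{S}_{1} \subseteq \mathcal{S}_{3}$.} This is the main step. Let $\Sigma \succ 0$ satisfy $P \Sigma P = P Q P$, and let $r = \rank N$.
\begin{enumerate}[leftmargin=*]
\item[1)] Choose an orthogonal $V$ with $N = V \diag(\bar{N}, 0) V\t$ and $\bar{N} \succ 0$. In these coordinates $P = V \diag(0, I_{n-r}) V\t$.
\item[2)] Set $Y = V \diag(\bar{N}^{-\frac{1}{2}} \bar{X} \bar{N}^{-\frac{1}{2}}, 0) V\t$ with $\bar{X} = \bar{X}\t$. Then $I_{n} + N^{\frac{1}{2}} Y N^{\frac{1}{2}} = V \diag(I_{r} + \bar{X}, I_{n-r}) V\t$. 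Also $I_{n} + N Y = V E V\t$, where $E$ is block lower triangular with diagonal blocks $\bar{N}^{\frac{1}{2}} (I_{r} + \bar{X}) \bar{N}^{-\frac{1}{2}}$ and $I_{n-r}$, and $E_{21} = 0$.
\item[3)] This choice of $Y$ makes $I_{n} + NY$ a congruence that leaves the $(2,2)$ block of $Q$ untouched. That does not match a general $\Sigma$, whose off-diagonal block is free. So $Y$ must be allowed a nonzero off-diagonal block $\bar{Y}_{12}$, placed so that $N Y$ has a nonzero $(1,2)$ block. The requirements $I_{n} + N^{\frac{1}{2}} Y N^{\frac{1}{2}} \succ 0$ and $Y = Y\t$ involve only the $(1,1)$ block of $Y$ (after conjugation by $\bar{N}^{\frac{1}{2}}$); the off-diagonal block of $Y$ is free apart from symmetry.
\item[4)] Write $\hat{Y} = V\t Y V = \begin{bmatrix} Y_{11} & Y_{12} \\ Y_{12}\t & Y_{22} \end{bmatrix}$. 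Then
\begin{equation*}
I_{n} + N Y = V \begin{bmatrix} I_{r} + \bar{N} Y_{11} & \bar{N} Y_{12} \\ 0 & I_{n-r} \end{bmatrix} V\t .
\end{equation*}
Write $\hat{\Sigma} = V\t \Sigma V$ and $\hat{Q} = V\t Q V$ in the same $2 \times 2$ block form. The constraint gives $\hat{\Sigma}_{22} = \hat{Q}_{22} \triangleq S \succ 0$.
\item[5)] Both $\hat{\Sigma}$ and $\hat{Q}$ factor by block LDU as $\begin{bmatrix} I_{r} & K \\ 0 & I_{n-r} \end{bmatrix} \diag(C, S) \begin{bmatrix} I_{r} & 0 \\ K\t & I_{n-r} \end{bmatrix}$. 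For $\hat{\Sigma}$ take $K_{\Sigma} = \hat{\Sigma}_{12} S^{-1}$ and $C_{\Sigma} = \hat{\Sigma}_{11} - \hat{\Sigma}_{12} S^{-1} \hat{\Sigma}_{12}\t \succ 0$, the Schur complement. Define $K_{Q}$ and $C_{Q}$ from $\hat{Q}$ in the same way.
\item[6)] Substituting into the congruence $\hat{\Sigma} = (I_{n} + \hat{N} \hat{Y}) \hat{Q} (I_{n} + \hat{Y} \hat{N})$ reduces it to two equations:
\begin{equation*}
(I_{r} + \bar{N} Y_{11}) C_{Q} (I_{r} + Y_{11} \bar{N}) = C_{\Sigma}, \qquad (I_{r} + \bar{N} Y_{11}) K_{Q} + \bar{N} Y_{12} = K_{\Sigma}.
\end{equation*}
\item[7)] For the first equation, set $W = \bar{N}^{-\frac{1}{2}} C_{\Sigma} \bar{N}^{-\frac{1}{2}}$ and $Q' = \bar{N}^{\frac{1}{2}} C_{Q} \bar{N}^{\frac{1}{2}}$. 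By Lemma~\ref{lem:pos-def} there is $\bar{Y} \succ 0$ with $\bar{Y} Q' \bar{Y} = W$. Put $I_{r} + \bar{N}^{\frac{1}{2}} Y_{11} \bar{N}^{\frac{1}{2}} = \bar{Y}$. This $Y_{11}$ is symmetric and satisfies the positivity condition, and the first equation holds.
\item[8)] For the second equation, set $Y_{12} = \bar{N}^{-1} \big(K_{\Sigma} - (I_{r} + \bar{N} Y_{11}) K_{Q}\big)$, which is possible because $\bar{N}$ is invertible. Finally set $Y_{22} = 0$. This gives a symmetric $Y$, and $I_{n} + N^{\frac{1}{2}} Y N^{\frac{1}{2}} = V \diag(\bar{Y}, I_{n-r}) V\t \succ 0$, so $\Sigma \in \mathcal{S}_{3}$.
\end{enumerate}

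The main obstacle is the bookkeeping in step~6, namely checking that the block triangular structure separates the congruence into the Schur-complement equation and the linear equation for $Y_{12}$. Once that is verified, Lemma~\ref{lem:pos-def} handles the only nonlinear part.
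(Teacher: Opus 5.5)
Your proof follows the paper's argument essentially step for step. You diagonalize $N$, use $P_{N} \Sigma P_{N} = P_{N} Q P_{N}$ to match the $(2,2)$ blocks, and split the congruence into a Schur-complement equation (solved with Lemma~\ref{lem:pos-def}) and a linear equation for the off-diagonal block of $Y$. Your equation $(I_{r} + \bar{N} Y_{11}) K_{Q} + \bar{N} Y_{12} = K_{\Sigma}$ is the paper's \eqref{eqn:Y2} multiplied on the right by $S^{-1}$.

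There is one slip in step 7 that must be fixed. Since $I_{r} + \bar{N} Y_{11} = \bar{N}^{\frac{1}{2}} \bar{Y} \bar{N}^{-\frac{1}{2}}$, the first equation is equivalent to
$\bar{Y} \big(\bar{N}^{-\frac{1}{2}} C_{Q} \bar{N}^{-\frac{1}{2}}\big) \bar{Y} = \bar{N}^{-\frac{1}{2}} C_{\Sigma} \bar{N}^{-\frac{1}{2}}$.
So you need $Q' = \bar{N}^{-\frac{1}{2}} C_{Q} \bar{N}^{-\frac{1}{2}}$, as in the paper's \eqref{eqn:M1}, not $\bar{N}^{\frac{1}{2}} C_{Q} \bar{N}^{\frac{1}{2}}$. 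With your $Q'$ the first equation fails in general. For example, in the scalar case with $\bar{N} = n \neq 1$ you would get $(1 + n Y_{11})^{2} C_{Q} = C_{\Sigma}/n^{2}$ instead of $C_{\Sigma}$.

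Steps 2--3 are a discarded false start and can be dropped. With a block-diagonal $Y$, the matrix $E$ there is block diagonal, not block lower triangular.
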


\begin{proof}
Clearly, $\mathcal{S}_{3} \subseteq \mathcal{S}_{2} \subseteq \mathcal{S}_{1}$. 
We only need to show $\mathcal{S}_{1} \subseteq \mathcal{S}_{3}$. 
Let $W \in \mathcal{S}_{1}$. 
Let $\rank N = m$. 
Since $N \succeq 0$, with an appropriate transformation of coordinates, we can assume, without loss of generality, that 
\begin{equation} \label{eqn:N1}
N = 
\begin{bmatrix}
N_{1} & 0 \\
0 & 0
\end{bmatrix}, 
\end{equation}
where $N_{1} \in \R^{m \times m}$, $N_{1} \succ 0$. 
It follows that the orthogonal projection $P_{N}$ is 
\begin{equation*}
P_{N} = 
\begin{bmatrix}
0 & 0 \\
0 & I_{n-m}
\end{bmatrix}. 
\end{equation*}
We may partition the matrices $Q$, $W$, and $Y$ the same way as $N$, so that 
\begin{equation} \label{eqn:Q-W-Y}
Q = 
\begin{bmatrix}
Q_{1} & Q_{2} \\
Q_{2}\t & Q_{4}
\end{bmatrix}, \quad 
W = 
\begin{bmatrix}
W_{1} & W_{2} \\
W_{2}\t & W_{4}
\end{bmatrix}, \quad 
Y = 
\begin{bmatrix}
Y_{1} & Y_{2} \\
\star & \star
\end{bmatrix}, 
\end{equation}
where $Q_{1}, W_{1}, Y_{1} \in \R^{m \times m}$, $Q_{2}, W_{2}, Y_{2} \in \R^{m \times (n-m)}$, $Q_{4}, W_{4} \in \R^{(n-m) \times (n-m)}$, and the $\star$ blocks do not matter.

Since $W \in \mathcal{S}_{1}$, we have $P_{N} W P_{N} = P_{N} Q P_{N}$. 
It follows that $W_{4} = Q_{4} \succ 0$. 
Since $Q, W \succ 0$, we also have that the Schur complements 
\begin{align} 
\Theta_{1} &\triangleq Q_{1} - Q_{2} Q_{4}^{-1} Q_{2}\t \succ 0, 
\label{def:Theta1}
\\
\Xi_{1} &\triangleq W_{1} - W_{2} W_{4}^{-1} W_{2}\t \succ 0. 
\label{def:Xi1}
\end{align}
Since $N_{1}, \Theta_{1}, \Xi_{1} \succ 0$, 
it follows that these two matrices $N_{1}^{-\frac{1}{2}} \Theta_{1} N_{1}^{-\frac{1}{2}}, N_{1}^{-\frac{1}{2}} \Xi_{1} N_{1}^{-\frac{1}{2}} \succ 0$. 
From Lemma \ref{lem:pos-def}, it follows that there exists $M_{1} \succ 0$ such that 
\begin{equation} \label{eqn:M1}
N_{1}^{-\frac{1}{2}} \Xi_{1} N_{1}^{-\frac{1}{2}} = M_{1} N_{1}^{-\frac{1}{2}} \Theta_{1} N_{1}^{-\frac{1}{2}} M_{1}. 
\end{equation}
Since $N_{1} \succ 0$, let $Y_{1}$ be the unique real symmetric matrix that satisfies 
\begin{equation} \label{eqn:Y1}
I_{m} + N_{1}^{\frac{1}{2}} Y_{1} N_{1}^{\frac{1}{2}} = M_{1} \succ 0. 
\end{equation}
It follows immediately that 
\begin{equation} \label{eqn:I+NYN>0}
I_{n} + N^{\frac{1}{2}} Y N^{\frac{1}{2}} 
= 
\begin{bmatrix}
I_{m} + N_{1}^{\frac{1}{2}} Y_{1} N_{1}^{\frac{1}{2}} & 0 \\
0 & I_{n-m}
\end{bmatrix}
\succ 0. 
\end{equation}
Since $N_{1}, Q_{4} \succ 0$ and the matrices $N_{1}, Y_{1}, Q, W$ are fixed, let $Y_{2}$ be the unique real matrix such that 
\begin{equation} \label{eqn:Y2}
\big(I_{m} + N_{1} Y_{1}\big) Q_{2} + N_{1} Y_{2} Q_{4} = W_{2}. 
\end{equation}

With the above choices of $Y_{1}$ and $Y_{2}$ given by \eqref{eqn:Y1} and \eqref{eqn:Y2}, we need to verify that $W = (I_{n} + N Y) Q (I_{n} + Y N)$. 
In view of the block partitions of the matrices $N$, $Y$, and $Q$, we can write 
\begin{align*}
&\big(I_{n} + N Y\big) Q \big(I_{n} + Y N\big) 
\triangleq 
\begin{bmatrix}
\Sigma_{1} & \Sigma_{2} \\
\Sigma_{2}\t & \Sigma_{4}
\end{bmatrix}
= 
\\
&\begin{bmatrix}
I_{m} + N_{1} Y_{1} & N_{1} Y_{2} \\
0 & I_{n-m}
\end{bmatrix}
\begin{bmatrix}
Q_{1} & Q_{2} \\
Q_{2}\t & Q_{4}
\end{bmatrix}
\begin{bmatrix}
I_{m} + Y_{1} N_{1} & 0 \\
Y_{2}\t N_{1} & I_{n-m}
\end{bmatrix}. 
\end{align*}
First, we can verify that $\Sigma_{2} = (I_{m} + N_{1} Y_{1}) Q_{2} + N_{1} Y_{2} Q_{4} = W_{2}$ by \eqref{eqn:Y2}. 
Second, we can easily verify that $\Sigma_{4} = Q_{4} = W_{4}$. 
Lastly, instead of verifying $\Sigma_{1}$ directly, we check the Schur complement $\Sigma_{1} - \Sigma_{2} \Sigma_{4}^{-1} \Sigma_{2}\t$. 
After canceling some common terms in $\Sigma_{1}$ and $\Sigma_{2} \Sigma_{4}^{-1} \Sigma_{2}\t$, it follows from \eqref{def:Theta1}-\eqref{eqn:Y1} that 
\begin{align*}
&\Sigma_{1} - \Sigma_{2} \Sigma_{4}^{-1} \Sigma_{2}\t 
\\
&= \big(I_{m} + N_{1} Y_{1}\big) \big(Q_{1} - Q_{2} Q_{4}^{-1} Q_{2}\t\big) \big(I_{m} + Y_{1} N_{1}\big)
\\
&= 
N_{1}^{\frac{1}{2}} \big(I_{m} + N_{1}^{\frac{1}{2}} Y_{1} N_{1}^{\frac{1}{2}}\big) N_{1}^{-\frac{1}{2}} \Theta_{1} N_{1}^{-\frac{1}{2}} \big(I_{m} + N_{1}^{\frac{1}{2}} Y_{1} N_{1}^{\frac{1}{2}}\big) N_{1}^{\frac{1}{2}}
\\
&= 
N_{1}^{\frac{1}{2}} M_{1} N_{1}^{-\frac{1}{2}} \Theta_{1} N_{1}^{-\frac{1}{2}} M_{1} N_{1}^{\frac{1}{2}}
\\
&= 
N_{1}^{\frac{1}{2}} 
N_{1}^{-\frac{1}{2}} \Xi_{1} N_{1}^{-\frac{1}{2}} 
N_{1}^{\frac{1}{2}}
\\
&= W_{1} - W_{2} W_{4}^{-1} W_{2}\t. 
\end{align*}
Thus, we have verified that $W = (I_{n} + N Y) Q (I_{n} + Y N)$. 
Since $Y_{1} = Y_{1}\t$ and $I_{n} + N^{\frac{1}{2}} Y N^{\frac{1}{2}} \succ 0$ by \eqref{eqn:I+NYN>0}, we have shown that $W \in \mathcal{S}_{3}$. 
Hence, $\mathcal{S}_{1} \subseteq \mathcal{S}_{3}$. 
Therefore, $\mathcal{S}_{1} = \mathcal{S}_{2} = \mathcal{S}_{3}$. 
\hfill
\end{proof}


Now, we are ready to show Theorem \ref{thm:Sigma-reach}.

\begin{proof}[Proof of Theorem \ref{thm:Sigma-reach}]
Let $\mathcal{M}_{T}$ denote the set of terminal state covariance matrices $\Sigma(T)$ of \eqref{sys:LTV-OL} that are reachable from a given initial state covariance $\Sigma(0) = \Sigma_{0} \succ 0$ over the time interval $[0, T]$. 
We need to show that $\mathcal{M}_{T} = \mathcal{S}_{T}$.

First, we show $\mathcal{M}_{T} \subseteq \mathcal{S}_{T}$. 
Let $\Sigmaf \in \mathcal{M}_{T}$. 
In light of equation \eqref{sol:Sig-Phi}, we can write $\Sigmaf = \Phi(T, 0) \Sigma_{0} \Phi(T, 0)\t$ for some reachable state transition matrix $\Phi(T, 0)$ with the bilinear dynamics \eqref{sys:Phi}. 
Since $\Phi(T, 0)$ is invertible and $\Sigma_{0} \succ 0$, we have $\Sigmaf \succ 0$. 
In view of \cite[Lemma 3]{liu2024reachability}, there exists a matrix $L \in \R^{n \times n}$ such that $\Phi(T, 0) = \Phi_{A}(T, 0) + G(T, 0) L$, 
where $G(T, 0)$ is the reachability Gramian defined by \eqref{def:Gram-reach}. 
Recall that $P_{G}$ is the orthogonal projection onto $\ker G(T, 0)$. 
Hence, we have 
$P_{G} \Sigmaf P_{G} = 
P_{G} \Phi_{A}(T, 0) \Sigma_{0} \Phi_{A}(T, 0)\t P_{G}$. 
Thus, $\Sigmaf \in \mathcal{S}_{T}$ and $\mathcal{M}_{T} \subseteq \mathcal{S}_{T}$.

Next, we show $\mathcal{S}_{T} \subseteq \mathcal{M}_{T}$. 
For notational simplicity, let 
\begin{equation} \label{def:notation}
G_{T} \triangleq G(T, 0), \quad 
H_{T} \triangleq H(T, 0), \quad 
\Phi_{T}^{A} \triangleq \Phi_{A}(T, 0). 
\end{equation}
In light of Lemma~\ref{lem:sets}, we have 
\begin{multline*}
\mathcal{S}_{T} = 
\big\{\Sigma \succ 0 : \Sigma = \big(I_{n} + G_{T} Y\big) \Phi_{T}^{A} \Sigma_{0} (\Phi_{T}^{A})\t \big(I_{n} + Y G_{T}\big), 
\\
Y \in \R^{n \times n}, Y = Y\t, I_{n} + G_{T}^{\frac{1}{2}} Y G_{T}^{\frac{1}{2}} \succ 0\big\}. 
\end{multline*}
Let $\Sigmaf \in \mathcal{S}_{T}$. 
Since $G_{T} = \Phi_{T}^{A} H_{T} (\Phi_{T}^{A})\t$, we can write 
\begin{align}
&\Sigmaf 
= 
\big(I_{n} + G_{T} Y\big) \Phi_{T}^{A} \Sigma_{0} (\Phi_{T}^{A})\t \big(I_{n} + Y G_{T}\big)
\nonumber \\
&= 
\Phi_{T}^{A} \big(I_{n} + H_{T} (\Phi_{T}^{A})\t Y \Phi_{T}^{A}\big) \Sigma_{0} \big(I_{n} + (\Phi_{T}^{A})\t Y \Phi_{T}^{A} H_{T}\big) (\Phi_{T}^{A})\t
\nonumber \\
&= 
\Phi_{T}^{A} \big(I_{n} - H_{T} V\big) \Sigma_{0} \big(I_{n} - V H_{T}\big) (\Phi_{T}^{A})\t, 
\label{eqn:S-T}
\end{align}
where $V \triangleq - (\Phi_{T}^{A})\t Y \Phi_{T}^{A} = V\t$. 
Since $I_{n} + G_{T}^{\frac{1}{2}} Y G_{T}^{\frac{1}{2}} \succ 0$, it follows from Lemma \ref{lem:spec} in the Appendix that the eigenvalues of $I_{n} + G_{T} Y$ are all positive. 
It follows that the eigenvalues of $I_{n} - H_{T} V = (\Phi_{T}^{A})^{-1} \big(I_{n} + G_{T} Y\big) \Phi_{T}^{A}$ are all positive. 
From Lemma \ref{lem:RDE-sym}, it follows that with the initial condition $\Pi(0) = V$, the RDE \eqref{eqn:RDE} admits a unique solution $\Pi(t)$ on $[0, T]$. 
In view of \eqref{eqn:stt-trans-Pi}, the state transition matrix of $A(\tau) - B(\tau) B(\tau)\t \Pi(\tau)$ from time $0$ to time $T$ is 
$\Phi_{\Pi}(T, 0) = \Phi_{T}^{A} (I_{n} - H_{T} V)$. 
Hence, $\Sigmaf$ is a reachable terminal state covariance matrix of \eqref{sys:LTV-OL} from the given initial state covariance matrix $\Sigma_{0}$ over the time interval $[0, T]$, using the control $u(t) = - B(t)\t \Pi(t) x(t)$. 
Therefore, $\Sigmaf \in \mathcal{M}_{T}$ and $\mathcal{S}_{T} \subseteq \mathcal{M}_{T}$. 
\hfill
\end{proof}

\begin{remark} \label{rmk:stt-covar-reach}
We can construct a control of the state covariance system \eqref{sys:Sigma} for any given terminal state covariance matrix $\Sigma(T) = \Sigmaf \in \mathcal{S}_{T}$ as follows. 
First, let $N = G(T, 0)$ and $Q = \Phi_{A}(T, 0) \Sigma_{0} \Phi_{A}(T, 0)\t$ in Lemma \ref{lem:sets}. 
It follows that $\mathcal{S}_{1}$ in Lemma \ref{lem:sets} is equal to $\mathcal{S}_{T}$ in Theorem \ref{thm:Sigma-reach}. 
Moreover, let $W = \Sigmaf \in \mathcal{S}_{1} = \mathcal{S}_{T}$. 
Second, choose an appropriate basis of $\R^{n}$ such that \eqref{eqn:N1} holds. 
Additionally, partition the matrices $Q$ and $W$ according to \eqref{eqn:Q-W-Y}. 
Third, based on these partitions, define the Schur complements 
$\Theta_{1}$ and $\Xi_{1}$ by \eqref{def:Theta1} and \eqref{def:Xi1}. 
In view of \eqref{eqn:M1} and Lemma \ref{lem:pos-def}, let 
\begin{multline*}
M_{1} = \Big(N_{1}^{-\frac{1}{2}} \Xi_{1} N_{1}^{-\frac{1}{2}}\Big)^{\frac{1}{2}}
\Big[\Big(N_{1}^{-\frac{1}{2}} \Xi_{1} N_{1}^{-\frac{1}{2}}\Big)^{\frac{1}{2}} N_{1}^{-\frac{1}{2}} \Theta_{1} N_{1}^{-\frac{1}{2}} 
\\
\times 
\Big(N_{1}^{-\frac{1}{2}} \Xi_{1} N_{1}^{-\frac{1}{2}}\Big)^{\frac{1}{2}}\Big]^{-\frac{1}{2}}
\Big(N_{1}^{-\frac{1}{2}} \Xi_{1} N_{1}^{-\frac{1}{2}}\Big)^{\frac{1}{2}}. 
\end{multline*}
Fourth, pick 
$Y_{1} = N_{1}^{-\frac{1}{2}} \big(M_{1} - I_{m}\big) N_{1}^{-\frac{1}{2}}$ 
by \eqref{eqn:Y1} and 
$Y_{2} = N_{1}^{-1} \big[W_{2} - \big(I_{m} + N_{1} Y_{1}\big) Q_{2}\big] Q_{4}^{-1}$ 
by \eqref{eqn:Y2}. 
Moreover, let $Y$ be any symmetric matrix such that \eqref{eqn:Q-W-Y} holds. 
Fifth, let $\Pi(0) = - \Phi_{A}(T, 0)\t Y \Phi_{A}(T, 0)$. 
Lastly, adopt the control \eqref{eqn:K} on $[0, T]$, 
where $\Pi(t)$ satisfies the RDE \eqref{eqn:RDE} on $[0, T]$ with the above chosen initial condition $\Pi(0)$. 
\end{remark}


The proofs of Corollary~\ref{cor:Sigma-reach} and Corollary~\ref{cor:Sigma-contr} are given in the Appendix.


\begin{figure*}[!b]
\centering
\begin{minipage}{0.62\textwidth}
    \centering
    \includegraphics[width=\linewidth]{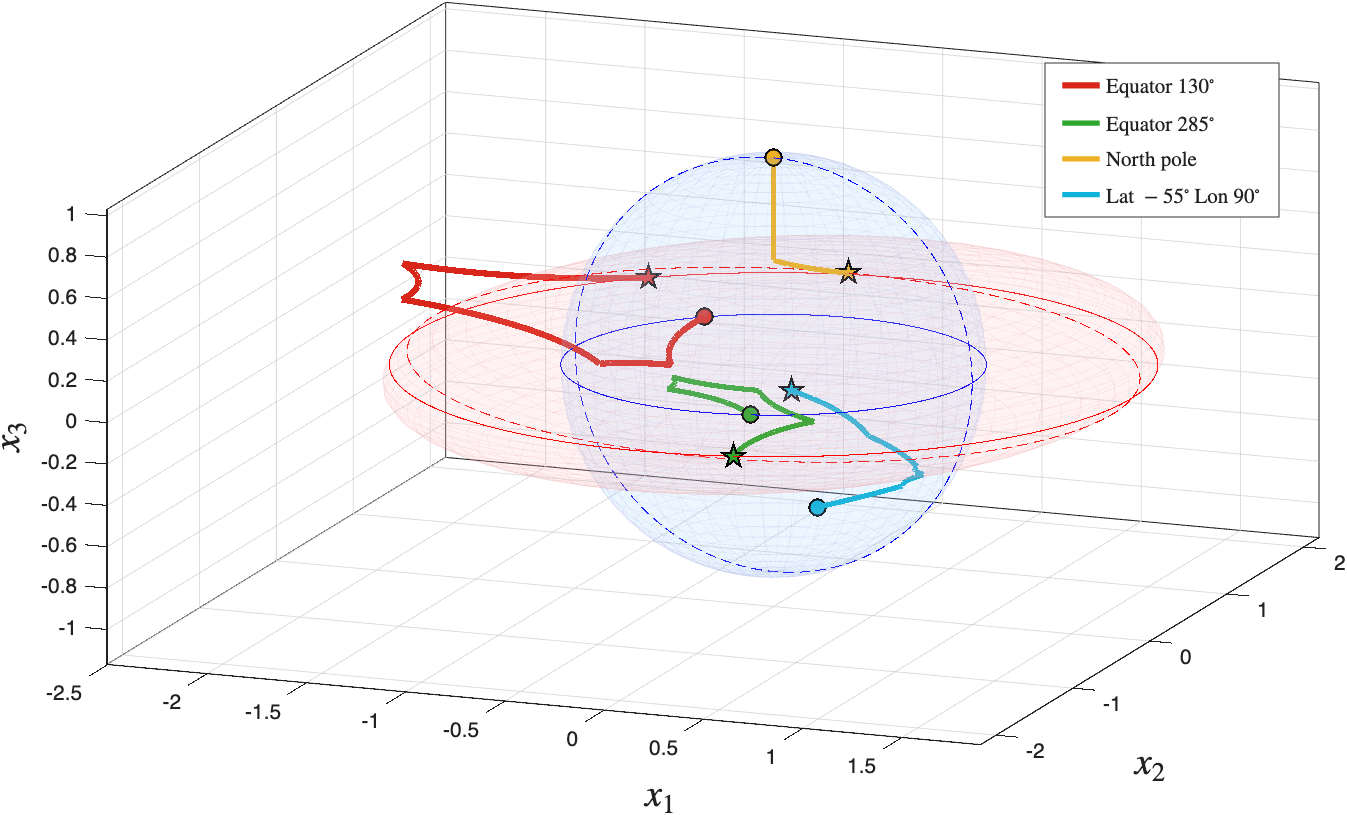}
    \subcaption{3-D trajectories of four tracer states}
\end{minipage}
\hfill
\begin{minipage}{0.34\textwidth}
    \centering
    \includegraphics[width=\linewidth]{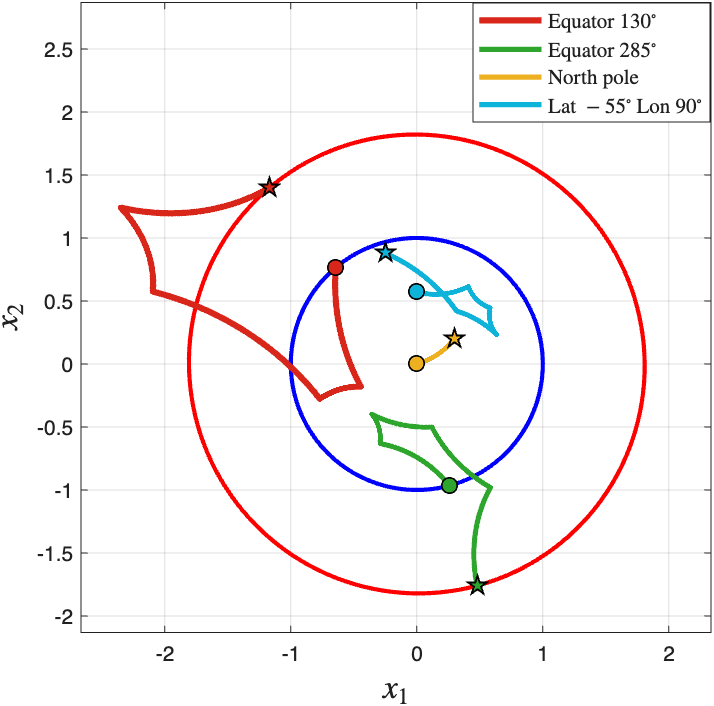}
    \subcaption{Projected trajectories onto the 2-D controllable plane $(x_1, x_2)$. The $x_3$ direction is uncontrollable and thus not shown here.}
\end{minipage}
\caption{The trajectories of four tracer states on the time interval $[0, 2]$ under the control that achieves the terminal state transition matrix $\Phif$, where the initial states and final states are marked by dots and stars, respectively. 
In particular, the blue unit sphere at time $0$ is mapped to the red oblate ellipsoid at time $2$.}
\label{fig:thm2}
\end{figure*}


\begin{remark} \label{rmk:stt-covar-contr}
When $H(T, 0) \succ 0$, we can derive a closed-form expression for $\Pi(0)$, 
which gives rise to a closed-form control of the state covariance system \eqref{sys:Sigma} with any given terminal state covariance matrix $\Sigma(T) = \Sigmaf \succ 0$. 
Since $H(T, 0) \succ 0$, we have $N_{1} = N$ in \eqref{eqn:N1}, 
as well as 
$\Theta_{1} = Q_{1} = Q$, $\Xi_{1} = W_{1} = W$, and $Y_{1} = Y$ in \eqref{eqn:Q-W-Y}, \eqref{def:Theta1}, and \eqref{def:Xi1}. 
For simplicity, we adopt the notation \eqref{def:notation}. 
In light of Corollary \ref{cor:Sigma-reach}, let $N = H_{T}$, $Q = \Sigma_{0}$, and $W = (\Phi_{T}^{A})^{-1} \Sigmaf (\Phi_{T}^{A})\tinv \in \mathcal{S}_{1}$ in Lemma~\ref{lem:sets}. 
Let $M_{1} = M$ in \eqref{eqn:M1}. 
In view of \eqref{eqn:M1} and \eqref{eqn:pos-def2}, let 
\begin{multline*}
M = \Big(H_{T}^{-\frac{1}{2}} \Sigma_{0} H_{T}^{-\frac{1}{2}}\Big)^{-\frac{1}{2}} 
\Big[\Big(H_{T}^{-\frac{1}{2}} \Sigma_{0} H_{T}^{-\frac{1}{2}}\Big)^{\frac{1}{2}} H_{T}^{-\frac{1}{2}} W H_{T}^{-\frac{1}{2}} 
\\
\times 
\Big(H_{T}^{-\frac{1}{2}} \Sigma_{0} H_{T}^{-\frac{1}{2}}\Big)^{\frac{1}{2}}\Big]^{\frac{1}{2}}
\Big(H_{T}^{-\frac{1}{2}} \Sigma_{0} H_{T}^{-\frac{1}{2}}\Big)^{-\frac{1}{2}}. 
\end{multline*}
From \eqref{eqn:Y1} it follows that $Y = H_{T}^{-\frac{1}{2}} M H_{T}^{-\frac{1}{2}} - H_{T}^{-1}$. 
In view of \eqref{sol:Sig-Phi}, \eqref{eqn:stt-trans-Pi}, and $\mathcal{S}_{3}$ in Lemma \ref{lem:sets}, we pick $\Pi(0) = - Y$. 
We can further simplify the expression for $\Pi(0)$ as follows. 
It is not difficult to verify via direct calculation that 
\begin{equation*}
\Big(\Sigma_{0}^{\frac{1}{2}} 
H_{T}^{-\frac{1}{2}} M H_{T}^{-\frac{1}{2}}
\Sigma_{0}^{\frac{1}{2}}\Big)^{2} 
= 
\Sigma_{0}^{\frac{1}{2}} H_{T}^{-1} 
W 
H_{T}^{-1} \Sigma_{0}^{\frac{1}{2}}. 
\end{equation*}
Hence, we have 
\begin{equation*}
H_{T}^{-\frac{1}{2}} M H_{T}^{-\frac{1}{2}}
= 
\Sigma_{0}^{-\frac{1}{2}} 
\Big(\Sigma_{0}^{\frac{1}{2}} H_{T}^{-1} 
W 
H_{T}^{-1} \Sigma_{0}^{\frac{1}{2}}\Big)^{\frac{1}{2}}
\Sigma_{0}^{-\frac{1}{2}}. 
\end{equation*}
It follows from $\Pi(0) = - Y = H_{T}^{-1} - H_{T}^{-\frac{1}{2}} M H_{T}^{-\frac{1}{2}}$ and $W = (\Phi_{T}^{A})^{-1} \Sigmaf (\Phi_{T}^{A})\tinv$ that 
\begin{multline} \label{eqn:Pi0}
\Pi(0) = 
\\
H_{T}^{-1} - 
\Sigma_{0}^{-\frac{1}{2}} 
\Big(\Sigma_{0}^{\frac{1}{2}} H_{T}^{-1} 
(\Phi_{T}^{A})^{-1} \Sigmaf (\Phi_{T}^{A})\tinv 
H_{T}^{-1} \Sigma_{0}^{\frac{1}{2}}\Big)^{\frac{1}{2}}
\Sigma_{0}^{-\frac{1}{2}}. 
\end{multline}
Thus, the control \eqref{eqn:K} on $[0, T]$, 
where $\Pi(t)$ satisfies the RDE \eqref{eqn:RDE} on $[0, T]$ with the initial condition \eqref{eqn:Pi0}, 
achieves the given terminal state covariance matrix $\Sigma(T) = \Sigmaf \succ 0$. 
We point out that \eqref{eqn:Pi0} is consistent with the expression reported in \cite{chen2018III}. 
However, here we only require $H(T, 0) \succ 0$, while \cite{chen2018III} requires the stronger condition that $H(t, s) \succ 0$ for all $0 \leq s < t \leq T$. 
\end{remark}


\section{Numerical Examples} \label{sec:examples}

\subsection{Reachability of the State Transition Matrix}

We illustrate the reachability result of the state transition matrix in Theorem~\ref{thm:Phi-reach} using the following example. 
Consider the LTI system 
\begin{equation*} \label{eq:ex1_AB}
A = 
\begin{bmatrix} 
0 & -{\pi}/{4} & 0 \\[2pt]
{\pi}/{4} & 0 & 0 \\[2pt]
0 & 0 & -0.43 
\end{bmatrix}, \qquad
B = 
\begin{bmatrix} 
1 & 0 \\ 
0 & 1 \\ 
0 & 0 
\end{bmatrix}, 
\end{equation*}
over the time interval $[0, 2]$. 
Notice that the pair $(A, B)$ is not controllable. 
In the open-loop system, the first two states $x_1$ and $x_2$ undergo a rotation with an angular velocity $\omega = \pi/4$, while the third state $x_3$ is subject to an exponential decay. 
The open-loop state transition matrix is
\begin{equation*} \label{eq:ex1_PhiA}
\Phi_A(t, 0) = e^{At} = 
\begin{bmatrix}
\cos\!\big(\tfrac{\pi}{4}t\big) & -\sin\!\big(\tfrac{\pi}{4}t\big) & 0 \\[2pt]
\sin\!\big(\tfrac{\pi}{4}t\big) & \phantom{-}\cos\!\big(\tfrac{\pi}{4}t\big) & 0 \\[2pt]
0 & 0 & e^{-0.43\,t}
\end{bmatrix}. 
\end{equation*}
The controllability Gramian $H(2, 0)$ defined by~\eqref{def:Gram-contr} has rank $2$. 
Let the orthogonal matrix $U$ 
be such that the decomposition~\eqref{eqn:H-decomp} holds. 
In particular, $\range H(2, 0)$ corresponds to the $(x_1, x_2)$-subspace, which is the controllable subspace of $(A, B)$. 
Since $\rank H(2, 0)=2$, based on Theorem~\ref{thm:Phi-reach}, we can choose the following terminal state transition matrix 
\begin{equation*}
\Phi(2, 0) = \Phif = e^{2A} U 
\begin{bmatrix}
\bar{\Phi}_{\mathrm{f}} & \tilde{\Phi}_{\mathrm{f}} \\
0 & 1
\end{bmatrix}
U\t, 
\end{equation*}
where
\begin{equation*} \label{eq:ex1_target}
\bar{\Phi}_{\mathrm{f}} = 1.8
\begin{bmatrix}
\cos \frac{\pi}{2} & -\sin \frac{\pi}{2} \\
\sin \frac{\pi}{2} & \phantom{-}\cos \frac{\pi}{2}
\end{bmatrix}, \quad
\tilde{\Phi}_{\mathrm{f}} = 
\begin{bmatrix} 
0.3 \\ -0.2 
\end{bmatrix}. 
\end{equation*}
It is easy to verify that $\det(\bar{\Phi}_{\mathrm{f}}) = 3.24 > 0$, hence $\Phi_{\mathrm{f}} \in \mathcal{R}_2$.

A control $K(t)$ that steers~\eqref{sys:Phi} from $I_3$ to $\Phi_{\mathrm{f}}$ is constructed according to Remark~\ref{rmk:stt-trans-reach} on the time interval $[0, 2]$, 
which is divided into five subintervals of equal length. 
Figure~\ref{fig:thm2} shows the trajectories of four tracer states starting from the blue unit sphere at time $0$ and landing on the red oblate ellipsoid at time $2$ under the closed-loop state transition matrix $\Phi(t, 0)$, 
where the initial state and final state of each trajectory are marked by a dot and a star, respectively, in the same color. 
In particular, the controllable components $(x_1, x_2)$ undergo the prescribed rotation and expansion, and the uncontrollable component $x_3$ follows the open-loop exponential decay $e^{-0.43\,t}$.


\subsection{Reachability of the State Covariance Matrix}

\begin{figure}[b]
    \centering
    \includegraphics[width=\linewidth]{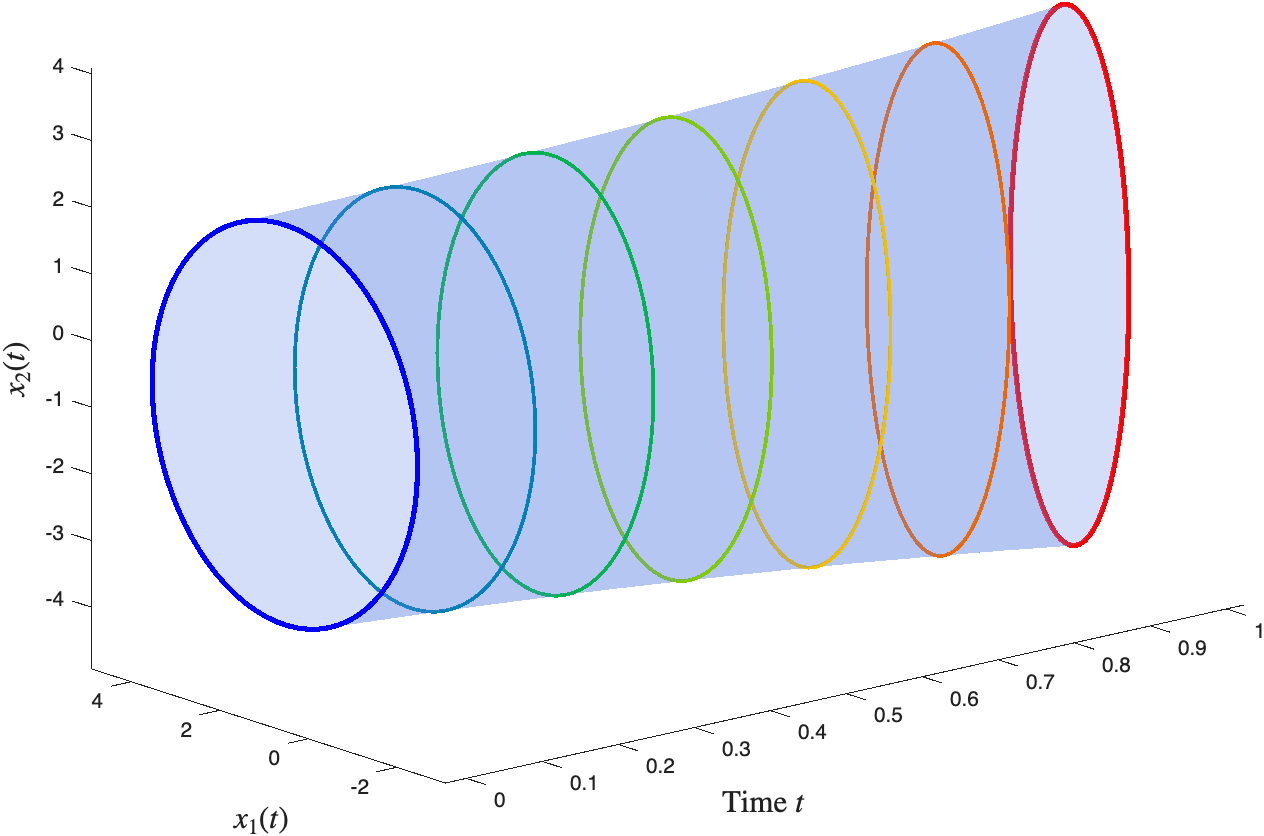}
    \caption{The closed-loop Gaussian flow on the interval $[0, 1]$, 
    with an initial covariance $\Sigma_0 = I_2$ and a terminal covariance $\Sigmaf = \diag (0.2, \, e^{0.6})$. 
    The colored ellipses are the $3$-$\sigma$ cross-sections at times $t = \frac{k}{6}$, for $k = 0, 1, \dots, 6$, illustrating the compression of the variance along the controllable $x_1$ direction and the free growth of the variance along the uncontrollable $x_2$ direction. 
    }
    \label{fig:theorem3}
\end{figure}

We illustrate the reachability result of the state covariance matrix in Theorem~\ref{thm:Sigma-reach} using the following example. 
Consider the LTI system 
\begin{equation*}
A = 
\begin{bmatrix} 
0.2 & 0.8 \\ 
0 & 0.3 
\end{bmatrix}, \qquad
B = 
\begin{bmatrix} 
1 \\ 
0 
\end{bmatrix}, 
\end{equation*}
with an initial state covariance $\Sigma_0 = I_2$ over the time interval $[0, 1]$. 
Notice that the pair $(A, B)$ is not controllable. 
Specifically, the $x_1$ direction is controllable, and the $x_2$ direction is uncontrollable. 
Since the range of the reachability Gramian $G(1, 0)$  is spanned by $x_1$, 
the orthogonal projection onto $\ker G(1, 0)$ is $P_{G} = \diag (0, 1)$. 
A direct computation yields $\big[\Phi_A(1, 0) \Sigma_0 \Phi_A(1, 0)\t\big]_{22} = e^{0.6}$. 
From Theorem~\ref{thm:Sigma-reach}, it follows that a terminal state covariance $\Sigmaf \succ 0$ is reachable from $\Sigma_0$ over $[0, 1]$ if and only if $[\Sigmaf]_{22} = e^{0.6} \approx 1.82$. 
Accordingly, we choose
\begin{equation*}
\Sigmaf =
\begin{bmatrix} 
0.2 & 0 \\ 
0 & e^{0.6}
\end{bmatrix}
\in \mathcal{S}_{1}, 
\end{equation*}
where $\mathcal{S}_{1}$ is given by $\mathcal{S}_{T}$ in Theorem~\ref{thm:Sigma-reach} with $T = 1$. 

A control $K(t)$ that steers the closed-loop state covariance from $\Sigma_0$ to $\Sigmaf$ is constructed according to Remark~\ref{rmk:stt-covar-reach} on the time interval $[0,1]$. 
The resulting numerical residual is $\|\Sigma(1) - \Sigmaf\|_{F} < 4.5 \times 10^{-6}$, where $\|\cdot\|_{F}$ denotes the matrix Frobenius norm. 
Figure~\ref{fig:theorem3} shows the closed-loop Gaussian flow over the time interval $[0, 1]$, 
where the transparent blue tube illustrates the evolution of the $3$-$\sigma$ covariance envelope 
$\{\mu(t) + 3 \, \Sigma(t)^{1/2} v: \|v\| = 1, \, 0 \leq t \leq 1\}$, 
the colored ellipses are the $3$-$\sigma$ cross-sections at times $t = \frac{k}{6}$, for $k = 0, 1, \dots, 6$, progressing from blue to red, 
and $\mu(t) = \mathbb{E}[x(t)] = \Phi(t,0) \mu_0$ is the mean state at time $t$ with an initial value $\mu_0 = [1, -0.5]\t$. 
It is clear from Figure~\ref{fig:theorem3} that the variance along the controllable $x_1$ direction is compressed by the control, while the variance along the uncontrollable $x_2$ direction expands freely over the interval $[0, 1]$, as expected. 


\section{Concluding Remarks} \label{sec:conclusion}

In this paper, we characterize the set of terminal state covariance matrices reachable from a given initial state covariance matrix for a linear system over a finite time interval. 
Under a mild assumption, we establish the reachability of the terminal state transition matrices over a finite time interval for a closed-loop linear system with a linear state feedback control. 
Moreover, we provide some sufficient conditions on the existence of a continuous control for the state transition matrix, such that the control is continuous in the terminal state transition matrix. 
These sufficient conditions are based on some new results on the existence of a solution to a matrix Riccati differential equation.

For future work, it would be interesting to study 
whether Assumption~\ref{asm:5-piece} can be relaxed or removed for the reachability problem of the state transition matrix. 
Additionally, it is worth exploring whether the topological obstruction \cite{abdelgalil2025collective} in controlling the state transition matrix 
also exists for the case of linear time-varying systems. 


\bibliographystyle{IEEEtran}
\bibliography{TAC-reach-stt-trans}


\section*{Appendix}

\begin{lemma} \label{lem:spec}
Let $P$ and $M$ be $n \times n$ real-valued matrices with $P \succeq 0$. 
Then, $\spec (P M) = \spec (P^{\frac{1}{2}} M P^{\frac{1}{2}})$, 
where $\spec(M)$ denotes the spectrum of the matrix $M$. 
\end{lemma}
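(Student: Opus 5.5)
The plan is to diagonalize $P$ and reduce everything to a statement about compressing $M$ onto the range of $P$. Since $P \succeq 0$, write $P = V \begin{bmatrix} D & 0 \\ 0 & 0 \end{bmatrix} V\t$ with $V$ orthogonal and $D \in \R^{k \times k}$ diagonal positive definite, where $k = \rank P$. Then $P^{\frac{1}{2}} = V \begin{bmatrix} D^{\frac{1}{2}} & 0 \\ 0 & 0 \end{bmatrix} V\t$. Set $\hat{M} = V\t M V$ and partition it conformally with blocks $\hat{M}_{11}, \hat{M}_{12}, \hat{M}_{21}, \hat{M}_{22}$. Spectra are invariant under the orthogonal similarity by $V$, so I will compare the spectra of $V\t P M V$ and $V\t P^{\frac{1}{2}} M P^{\frac{1}{2}} V$.

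A direct computation gives the following two block forms:
\begin{equation*}
V\t P M V = \begin{bmatrix} D \hat{M}_{11} & D \hat{M}_{12} \\ 0 & 0 \end{bmatrix}, \qquad
V\t P^{\frac{1}{2}} M P^{\frac{1}{2}} V = \begin{bmatrix} D^{\frac{1}{2}} \hat{M}_{11} D^{\frac{1}{2}} & 0 \\ 0 & 0 \end{bmatrix}.
\end{equation*}
Both matrices are block upper triangular. Their spectra, counted as multisets with algebraic multiplicity, are therefore $\spec(D \hat{M}_{11}) \cup \{0\}^{n-k}$ and $\spec(D^{\frac{1}{2}} \hat{M}_{11} D^{\frac{1}{2}}) \cup \{0\}^{n-k}$, respectively.

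To finish, note that $D^{\frac{1}{2}} \hat{M}_{11} D^{\frac{1}{2}} = D^{-\frac{1}{2}} (D \hat{M}_{11}) D^{\frac{1}{2}}$. This is a similarity transformation, valid because $D \succ 0$ is invertible, so the two $k \times k$ blocks have identical spectra including multiplicities. Hence $\spec(PM) = \spec(P^{\frac{1}{2}} M P^{\frac{1}{2}})$.

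An alternative route avoids the decomposition entirely. One can apply the standard fact that $XY$ and $YX$ have the same characteristic polynomial for square $X, Y$, taking $X = P^{\frac{1}{2}}$ and $Y = P^{\frac{1}{2}} M$; this is even shorter. The only point needing care is the handling of the zero eigenvalues in the singular case, and either argument takes care of it. I expect no real obstacle here.
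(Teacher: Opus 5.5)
Your proof is correct, and both of your routes give equality of spectra as multisets, with algebraic multiplicities. That is slightly more than the set equality stated, and more than the later uses need: the bound $|\lambda|<1$, invertibility, and positivity of eigenvalues in Lemmas~\ref{lem:RDE-norm1}, \ref{lem:RDE-real} and Theorem~\ref{thm:Sigma-reach}.

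The paper gives no self-contained argument. It points to the first part of the proof of Lemma~4 in \cite{liu2024reachability}, where the second factor is symmetric, and observes that this symmetry is never used there.

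Your one-line alternative, $\det(\lambda I_{n} - XY) = \det(\lambda I_{n} - YX)$ with $X = P^{\frac{1}{2}}$ and $Y = P^{\frac{1}{2}} M$, is the natural argument for which that observation is immediate. It uses nothing about $Y$ beyond its being square. It even works for any factorization $P = RR$, since $\spec(PM) = \spec(RMR)$.

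Your main route is genuinely different. Orthogonally diagonalizing $P$ shows that $PM$ is orthogonally similar to a block upper-triangular matrix. Its nontrivial diagonal block $D\hat{M}_{11}$ is similar, via $D^{\frac{1}{2}}$, to the only nonzero block $D^{\frac{1}{2}}\hat{M}_{11}D^{\frac{1}{2}}$ of $P^{\frac{1}{2}}MP^{\frac{1}{2}}$. This route is longer and relies on the spectral decomposition of $P \succeq 0$. In exchange, it makes the structure explicit: at least $n-k$ zero eigenvalues coming from $\ker P$, with the rest determined by the compression of $M$ to $\range P$. It also mirrors the $U$-coordinates with $\bar{H} \succ 0$ that the paper uses in the proofs of Propositions~\ref{prp:RDE-sym}--\ref{prp:RDE-real} and Theorem~\ref{thm:Phi-reach}.

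Either of your arguments is a complete replacement for the citation.
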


\begin{proof}
It follows from the first part of the proof of \cite[Lemma 4]{liu2024reachability} with $M_{1} = P$ and $M_{2} = M$ that $M$ does not have to be symmetric and the proof for $\spec (P M) = \spec (P^{\frac{1}{2}} M P^{\frac{1}{2}})$ still goes through without modification. 
\hfill
\end{proof}


First, we use Lemma \ref{lem:spec} to show Lemma \ref{lem:RDE-norm1} and Lemma \ref{lem:RDE-real}.

\begin{proof}[Proof of Lemma \ref{lem:RDE-norm1}]
Suppose $\| H(T, 0)^{\frac{1}{2}} \Pi_{0} H(T, 0)^{\frac{1}{2}} \| < 1$. 
It follows that $H(T, 0)^{\frac{1}{2}} \Pi_{0} H(T, 0) \Pi_{0}\t H(T, 0)^{\frac{1}{2}} \prec I_{n}$. 
By definition, for all $t \in [0, T]$, we have $H(t, 0) \preceq H(T, 0)$. 
Hence, 
\begin{multline*}
H(T, 0)^{\frac{1}{2}} \Pi_{0} H(t, 0) \Pi_{0}\t H(T, 0)^{\frac{1}{2}} 
\preceq 
\\
H(T, 0)^{\frac{1}{2}} \Pi_{0} H(T, 0) \Pi_{0}\t H(T, 0)^{\frac{1}{2}} 
\prec I_{n}. 
\end{multline*}
It then follows that 
$H(t, 0)^{\frac{1}{2}} \Pi_{0}\t H(T, 0) \Pi_{0} H(t, 0)^{\frac{1}{2}} \prec I_{n}$. 
Again, since $H(t, 0) \preceq H(T, 0)$, we have 
\begin{multline*}
H(t, 0)^{\frac{1}{2}} \Pi_{0}\t H(t, 0) \Pi_{0} H(t, 0)^{\frac{1}{2}} 
\preceq 
\\
H(t, 0)^{\frac{1}{2}} \Pi_{0}\t H(T, 0) \Pi_{0} H(t, 0)^{\frac{1}{2}} \prec I_{n}. 
\end{multline*}
It follows immediately that 
$\| H(t, 0)^{\frac{1}{2}} \Pi_{0} H(t, 0)^{\frac{1}{2}} \| < 1$. 
Thus, all eigenvalues of $H(t, 0)^{\frac{1}{2}} \Pi_{0} H(t, 0)^{\frac{1}{2}}$ have absolute values less than $1$. 
From Lemma \ref{lem:spec} in the Appendix, it follows that all eigenvalues of $H(t, 0) \Pi_{0}$ also have absolute values less than $1$. 
As a result, $I_{n} - H(t, 0) \Pi_{0}$ is invertible for all $t \in [0, T]$. 
It follows from \cite{kilicaslan2010existence} that the RDE \eqref{eqn:RDE} admits a unique solution on $[0, T]$. 
\hfill
\end{proof}


\begin{proof}[Proof of Lemma \ref{lem:RDE-real}]
Let $\Pi_{0}^{\text{s}} \triangleq \frac{1}{2}(\Pi_{0} + \Pi_{0}\t)$ be the symmetric part of $\Pi_{0}$. 
If $H(T, 0)^{\frac{1}{2}} \Pi_{0}^{\text{s}} H(T, 0)^{\frac{1}{2}} \prec I_{n}$, 
it follows from Lemma~\ref{lem:RDE-sym} that for all $t \in [0, T]$, 
$H(t, 0)^{\frac{1}{2}} \Pi_{0}^{\text{s}} H(t, 0)^{\frac{1}{2}} \prec I_{n}$. 
Thus, for all nonzero vectors $z \in \R^{n}$, we have 
\begin{equation*}
z\t \big(I_{n} - H(t, 0)^{\frac{1}{2}} \Pi_{0}^{\text{s}} H(t, 0)^{\frac{1}{2}}\big) z > 0. 
\end{equation*}
It follows that 
$z\t \big(I_{n} - H(t, 0)^{\frac{1}{2}} \Pi_{0} H(t, 0)^{\frac{1}{2}}\big) z > 0$. 
In particular, $\big(I_{n} - H(t, 0)^{\frac{1}{2}} \Pi_{0} H(t, 0)^{\frac{1}{2}}\big) z \neq 0$ for all nonzero $z \in \R^{n}$. 
Hence, for all $t \in [0, T]$, $I_{n} - H(t, 0)^{\frac{1}{2}} \Pi_{0} H(t, 0)^{\frac{1}{2}}$ is invertible. 
From Lemma \ref{lem:spec} it follows that, for all $t \in [0, T]$, $I_{n} - H(t, 0) \Pi_{0}$ is invertible. 
Therefore, the RDE \eqref{eqn:RDE} admits a unique solution on $[0, T]$ \cite{kilicaslan2010existence}. 
\hfill
\end{proof}


Then, we use Lemmas \ref{lem:RDE-sym}, \ref{lem:RDE-norm1}, \ref{lem:RDE-real} to show Propositions \ref{prp:RDE-sym}, \ref{prp:RDE-norm1}, \ref{prp:RDE-real}.

\begin{proof}[Proof of Proposition \ref{prp:RDE-sym}]
In view of \eqref{eqn:stt-trans-Pi}, we have 
\begin{equation} \label{eqn:stt-trans-A-Pi}
\Phi_{A}(0, T) \Phi_{\Pi}(T, 0) = I_{n} - H(T, 0) \Pi(0), 
\end{equation}
provided that the RDE \eqref{eqn:RDE} admits a solution on $[0, T]$. 
When $\Pi(0) = \Pi_{0}$ is symmetric, it follows from Lemma \ref{lem:RDE-sym} that \eqref{eqn:RDE} admits a solution on $[0, T]$ if and only if the eigenvalues of $I_{n} - H(T, 0) \Pi_{0}$ are all positive. 
Adopting the same coordinate transformation and block partition as \eqref{eqn:H-decomp}, we can write 
\begin{equation} \label{eqn:Pi-decomp}
\Pi_{0} = 
U
\begin{bmatrix}
\bar{\Pi} & \tilde{\Pi} \\
\tilde{\Pi}\t & \star 
\end{bmatrix}
U\t, 
\end{equation}
where $U \in \R^{n \times n}$ is the same orthogonal matrix as in \eqref{eqn:H-decomp}, 
$\bar{\Pi} = \bar{\Pi}\t \in \R^{r \times r}$, $\tilde{\Pi} \in \R^{r \times (n-r)}$, and the $\star$ block does not matter. 
Hence, we have 
\begin{equation} \label{eqn:H-Pi-decomp}
I_{n} - H(T, 0) \Pi_{0} = 
U
\begin{bmatrix}
I_{r} - \bar{H} \bar{\Pi} & - \bar{H} \tilde{\Pi} \\
0 & I_{n-r}
\end{bmatrix}
U\t. 
\end{equation}
Thus, $\big(H(T, 0)^{+}\big)^{\frac{1}{2}} \big[I_{n} - H(T, 0) \Pi_{0}\big] H(T, 0)^{\frac{1}{2}}$ is symmetric if and only if $I_{n} - H(T, 0)^{\frac{1}{2}} \Pi_{0} H(T, 0)^{\frac{1}{2}}$ is symmetric, which is equivalent to the statement that $I_{r} - \bar{H}^{\frac{1}{2}} \bar{\Pi} \bar{H}^{\frac{1}{2}}$ is symmetric. 
Moreover, the eigenvalues of $I_{n} - H(T, 0) \Pi_{0}$ are all positive if and only if the eigenvalues of $I_{r} - \bar{H} \bar{\Pi}$ are all positive, which is equivalent to the statement that $I_{r} - \bar{H}^{\frac{1}{2}} \bar{\Pi} \bar{H}^{\frac{1}{2}} \succ 0$. 
It follows from \eqref{eqn:stt-trans-A-Pi}, \eqref{eqn:H-Pi-decomp}, and Lemma \ref{lem:spec} that if $\Phif$ satisfies the conditions $i)$, $ii)$, $iii)$ in Proposition \ref{prp:RDE-sym}, 
then, there exists a symmetric $\Pi_{0}$ such that \eqref{eqn:Phi-H-Pi} holds, 
the RDE \eqref{eqn:RDE} admits a solution on $[0, T]$ with the initial condition $\Pi(0) = \Pi_{0}$, 
and the control \eqref{eqn:K} will steer the bilinear system \eqref{sys:Phi} to $\Phi(T, 0) = \Phif$, which completes the proof. 
\hfill
\end{proof}


\begin{proof}[Proof of Proposition \ref{prp:RDE-norm1}]
We adopt the same coordinate transformation and block partition for $H(T, 0)$ in \eqref{eqn:H-decomp} and for $\Pi_{0}$ in \eqref{eqn:Pi-decomp}. 
Hence, we have 
\begin{equation*}
H(T, 0)^{\frac{1}{2}} \Pi_{0} H(T, 0)^{\frac{1}{2}} 
= 
U
\begin{bmatrix}
\bar{H}^{\frac{1}{2}} \bar{\Pi} \bar{H}^{\frac{1}{2}} & 0 \\
0 & 0
\end{bmatrix}
U\t. 
\end{equation*}
Moreover, it follows from \eqref{eqn:stt-trans-A-Pi} and \eqref{eqn:H-Pi-decomp} that 
\begin{multline*}
\big(H(T, 0)^{+}\big)^{\frac{1}{2}} \big(\Phi_{A}(0, T) \Phif - I_{n}\big) H(T, 0)^{\frac{1}{2}}
= 
\\
U
\begin{bmatrix}
- \bar{H}^{\frac{1}{2}} \bar{\Pi} \bar{H}^{\frac{1}{2}} & 0 \\
0 & 0
\end{bmatrix}
U\t. 
\end{multline*}
It follows from Lemma \ref{lem:RDE-norm1} that if $\Phif$ satisfies the conditions $i)$, $ii)$, $iii)$ in Proposition \ref{prp:RDE-norm1}, 
then, there exists a $\Pi_{0}$ such that \eqref{eqn:Phi-H-Pi} holds, 
the RDE \eqref{eqn:RDE} admits a solution on $[0, T]$ with the initial condition $\Pi(0) = \Pi_{0}$, 
and the control \eqref{eqn:K} will steer the bilinear system \eqref{sys:Phi} to $\Phi(T, 0) = \Phif$. 
\hfill
\end{proof}


\begin{proof}[Proof of Proposition \ref{prp:RDE-real}]
We adopt the same coordinate transformation and block partition for $H(T, 0)$ in \eqref{eqn:H-decomp} and for $\Pi_{0}$ in \eqref{eqn:Pi-decomp}. 
We then have 
\begin{multline*}
\frac{1}{2} H(T, 0)^{\frac{1}{2}} \big(\Pi_{0} + \Pi_{0}\t\big) H(T, 0)^{\frac{1}{2}}
= 
\\
U
\begin{bmatrix}
\frac{1}{2} \bar{H}^{\frac{1}{2}} \big(\bar{\Pi} + \bar{\Pi}\t\big) \bar{H}^{\frac{1}{2}} & 0 \\
0 & 0
\end{bmatrix}
U\t. 
\end{multline*}
Moreover, it follows from \eqref{eqn:stt-trans-A-Pi} and \eqref{eqn:H-Pi-decomp} that 
\begin{multline*}
M = 
\big(H(T, 0)^{+}\big)^{\frac{1}{2}} \big[I_{n} - \Phi_{A}(0, T) \Phif\big] H(T, 0)^{\frac{1}{2}}
= 
\\
U
\begin{bmatrix}
\bar{H}^{\frac{1}{2}} \bar{\Pi} \bar{H}^{\frac{1}{2}} & 0 \\
0 & 0
\end{bmatrix}
U\t. 
\end{multline*}
It thus follows from Lemma~\ref{lem:RDE-real} that if $\Phif$ satisfies the conditions $i)$, $ii)$, $iii)$ in Proposition \ref{prp:RDE-real}, 
then, there exists a $\Pi_{0}$ such that \eqref{eqn:Phi-H-Pi} holds, 
the RDE \eqref{eqn:RDE} admits a solution on $[0, T]$ with the initial condition $\Pi(0) = \Pi_{0}$, 
and the control \eqref{eqn:K} will steer the bilinear system \eqref{sys:Phi} to $\Phi(T, 0) = \Phif$. 
\hfill
\end{proof}


Now, we show Lemma~\ref{lem:Phi-1-2}.

\begin{proof}[Proof of Lemma \ref{lem:Phi-1-2}]
In view of \eqref{eqn:stt-trans-Pi}, we have 
\begin{align*}
&\Phi_{\Pi}(t_{2}, 0) 
=
\Phi_{\Pi}(t_{2}, t_{1}) 
\Phi_{\Pi}(t_{1}, 0) 
\\
&= 
\Phi_{A}(t_{2}, t_{1}) \big[I_{n} - H(t_{2}, t_{1}) \Pi(t_{1})\big]
\\
&\hspace{15mm} \times 
\Phi_{A}(t_{1}, 0) \big[I_{n} - H(t_{1}, 0) \Pi(0)\big]
\\
&= 
\Phi_{A}(t_{2}, 0)
\Phi_{A}(0, t_{1})
\big[I_{n} - H(t_{2}, t_{1}) \Pi(t_{1})\big]
\\
&\hspace{15mm} \times 
\Phi_{A}(t_{1}, 0) \big[I_{n} - H(t_{1}, 0) \Pi(0)\big]. 
\end{align*}
First, notice that $I_{n} - H(t_{1}, 0) \Pi(0) = U \big[I_{n} - \hat{H}_{1}^{*} \hat{\Pi}_{1}^{*}\big] U\t$. 
We also have 
\begin{align*}
&\Phi_{A}(0, t_{1})
\big[I_{n} - H(t_{2}, t_{1}) \Pi(t_{1})\big] \Phi_{A}(t_{1}, 0)
= 
\\
&I_{n} - \Phi_{A}(0, t_{1}) H(t_{2}, t_{1}) \Phi_{A}(0, t_{1})\t 
\Phi_{A}(t_{1}, 0)\t 
\Pi(t_{1}) \Phi_{A}(t_{1}, 0)
\\
&= 
U 
\big[I_{n} - 
\hat{H}_{2}^{*} \hat{\Pi}_{2}^{*}\big] 
U\t. 
\end{align*}
Putting the above equations together yields \eqref{eqn:Phi-1-2}. 
\hfill
\end{proof}


Next, we show Corollary \ref{cor:Sigma-reach}, which provides an alternative expression of the set $\mathcal{S}_{T}$ of terminal state covariance matrices reachable from a given initial state covariance $\Sigma_{0} \succ 0$ over the interval $[0, T]$.

\begin{proof}[Proof of Corollary \ref{cor:Sigma-reach}]
We adopt the same simplified notation \eqref{def:notation} as in the proof of Theorem \ref{thm:Sigma-reach}. 
Let $\Omega_{T}$ denote the set in Corollary \ref{cor:Sigma-reach}. 
That is, 
\begin{equation*}
\Omega_{T} 
\triangleq 
\big\{
\Sigma \succ 0 : 
P_{H} (\Phi_{T}^{A})^{-1} \Sigma (\Phi_{T}^{A})\tinv P_{H} = 
P_{H} \Sigma_{0} P_{H} 
\big\}, 
\end{equation*}
where $P_{H}$ is the orthogonal projection onto $\ker H(T, 0)$. 
We need to show that $\mathcal{S}_{T} = \Omega_{T}$. 
In light of Lemma~\ref{lem:sets}, we have 
\begin{align*}
\Omega_{T} 
\! &= \! 
\big\{\Sigma \! \succ \! 0 \! : \! (\Phi_{T}^{A})^{-1} \Sigma (\Phi_{T}^{A})\tinv \! = \! \big(I_{n} \! - \! H_{T} V\big)  \Sigma_{0} \big(I_{n} \! - \! V H_{T}\big) \hspace{-0.2mm} , 
\\
&\hspace{22mm}
V \in \R^{n \times n}, V = V\t, I_{n} - H_{T}^{\frac{1}{2}} V H_{T}^{\frac{1}{2}} \succ 0\big\}. 
\end{align*}
From equation \eqref{eqn:S-T} in the proof of Theorem \ref{thm:Sigma-reach}, it follows that any $\Sigmaf \in \mathcal{S}_{T}$ satisfies $\Sigmaf \in \Omega_{T}$. 
Hence, $\mathcal{S}_{T} \subseteq \Omega_{T}$. 
The other direction $\Omega_{T} \subseteq \mathcal{S}_{T}$ can be shown in a similar manner and thus is omitted. 
\hfill
\end{proof}


Lastly, we show Corollary \ref{cor:Sigma-contr}.

\begin{proof}[Proof of Corollary \ref{cor:Sigma-contr}]
If the state covariance matrix $\Sigma(t)$ of \eqref{sys:LTV-OL} is controllable on $[0, T]$, we must have that $\mathcal{S}_{T}$ is the set of all $n \times n$ positive definite real symmetric matrices. 
In view of Theorem \ref{thm:Sigma-reach}, it must hold that $P_{G} = 0_{n \times n}$. 
Thus, $\ker G(T, 0) = 0$ and $G(T, 0) \succ 0$. 
It follows that $H(T, 0) \succ 0$.

On the other hand, if $H(T, 0) \succ 0$, we have $P_{H} = 0_{n \times n}$. 
Hence, from Corollary \ref{cor:Sigma-reach} it follows that $\mathcal{S}_{T}$ is the set of all $n \times n$ positive definite real symmetric matrices. 
Thus, the state covariance matrix $\Sigma(t)$ of \eqref{sys:LTV-OL} is controllable on $[0, T]$. 
\hfill
\end{proof}

\end{document}